\newcommand\bbE{\ensuremath{\mathbb{E}}} 
\newcommand\bbI{\ensuremath{\mathbb{I}}} 
\newcommand\bbP{\ensuremath{\mathbb{P}}} 
\newcommand\w{\ensuremath{\omega}} 
\theoremstyle{plain}
\newtheorem{lemma}{Lemma}
\theoremstyle{definition}
\newtheorem{definition}{Definition}
\newcommand{\Comments}{1}
\newcommand{\mynote}[2]{\ifnum\Comments=1\textcolor{#1}{#2}\fi}
\begin{document}

\title{Counterbalancing Learning and Strategic Incentives in Allocation Markets}

\date{\today}

\author{
	Itai Ashlagi\\
	Stanford University\\
	\texttt{iashlagi@stanford.edu} \\
	\and
	Jamie Kang\\
	Stanford University\\
\texttt{kangjh@stanford.edu}\\
\and
Moran Koren\\
Ben Gurion University of the Negev\\ \texttt{korenmor@bgu.ac.il}
	\and
	Faidra Monachou\\
	Yale University\\
	\texttt{faidra.monachou@yale.edu}}
\maketitle

\begin{abstract}

This paper considers the problem of offering  a scarce object with a common unobserved quality to strategic agents in a priority queue.  Each agent  has a  private signal over the quality  of the object and observes the decisions made by other agents.
We first show that, under the widely-used first-come-first-served sequential offering mechanism, herding behavior emerges: initial rejections create an information cascade resulting in  inefficient waste. To address this issue, we then introduce a class of batching mechanisms. Agents in each batch report whether they would be willing to accept or reject the object based on their private signals and prior information.  If the majority opts to accept, the object is randomly allocated within that batch. We prove that  suitable batching mechanisms are incentive-compatible and improve efficiency. 
A key property of the mechanism is the gradual  increase of the batch size after each failed allocation;  the size is chosen so that it elicits as much information as possible without distorting the incentives of agents to report truthfully. 
Additionally, from a healthcare policy perspective, our results can shed light on the large  wastage in organ allocation. In particular, wastage that arises due to herding may be reduced by applying adaptive simultaneous offering mechanisms. 
\end{abstract}

\section{Introduction}

Numerous scarce resources, such as cadaver organs, are allocated through a priority queue, where each object is sequentially offered to agents based on their priority. Such allocation mechanisms may suffer from misallocation and inefficient waste~\citep{mohan2018factors,agarwal2018dynamic}. In 2023, more than 25\% of kidneys recovered from deceased donors were discarded~\citep{OPTN_2023}.

 One contributing factor to the high discard rate is herding  \citep{Zhang2010,de2020herding}. Herding occurs when  refusals by preceding patients in the queue trigger a self-reinforcing chain of subsequent declines.  Under the ubiquitous first-come-first-served sequential offering mechanism,
initial rejections lead to an information cascade, resulting in inefficient waste. This paper offers a theoretical treatment by exploring when herding can arise and how we can alleviate it to improve efficiency. Specifically, we study the allocation problem when information about the object's quality is dispersed among rational agents who learn from the decisions of others.

To study this problem we consider a stylized model where the object to be offered has a common, ex ante unknown value, and can be either good or bad.  Each agent receives a private signal about the object's quality. When an agent is offered an object, she decides whether to accept or decline the offer based on the public signal (i.e., the information known to all agents before the game takes place)\footnote{In the context of deceased donor kidney allocation, the public signal of our model mimics the Kidney Donor Risk Index (KDRI) used in the US deceased donor kidney allocation process. The KDRI, calculated based on donor and organ characteristics, is made available to decision-makers at the start of the offering process.}, 
her private signal, and the  decisions made by other agents. Thus an offering mechanism induces a social learning  process among agents.  We study offering mechanisms that account for agents' incentives and seek to maximize efficiency measured by the  correctness of the allocation.

Our  model captures the social learning process among rational agents. 
First, we find that sequentially offering  the object to agents  can  result in poor correctness; 
independently of the true object quality, rejections  by few  agents at the head of queue  generates a cascade of rejections by all the subsequent agents.

To  address this issue, we introduce a class of batching mechanisms.  A \textit{batching mechanism} dynamically partitions the list into batches of varying sizes that may depend on the history of offers and the public signal. The mechanism   elicits responses from agents in each batch. All agents in the batch decide whether to provisionally opt in or opt out; 
if the majority of agents in the batch decides to opt in,  the object is  randomly allocated to one of the agents in the batch who decided to opt in.  If the list is exhausted without successful allocation, the object is discarded.

.

We show that, if a private signal is sufficiently informative relatively to the common prior, there always exists an incentive-compatible batching mechanism that strictly improves correctness. Furthermore,  we suggest a simple, greedy algorithm to implement such a mechanism. 
Otherwise, no  batching mechanism is incentive-compatible, and therefore no  batching mechanism can achieve higher correctness than the sequential offering mechanism.

Incentive-compatible batching mechanisms have several interesting properties. On the one hand, the batch size increases after each unsuccessful attempt to allocate the object. This is intuitive. Failing to allocate the object in a given batch results in a more pessimistic  belief about the object quality. 
Therefore the mechanism increases the batch size to ensure that the object is allocated only if sufficiently many agents receive positive signals.   
On the other hand, while more private signals can improve accuracy, the  batch size must be upper-bounded 
to maintain incentive-compatibility. This cap ensures that  agents do not disregard their private information and rely solely on the ``wisdom of the crowd."

We find that  continuing to offer the object to an additional  batch after a failed attempt strictly improves correctness. 
As our simulations illustrate,  this improvement can be significant and generally becomes larger as the prior and  signal precision grow. Nevertheless,  a batching mechanism with just two batches significantly outperforms sequential offering. Additionally, we find that the prior and signal precision have opposite effects on correctness: the correctness tends to decrease as the prior increases while  the marginal effect of signal precision is positive. 
Intuitively, as the prior increases, the optimal batch size in an incentive-compatible mechanism decreases.   This in turn implies that the  probability of incorrectly allocating a bad object increases. 
On the other hand, as the signal precision improves, the magnitude of the error decreases since  agents' signals become more reliable.

Our results highlight the tension between agents' strategic incentives and the planner's learning goal.
From the planner's perspective, agents' private signals contain valuable information about the unknown quality of the object. By increasing the number of collected data points, the planner can improve her confidence about the true quality of the object. 
Agents'  incentives, however,  impose a constraint on the maximum size of the batch that allows the truthful elicitation of agents' signals: 
as the batch size increases, the chance of an agent's reported signal to be pivotal decreases, thereby reducing the incentive for an agent with a negative signal to report it truthfully.
Our proposed batching mechanisms enable the planner to crowdsource agents' private information in a simple, truthful, and effective manner.

\smallskip \noindent\textbf{Related literature.}
Our model builds upon the classic social learning problem introduced by \citet{banerjee1992simple} and \citet{bikhchandani1992theory}. These seminal papers demonstrate that when agents receive binary signals about an object's quality and observe the actions of previous agents, information cascades—a phenomenon where agents disregard their private information and follow the actions of others—inevitably occur. Subsequent research has extended these findings to general signal distributions \citep{smith2000pathological} and more complex game structures, such as networks \citep{Acemoglu2010, Mossel2015}, multidimensional learning \citep{Arieli2019}, monopoly pricing \citep{crapis2017monopoly, papanastasiou2017dynamic}, and online reviews \citep{besbes2018information, acemoglu2017fast}.

However, these works do not consider the presence of a central planner who can control the information flow and optimize a global objective, such as the correctness of decisions \citep{arieli2018one}. The study most closely related to ours is \citet{deboModelsHerdingBehavior2009}, which examines a queueing system where common value objects are offered to agents. While \citet{deboModelsHerdingBehavior2009} focuses on queue length and its effect on agent decisions, showing that longer queues or stock-outs can increase demand, our work emphasizes allocation efficiency and the aggregation of private information.

In addition to the theoretical literature, several empirical studies investigate the presence of social learning in transplantation markets \citep{Zhang2010, de2020herding}. Using data from organ transplantation in the United Kingdom, \citet{de2020herding} develops empirical tests to detect herding behavior and quantify its welfare consequences. Similarly, \citet{Zhang2010} provides evidence of herding behavior in the United States deceased donor waitlist. Furthermore, a growing body of clinical literature \citep{butler2020behavioral, cooper2019report} examines the role of behavioral factors in organ allocation.

We propose a batching mechanism to enhance the allocation system's performance by eliciting agents' private information.  In general, as shown in the Condorcet Jury Theorem \citep{Condorcet1785}, batching procedures can reveal the ground truth by aggregating dispersed information. However, as \citet{Austen-Smith96} demonstrate, Condorcet's result does not hold when agents are strategic. Moreover, it does not apply to large collective purchases \citep{arieli2023information} and crowdfunding \citep{arieli2018one}, where a pre-registration phase allows consumers to decide whether to commit to purchasing, with production occurring only if enough registrations are made. \citet{arieli2023information, arieli2018one} show that in large consumer groups, a consumer receiving a discouraging signal may still commit to purchase, relying on the ``wisdom of the crowd," reducing the likelihood of achieving the correct collective decision. Similar mechanisms lead to the bounded batch size result in our model, which extends these results to an allocation problem with a single, indivisible object and examines the results in a sequential setting.

	\section{Model}
	\label{section: model}

	We consider a model in which a social planner seeks to allocate a single indivisible object of unknown quality to privately informed agents in a queue.  
For simplicity, we make two limiting assumptions: (1)  agent utilities are symmetric\footnote{Introducing heterogeneous agents adds little insight to our qualitative conclusions; 
therefore, for the sake of brevity, we considered homogeneous agents.};
(2) agents differ only in the realization of their private signals.\footnote{ We believe that our results are qualitatively robust and can be extended to a model with variations in agent types and asymmetric utilities.  Increasing the relative magnitude of accepting a ``bad'' object will make agents more cautious, thus leading to larger batches.} 
While our model is a simplified representation and does not incorporate all the complexities of organ allocation, it nonetheless offers useful insights. In Section \ref{sec:dis}, we explore the implications of our findings for the real-world context of deceased donor kidney allocation and discuss how our results can inform policy and decision-making in this critical area.

	\textbf{Object and agents.}
	The object is characterized by a fixed \textit{quality} $\w \in \{G, B\}$, where $G$ and $B$ denote a \textit{good} and \textit{bad} quality, respectively. 
	The true quality $\w$ of the object is ex-ante unknown to both the planner and agents; however, both share a common prior belief ${\mu=\bbP(\w = G) \in (0,1).}$

	Agents are waiting in a queue; we denote by $i$ the agent in position $i.$ Each 
	 agent  knows his own position.  For simplicity, we assume that the queue consists of an arbitrarily large number $I$ of agents.
	Each agent $i$ has a private binary \emph{signal} $s_{i}\in \mathcal{S} \triangleq \{g,b\}$ that is informative about the true quality  $\w$ of the object. Each $s_i$ is identically and independently distributed  conditional on $\omega.$ Furthermore, the signal value $s_i$ is aligned with the true object quality $\w$ with probability
	${q = \bbP(s_{i}= g \mid \w = G) = \bbP(s_{i}= b \mid \w = B),}$
	where  the signal precision $q\in(0.5,1)$ is  commonly known.\footnote{The condition $q \in (0.5, 1)$ is without loss of generality. The common prior belief $\mu$ and signal precision $q$ are  standard  in the social learning literature~\citep{banerjee1992simple, bikhchandani1992theory,  smith2000pathological}.}
	
	Agents are risk-neutral. The utility of the agent who receives the object is $1$ if  $\omega=G$ and $-1$ if $\w = B$. (This symmetry simplifies the exposition and analysis but does not qualitatively change the results.)  
	Any agent who does not receive the object, 
	because he either declines or is never offered,
	receives a utility of 0. As a tie-breaking rule, we assume that indifferent agents always decline.

\textbf{Batching mechanisms.} 
The planner designs a mechanism which potentially asks (a batch of) agents to report their private signals, and based on their report, decides whether and how to allocate the object. 
We consider a class of  batching mechanisms denoted by $\mathcal{V}.$ A batching mechanism $V \in \mathcal{V}$ offers the object sequentially to odd-numbered batches of agents, where each batch size may be set dynamically based on the information from  previous batches. In any given batch, if the majority of agents vote to opt in, the object is allocated uniformly at random to one of the current agents who opted in. 

Formally, a {\it batching mechanism} $V_{\{\pi_j\}_{j=1}^\infty}$ is defined by a  sequence of functions $\pi_1,\pi_2,...$, such that for each batch $j,$ the corresponding function  $\pi_j: [0,1] \rightarrow \mathbb{N}$  maps a current  belief $\mu_{j-1}$  about the object quality $\omega$ to a batch size $K_j$ (we let $\mu_0=\mu$).  The mechanism begins by offering the object to the first batch of agents, who are in positions $1,\ldots,\pi_1(\mu)$. If the object is not allocated to batch  $j-1,$ it is subsequently offered to the  agents in the next batch $j$, i.e., agents in positions $\pi_{j-1}(\mu_{j-2}) +1, \ldots, \pi_j(\mu_{j-1})$.  

When agents in batch $j$ are offered the object, each agent $i$ in batch $j$ chooses an action (a vote) $\alpha_i \in \mathcal{A} \triangleq \{y, n\}$, where $y$ and $n$ correspond to \textit{opting in} (``yes'') and \textit{opting out} (``no''), respectively. If $Y_j,$ the number of opt-in votes from batch $j$, constitutes the \textit{majority} of batch $j$, i.e., $Y_j \ge \lceil \frac{\pi_j(\mu_{j-1})}{{2}} \rceil$, 
then the object is allocated uniformly at random among agents in batch $j$ who opt in. To simplify  exposition, we sometimes denote by $K_j$ the ex-post size of  batch $j$ (thus omitting the dependency on the current  belief $\mu_{j-1}$).
We assume that the agents in  batch $j$ as well as the planner observe all the votes of agents in the previous batches $j'<j$, and this is commonly known. After asking a batch $j$ of agents and given a prior $\mu_{j-1}$, the  belief $\mu_j$ shared by the planner and agents is updated recursively in terms of the realized batch size $K_j$ and voting outcome $Y_j$ as
follows:
\begin{equation}
	\label{eq: prior_mu_j}
	\mu_j \triangleq \bbP\left(\omega \mid \mu_{j-1}, K_j, Y_j \right)
	= \frac{\mu_{j-1}q^{Y_j}(1-q)^{K_j-Y_j}}{\mu_{j-1}q^{Y_j}(1-q)^{K_j-Y_j} + (1-\mu_{j-1})q^{K_j-Y_j}(1-q)^{Y_j}}.
\end{equation}

A common batching mechanism in practice  is the {\it sequential offering} mechanism, which offers the object to one agent at a time. Another  type of batching mechanism 
is {\it single-batch} mechanisms, which offer the object only once; thus, the object is immediately discarded if the majority in this single batch choose to opt out.

Let $u_i (\alpha_i; s_i)$ denote the expected utility of an agent who receives signal $s_i \in \mathcal{S}$ and takes action $\alpha_i \in \mathcal{A}$.
A batching mechanism  $V \in \mathcal{V}$ is \textit{incentive-compatible} (IC) if
$u_i (y; g) \ge u_i (n; g)$ 
and $u_i (n; b) \geq u_i (y; b).$\footnote{Note that we assume that indifferent agents always decline.} 
Thus, agent $i$ gets higher utility from opting in than opting out when he has received signal $s_i=g$, but prefers to opt out when $s_i=b$.

\textbf{Correctness.}
The planner's goal is to maximize the probability of a correct allocation outcome, that we will refer to as \textit{correctness}.  The allocation outcome is correct if it allocates a good object or does not allocate a bad object. 
Formally, under a batching mechanism $V \in \mathcal{V}$, the planner's decision whether to allocate the object or not  is denoted by the random variable $Z=\{0,1\}.$
	We say that a  mechanism $V \in \mathcal{V}$ achieves \textit{correctness} $c(V)$ if
	\begin{align*}
		\label{eq: correctness_definition}
		c(V) \triangleq 
		\bbE_V \left[ \bbI(\w=G \, \cap 
		\, Z=1) + \bbI(\w=B \, \cap 
		\, Z=0) \right].
	\end{align*}

	Note that, in our model where a single object is allocated and all agents are identical, a cascade can only occur with the `opt-out' action. Therefore, maximizing correctness is equivalent to maximizing social welfare.

	\textbf{Upper bound on correctness.} 
	A natural benchmark on correctness would be the optimal solution in a setting where agents are not strategic and thus are willing to truthfully reveal their private signal to the planner.
	In the absence of strategic incentives, the planner's optimal solution would be to (i) ask \textit{all} agents in the queue to reveal their private signals, (ii) compute her posterior belief based on the gathered information, and (iii) allocate the object to a random agent if and only if her posterior exceeds 0.5. Equivalently, as we show in Lemma~\ref{lemma: optimal_threshold_no_incentives} in Appendix~\ref{appendix: no-incentives}, the planner allocates the object if and only if the number of positive signals meets a specific threshold $\underbar{y}$ defined as:
		$${\underbar{y} \triangleq 0.5 \, \log \left(\frac{1-\mu}{\mu}\right) \left (\log \left(\frac{q}{1-q}\right)\right)^{-1} +0.5 \, I},$$
	where $I$ is the number of agents in the queue. 
		In that case, we can show that correctness equals
	\begin{equation}
		\label{eq: correctness_upper_bound}
		\bbP \left( X_I \ge \underbar{y}  \right),
	\end{equation}
	where $X_I$ is a Binomial$(I, q)$ random variable (see also Lemma~\ref{lemma: existence-and-correctness}). As $I$ grows to infinity, correctness approaches 1.
	Importantly, in the presence of strategic incentives, this upper bound is not achievable by any incentive-compatible mechanism. 

\textbf{Discussion of the assumptions.}
We assume that the outcomes after each batch are fully revealed. This assumption is done for simplicity. Dropping this knowledge would complicate the analysis substantially (without affecting our the major qualitative insights): one should compute the posterior belief of each agent  by taking expectations of all the possible outcomes that led to them being offered an object.

Our model focuses on the allocation of a single object rather than a setting in which multiple objects arrive and are allocated over time. In particular,  we intentionally abstract  away from  addressing dynamic incentives and restrict  attention to the learning  aspect of allocation. 

	\section{Herding in the Sequential Offering Mechanism}
	\label{section: seq_benchmark}

	We begin by analyzing the commonly applied sequential offering mechanism, which will serve as a  benchmark for our results.
	Recall that the \emph{sequential offering} mechanism, namely $V_{\textsc{seq}}$, offers the object to each agent one-by-one; the first agent that is offered the object and opts in, receives it.

	As we  establish below, $V_{\textsc{seq}}$ has two important  drawbacks: first, it is  not incentive-compatible; second, it achieves poor correctness.
	This is due to a 
	cascade of actions (i.e., herding) that takes place after a few decisions. Specifically, a cascade of opt-out actions begins after a couple of initial agents opt out, which ultimately leads to the discard of the object. The number of initial opt-out actions needed to instigate this cascade-led discard depends on the prior  $\mu.$ The following result formalizes this. 
	(A detailed analysis of the benchmark mechanism can be found in Appendix~\ref{appendix: agents-sequential-offering} together with the omitted proofs of the section.)
	
	\begin{restatable}{lemma}{lemmasequentialallocation}
		Under 
		$V_{\textsc{seq}}$, the object is allocated if and only if: (i) $\mu > q$, (ii) $\mu \in [1/2,q]$ and either $s_1=g$ or $s_2=g$, (iii) $\mu \in [1-q,1/2]$ and $s_1=g.$ Otherwise, the object is discarded.
		\label{lemma: sequential-allocation}
	\end{restatable}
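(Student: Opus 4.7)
The plan is to analyze the sequential offering mechanism by reducing each agent's decision to a simple posterior-threshold rule, and then to propagate the public belief forward one agent at a time until a cascade of rejections sets in.

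First I would derive the two decision thresholds that an agent uses. Given a current public belief $\pi$ (shared by the planner and all remaining agents) and a private signal $s\in\{g,b\}$, the agent's Bayesian posterior is $\pi_g = \pi q / [\pi q + (1-\pi)(1-q)]$ if $s=g$ and $\pi_b = \pi(1-q)/[\pi(1-q) + (1-\pi)q]$ if $s=b$. Since the expected utility of opting in equals $2\pi_s-1$ and indifferent agents decline, the agent opts in with $s=g$ iff $\pi>1-q$, and with $s=b$ iff $\pi>q$. This yields three regimes for the belief $\pi$ driving the next agent: (a) $\pi>q$ (opt in always), (b) $\pi\in(1-q,q]$ (opt in iff $s=g$), (c) $\pi\le 1-q$ (opt out always, so the action is uninformative and the belief stays put).

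Next I would walk through the four cases of the lemma. If $\mu>q$ (case (i)), agent $1$ is in regime (a) and opts in regardless of $s_1$, so the object is allocated. If $\mu\in[1/2,q]$ (case (ii)), agent $1$ is in regime (b): with $s_1=g$ he opts in; with $s_1=b$ he opts out and this reveals $s_1=b$, so the new public belief becomes $\mu_1 = \mu(1-q)/[\mu(1-q)+(1-\mu)q]$, which lies in $[1-q,1/2]$. Agent $2$ is then in regime (b) again (possibly on the boundary); a short Bayes computation that I would include shows that $s_2=g$ brings the posterior back to $\mu\ge 1/2$ (opt in) while $s_2=b$ pushes it below $1-q$ (opt out). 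If $\mu\in[1-q,1/2]$ (case (iii)), agent $1$ in regime (b) opts in only if $s_1=g$; otherwise his opt out pushes the belief to $\mu_1\le 1-q$ and the subsequent agent is now in regime (c). Finally, if $\mu<1-q$, agent $1$ is already in regime (c) and no allocation ever occurs.

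The main work is to argue herding: once the public belief reaches regime (c), every subsequent opt out is uninformative, so the belief is frozen and no further agent can ever opt in. I would verify this by a short induction, observing that after any uninformative opt out the public belief does not change, so the next agent faces the same regime (c) calculation. Combined with the three case analyses, this yields exactly the allocation events listed in (i)--(iii) and shows that the object is discarded otherwise. The only real subtlety is tracking the boundary cases of the intervals; I would handle these using the stated tie-breaking convention and defer the remaining bookkeeping to the analogous computations in Appendix~\ref{appendix: agents-sequential-offering}.
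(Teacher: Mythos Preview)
Your proposal is correct and follows essentially the same route as the paper. The paper splits the argument into four auxiliary lemmas (one each for agents $1$, $2$, $3$, and then an induction for $i\ge 3$), whereas you organize the same computations around the ``public belief'' $\pi$ and the three regimes $\pi>q$, $\pi\in(1-q,q]$, $\pi\le 1-q$; the underlying Bayes updates and the herding induction are identical. Your caveat about boundary cases is apt: in the paper the tie-breaking convention makes $\mu=1-q$ and $\mu=1/2$ fall on the ``opt out'' side (e.g., at $\mu=1/2$ agent $2$'s posterior after $(b,g)$ is exactly $1/2$ and he declines), so be careful that your parenthetical ``$\mu\ge 1/2$ (opt in)'' is stated as a strict inequality when you fill in the details.
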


	\begin{figure}[h!]
		\centering
		\includegraphics[width=0.9\textwidth]{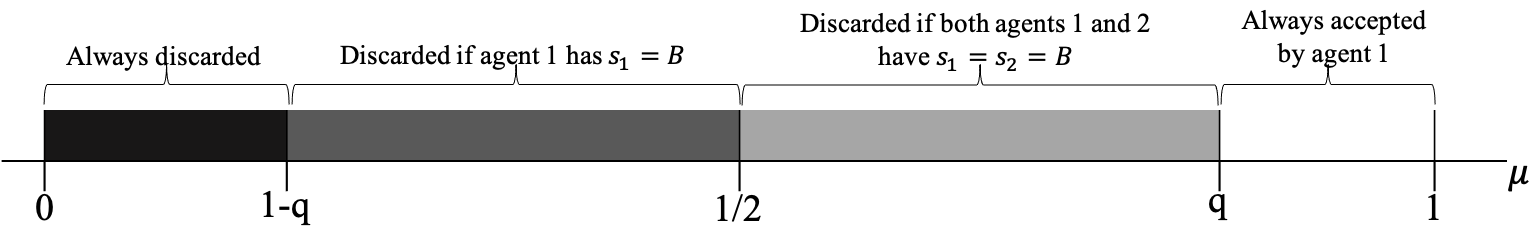}
		\caption{Allocation outcome of the sequential offer mechanism  ($V_\textsc{seq}$) based on the value of prior $\mu \in (0,1)$ with respect to  signal precision $q.$ For low $\mu,$ the object is rejected by all agents, leading to discard. For intermediate $\mu,$ the allocation outcome depends on the initial two agents' private signals.
		} 
	\label{figure: sequential-offering}
\end{figure}
A direct implication of this result is that the  outcome of the mechanism is determined in the first two offers. Regardless of the values of $\mu$, $q$ and the object's true quality, the object is always discarded if it is rejected by the first two agents. Thus the result below follows.

\begin{restatable}{proposition}{propositionseqcorrect}\label{prop: sequential-two-agents}
	Under the sequential offering mechanism, $V_{\textsc{seq}}$:
	\begin{itemize}
		\item[(i)] At most the two initial agents in the queue determine the allocation outcome: if the first two agents decline, then the object will be always discarded. 
		\item[(ii)] The correctness equals
		 \begin{align*} 
			c(V_{\textsc{seq}})=
			\begin{cases} 
				\mu &\text{ for } \mu > q\\
				2\mu q(1-q)+q^2 &\text{ for } \mu \in [1/2,q]\\
				q &\text{ for } \mu \in [1-q,1/2)\\
				1-\mu &\text{ for } \mu < 1-q.
			\end{cases}
	 \end{align*}
	\end{itemize}	
\end{restatable}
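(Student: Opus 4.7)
The plan is to derive both parts directly from Lemma~\ref{lemma: sequential-allocation}, which fully characterizes when $V_{\textsc{seq}}$ allocates the object as a function of $\mu$, $q$, and the first two signals.

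For part (i), I would simply observe that in each of the regimes identified by the lemma, the allocation decision depends only on $s_1$ and at most $s_2$: when $\mu>q$ the object is taken by agent $1$ regardless of signals (so declining never happens past position one); when $\mu\in[1/2,q]$ allocation requires $s_1=g$ or $s_2=g$, so two initial declines terminate the process; when $\mu\in[1-q,1/2]$ only $s_1=g$ matters; and when $\mu<1-q$ no allocation ever occurs. In every case, if the first two agents decline the object is discarded.

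For part (ii), I would compute $c(V_{\textsc{seq}})=\mu\,\mathbb{P}(Z=1\mid\omega=G)+(1-\mu)\,\mathbb{P}(Z=0\mid\omega=B)$ separately in each regime, using that conditional on $\omega$ the signals $s_1,s_2$ are i.i.d.\ with $\mathbb{P}(s_i=g\mid G)=\mathbb{P}(s_i=b\mid B)=q$. The case $\mu>q$ gives $Z=1$ almost surely and yields $\mu$; the case $\mu<1-q$ gives $Z=0$ almost surely and yields $1-\mu$; the case $\mu\in[1-q,1/2]$ gives $\mathbb{P}(Z=1\mid\omega=G)=q$ and $\mathbb{P}(Z=0\mid\omega=B)=q$, hence correctness $\mu q+(1-\mu)q=q$. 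The only mildly nontrivial case is $\mu\in[1/2,q]$, where $\mathbb{P}(Z=1\mid\omega=G)=1-(1-q)^2$ and $\mathbb{P}(Z=0\mid\omega=B)=q^2$; combining and simplifying yields $2\mu q(1-q)+q^2$.

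I do not expect any real obstacle: the entire proposition is a bookkeeping consequence of Lemma~\ref{lemma: sequential-allocation}. The one place to be careful is the algebraic simplification in the $\mu\in[1/2,q]$ regime, and the handling of boundary values of $\mu$ (where the tie-breaking rule that indifferent agents decline must be consistent with the regime assignment in the lemma statement).
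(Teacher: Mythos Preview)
Your proposal is correct and follows essentially the same route as the paper: part~(i) is read off directly from Lemma~\ref{lemma: sequential-allocation}, and part~(ii) is computed regime by regime using exactly the probabilities you identify (in particular $1-(1-q)^2$ and $q^2$ in the $\mu\in[1/2,q]$ case). The paper's written proof only spells out the two intermediate regimes, but your treatment of the extreme cases $\mu>q$ and $\mu<1-q$ is the obvious completion.
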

 
 The result qualitatively suggests that the first few\footnote{We have assumed that agents share a constant prior $\mu.$ In practice, each agent's prior might be drawn i.i.d. from some common distribution with support $(0,1)$. Our qualitative insights still extend to this case: a larger but finite number of initial agents determines the allocation outcome and herding still occurs.
} agents in the queue  have the power to decide whether the object will be allocated. This implies that these few agents 
can inadvertently undermine the welfare of the rest of the agents. Most importantly, the planner can never learn from the private information of the remaining $I-2$ agents to make better 
allocation
decisions. In the context of organ transplants, this herding behavior prevents the planner from allocating good-quality organs, leading to a high discard rate. This may in turn lead to patients' longer waiting times and health deterioration. 

 Motivated by the drawbacks of $V_{\textsc{seq}}$, in the next section, we will show how batching mechanisms can improve correctness.

\section{Batching Mechanisms}
\label{section: voting-mechanisms}

Recall from Equation~\eqref{eq: correctness_upper_bound} that in the absence of strategic incentives, the planner asks as many agents as possible and makes her allocation decision based on these (truthful) private signals. On the other hand, 
when agents are strategic, the presence of their incentives introduces a constraint for the planner, which in turn sets an upper bound on the number of agents the planner can truthfully learn from. 
Under the benchmark mechanism $V_\textsc{seq},$ for example, the planner can elicit at most two truthful signals, which may prevent her from taking a correct allocation decision. 

Motivated by these shortcomings, 
we propose a new class of batching mechanisms and examine the incentive-compatible mechanisms within this class.
We show that there always exists an incentive-compatible batching mechanism that improves correctness as long as private signals are more informative 
than the common prior belief; otherwise, none of the batching mechanisms are incentive-compatible and 
all achieve correctness equal to that of $V_\textsc{seq}$. 
Theorem \ref{theorem: main-result} is our main result. 

\begin{restatable}{theorem}{thmmainresult}
	For any $\mu < q,$ there exists a batching mechanism $V \in \mathcal{V}$ that is incentive-compatible and improves correctness  compared to the sequential offering mechanism $V_{\textsc{seq}}$.
	For $\mu \geq q,$ there is no incentive-compatible  batching mechanism and any $V \in \mathcal{V}$ achieves the same correctness as  
	$V_\textsc{seq}$.
	\label{theorem: main-result}
\end{restatable}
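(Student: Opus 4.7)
The plan is to reduce both halves of the theorem to a single pivotal-voter calculation and then to a short case analysis in $\mu$. The preparatory step is to establish the following identity for any odd batch size $K\geq 3$: conditional on the other $K-1$ agents voting truthfully, the pivotal event---that exactly $(K-1)/2$ of them vote yes---has likelihood $\binom{K-1}{(K-1)/2}q^{(K-1)/2}(1-q)^{(K-1)/2}$ under both $\omega=G$ and $\omega=B$, hence is uninformative about $\omega$. Consequently an agent's posterior conditional on her own signal and the pivotal event equals the single-signal update from the current belief $\mu_{j-1}$, so the IC constraints $u_i(y;g)\geq u_i(n;g)$ and $u_i(n;b)\geq u_i(y;b)$ reduce to $\mu_{j-1}\geq 1-q$ and $\mu_{j-1}\leq q$, respectively. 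Direct inspection of $\bbP(G\mid s_i,\mu)$ delivers the same range for the degenerate batch $K=1$, so truthful voting is IC in batch $j$ precisely when $\mu_{j-1}\in[1-q,q]$ (with strict inequality where tie-breaking is needed to pin down the desired behavior).

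For the impossibility direction ($\mu\geq q$), I would apply this identity to the first batch, whose prior is $\mu_0=\mu$. When $\mu>q$ the bound $\mu_0\leq q$ fails strictly, so a $b$-signal agent strictly prefers to opt in and IC is violated; at the boundary $\mu=q$ the $b$-agent is indifferent and the argument would show that any weakly IC profile at $\mu=q$ cannot yield more than the sequential correctness. In either case the planner cannot truthfully elicit negative signals, so every $V\in\mathcal V$ reduces to a profile under which the object is allocated with probability one at the first batch, giving correctness $\mu$; this matches $c(V_\textsc{seq})=\mu$ from Proposition~\ref{prop: sequential-two-agents}(ii).

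For the existence direction ($\mu<q$), I would construct an explicit IC mechanism. A natural candidate is a single-batch majority vote of odd size $K\geq 3$. For $\mu\in(1-q,q)$ both IC inequalities are strict, so truthful voting is a strict best response. Using the symmetry $\bbP(\mathrm{Bin}(K,q)\geq (K+1)/2)=\bbP(\mathrm{Bin}(K,1-q)\leq (K-1)/2)$, the resulting correctness equals $F(K,q)\triangleq\bbP(\mathrm{Bin}(K,q)\geq (K+1)/2)$, which is independent of $\mu$. Already $F(3,q)=q^2(3-2q)$ coincides with the value of $c(V_\textsc{seq})$ at $\mu=q$; since $c(V_\textsc{seq})$ is strictly increasing in $\mu$ on $[1/2,q]$ and equals $q<q^2(3-2q)$ on $[1-q,1/2]$ (the latter being elementary for $q\in(1/2,1)$), we obtain $F(3,q)>c(V_\textsc{seq})$ throughout $(1-q,q)$. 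For $\mu\leq 1-q$, where sequential already discards optimally and single-batch voting cannot beat $1-\mu$, I would instead invoke a dynamic, belief-adaptive multi-batch mechanism: each failed batch drives $\mu_j$ downward via Equation~\eqref{eq: prior_mu_j}, and the planner enlarges the next batch to keep the IC window $[1-q,q]$ alive, exploiting the binomial-tail monotonicity of Lemma~\ref{property: tail-increase}.

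The two principal obstacles I anticipate are: (i) the boundary $\mu=q$, where weak IC holds and tie-breaking technically implements truthful voting, so the proof must either adopt a strict-preference interpretation of IC or show directly that the ex-ante correctness of any voting profile at $\mu=q$ cannot exceed $c(V_\textsc{seq})$; and (ii) the regime $\mu\leq 1-q$, where sequential's $1-\mu$ can be very high and the existence proof must move from a single-batch construction to a dynamic schedule whose IC range survives repeated downward updating of the belief---precisely the greedy phenomenon flagged in the introduction.
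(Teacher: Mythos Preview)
Your pivotal-voter step is the flaw, and it propagates through both directions of the argument. In standard jury models a voter's payoff depends only on the collective decision, so conditioning on the pivotal event is without loss. Here, however, the object is allocated \emph{uniformly at random among those who opted in}: an agent who votes $y$ receives the object with probability $1/Y_j$ whenever there is a majority, not only when she is pivotal. Hence her expected payoff from $y$ (versus $0$ from $n$) integrates over \emph{all} majority profiles of the other $K-1$ voters, not just the tied profile. Those super-majority events are strictly informative toward $\omega=G$, so a $b$-signal agent's effective posterior when contemplating $y$ is more optimistic than the single-signal update. This is exactly why the paper's IC interval $\mathcal{I}_K=(\underline{\mu}_K,\overline{\mu}_K)$ depends on $K$ and has $\overline{\mu}_K$ strictly below $q$ for every $K\geq 3$ (Lemma~\ref{lemma: ic-combined}, Lemma~\ref{lemma: ic-interval-decrease}); your claim that IC holds iff $\mu_{j-1}\in[1-q,q]$ regardless of batch size is correct only for $K=1$.

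The downstream damage is concrete. In your existence argument for $\mu\in(1-q,q)$ you invoke $V_{\{3\}}$, but $\overline{\mu}_3=q/2+1/4<q$, so $V_{\{3\}}$ is \emph{not} IC on $(q/2+1/4,q)$; this is precisely the paper's Case~IV, where a two-batch greedy mechanism is needed. Conversely, for $\mu\leq 1-q$ you assert that no single-batch mechanism is IC and fall back on a dynamic schedule---but under the correct characterization $\underline{\mu}_K\to 0$ and the intervals overlap (Lemma~\ref{lemma: ic-intervals-overlap}), so a single large batch \emph{is} IC there and immediately beats $1-\mu$ (the paper's Case~I). Your proposed multi-batch fix would not work on your own terms anyway: failed batches push $\mu_j$ \emph{downward} (Lemma~\ref{lemma: posterior-after-failure}), so if $\mu_0<1-q$ your claimed IC window is never entered. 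Finally, for the impossibility direction your conclusion ($\mu\geq q$ kills IC in the first batch) happens to be right, but only because $\overline{\mu}_K\leq\overline{\mu}_1=q$ for all $K$; the pivotal argument you give does not establish this.
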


We prove Theorem \ref{theorem: main-result} in several steps. In Section~\ref{section: single-batch}, we begin by defining the simplest batching mechanisms, the {\it single-batch mechanisms,} where the mechanism offers the object to one batch only and terminates afterwards. Specifically, we characterize the optimal mechanisms among all incentive-compatible single-batch batching mechanisms. Then in Section~\ref{section: multi-batch}, we utilize these results to develop a simple greedy algorithm to construct an incentive-compatible, correctness-improving batching mechanism with potentially \textit{multiple} batches.

\subsection{Warm up: Optimal Single-batch Batching Mechanisms}
\label{section: single-batch}

As a warm up, we begin with the study of one of the simplest batching mechanisms: the single-batch  mechanism. With a slight abuse of notation, we use the following definition.

\begin{definition}
	Let $V_{\{K\}} \in \mathcal{V}$ be the {\it single-batch  mechanism} where the object is offered to only one batch of the initial $K$ agents in the queue.
\end{definition}

Unlike $V_\textsc{seq}$, mechanism $V_{\{K\}}$ (as well as any other batching mechanism $V \in \mathcal{V}$) ensures that the object is allocated if and only if the majority of agents (in the batch) opt in. As our results in this section illustrate, the extent to which this policy incentivizes agents to vote truthfully depends on the values of the prior $\mu$ and batch size $K$.

In Lemma~\ref{lemma: ic-combined}, we identify the necessary and sufficient conditions for $V_{\{K\}}$ to be incentive-compatible. In Lemma~\ref{lemma: existence-and-correctness}, we show its existence conditional on the informativeness of signals and characterize the correctness it achieves. In Proposition~\ref{prop: single-optimal}, we establish that among all incentive-compatible, single-batch mechanisms $V_{\{K\}},$ correctness is maximized when the incentive compatibility constraint is binding at $K=\overline{K}(\mu).$ Finally, in Lemma~\ref{lemma: comparative}, we provide simple comparative statics of this optimal batch size. In the next section, we will extend these results to batching mechanisms with multiple batches.

First, for any batch size $K,$ let 
\begin{equation*}
	{\mathcal{I}_K \triangleq 
		\left( 
		1 - \bbP\left(X_{K} \ge \frac{K+1}{2}\right), \frac{q^2 \left(1 - \bbP\left(X_{K} \ge \frac{K+1}{2}\right)\right)}{q^2 \left(1 - \bbP\left(X_{K} \ge \frac{K+1}{2}\right)\right) + (1-q)^2 \bbP\left(X_{K} \ge \frac{K+1}{2}\right)}    \right)}
\end{equation*}
be some interval of the prior where ${X_K\sim\text{Binomial}(K,q)}.$ Moreover, let $\overline{K}(\mu)$  (respectively, $\underline{K}(\mu)$) be the maximum (respectively, minimum) batch size $K$ such that $\mu \in \mathcal{I}_K.$ Then the incentive compatibility of $V_{\{K\}}$ can be characterized in terms of the interval $\mathcal{I}_K,$ or equivalently, the batch size bounds $\overline{K}(\mu)$ and $\underline{K}(\mu)$ as in Lemma~\ref{lemma: ic-combined} below.

\begin{restatable}{lemma}{lemmaiccombined}
	$V_{\{K\}}$ is incentive-compatible if and only if
	(i) $\mu \in \mathcal{I}_K$, or equivalently (ii) $\mu < q$ and $\underline{K}(\mu) \le K  \le \overline{K}(\mu).$ 
	\label{lemma: ic-combined}
\end{restatable}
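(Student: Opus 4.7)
The plan is to translate the two incentive-compatibility inequalities into explicit linear conditions on $\mu$ and recognize them as the defining endpoints of $\mathcal{I}_K$. Since opting out yields utility $0$ in every state, IC reduces to $u_i(y; g) \geq 0$ and $u_i(y; b) \leq 0$. First I would fix agent $i$ in the batch and let $Y_{-i}$ denote the number of opt-in votes from the other $K-1$ agents under truthful play. Opting in changes the outcome only when $Y_{-i}\geq (K-1)/2$, in which case agent $i$ joins a pool of $Y_{-i}+1$ opt-in agents and receives the object with probability $1/(Y_{-i}+1)$. Since $Y_{-i}$ is $\text{Binomial}(K-1,q)$ under $\omega=G$ and $\text{Binomial}(K-1,1-q)$ under $\omega=B$, Bayes' rule gives
\begin{equation*}
u_i(y;s_i) \,=\, \sum_{k=(K-1)/2}^{K-1}\frac{1}{k+1}\binom{K-1}{k}\!\left[q^k(1-q)^{K-1-k}\,\bbP(G\mid s_i)\,-\,(1-q)^k q^{K-1-k}\,\bbP(B\mid s_i)\right].
\end{equation*}

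The next step is to simplify these weighted binomial sums via the identity $\tfrac{1}{k+1}\binom{K-1}{k}=\tfrac{1}{K}\binom{K}{k+1}$. After reindexing, summing over $k\geq(K-1)/2$ rewrites the $q$-weighted sum as $\tfrac{1}{Kq}P_K$ and the $(1-q)$-weighted sum as $\tfrac{1}{K(1-q)}(1-P_K)$, where $P_K \triangleq \bbP(X_K\geq (K+1)/2)$ with $X_K\sim\text{Binomial}(K,q)$. Substituting the posteriors $\bbP(G\mid g), \bbP(B\mid g), \bbP(G\mid b), \bbP(B\mid b)$ and clearing common positive factors, $u_i(y;g)\geq 0$ collapses to $\mu \geq 1-P_K$, while $u_i(y;b)\leq 0$ collapses to $\mu\leq \tfrac{q^2(1-P_K)}{q^2(1-P_K)+(1-q)^2 P_K}$. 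These are precisely the endpoints of $\mathcal{I}_K$, which establishes characterization (i).

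For the equivalence with (ii), I would first show $\mathcal{I}_K\subseteq(1-q,q)$ for every odd $K$. A short algebraic manipulation reduces this inclusion to the inequality $P_K\geq q$, which holds with equality at $K=1$ (so $\mathcal{I}_1=(1-q,q)$) and strictly for every odd $K\geq 3$ by a Condorcet-style majority argument captured in Lemma~\ref{property: tail-increase}. In particular $\mu\in\mathcal{I}_K$ forces $\mu<q$. To conclude that the admissible batch sizes form a contiguous interval, I would verify by a short derivative computation that both endpoints of $\mathcal{I}_K$ are monotonically decreasing in $P_K$, and that $P_K$ is itself monotonically increasing along the odd integers (Lemma~\ref{property: tail-increase}). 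Consequently, the lower endpoint lying below $\mu$ is equivalent to $K\geq\underline{K}(\mu)$ and the upper endpoint lying above $\mu$ is equivalent to $K\leq\overline{K}(\mu)$, yielding the stated range. The main obstacle will be executing the weighted binomial-sum simplification cleanly, and confirming that both endpoints of $\mathcal{I}_K$ shift in the same direction with $K$, so that the valid range of $K$ forms a single nested interval rather than a disjoint union.
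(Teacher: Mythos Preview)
Your derivation of part~(i) is essentially the paper's: both compute $u_i(y;s_i)$ conditional on truthful voting by the other $K-1$ agents, invoke the identity $\tfrac{1}{k+1}\binom{K-1}{k}=\tfrac{1}{K}\binom{K}{k+1}$, and read off the two endpoints of $\mathcal{I}_K$ as functions of $P_K=\bbP(X_K\geq (K+1)/2)$.

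For part~(ii) your route differs slightly from the paper's. The paper cites both the monotonicity of the endpoints (Lemma~\ref{lemma: ic-interval-decrease}) and the overlap property $\mathcal{I}_K\cap\mathcal{I}_{K+2}\neq\emptyset$ (Lemma~\ref{lemma: ic-intervals-overlap}) to conclude that the set of admissible $K$ is a block of consecutive odd integers. You instead argue directly from monotonicity: since $\underline{\mu}_K=1-P_K$ and $\overline{\mu}_K$ both decrease in $K$, the condition $\mu>\underline{\mu}_K$ cuts out an upward-closed set and $\mu<\overline{\mu}_K$ a downward-closed set, whose intersection is automatically an interval. This is more economical for the present lemma; the overlap lemma is really needed only for \emph{existence} (that the interval is nonempty whenever $\mu<q$), which the paper establishes separately in Lemma~\ref{lemma: existence-and-correctness}. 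Your argument implicitly assumes that $\underline{K}(\mu)$ and $\overline{K}(\mu)$ are well defined for $\mu<q$, which is fine given how the paper is structured, but you should be aware this is where the overlap lemma would enter if you needed a fully self-contained proof.

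One small correction: the inclusion $\mathcal{I}_K\subseteq(1-q,q)$ is false for $K\geq 3$, because the lower endpoint $1-P_K$ drops strictly below $1-q$ once $P_K>q$. What you actually need (and what $P_K\geq q$ does deliver) is only the upper-endpoint bound $\overline{\mu}_K\leq q$, which suffices to conclude $\mu<q$.
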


The proof of the lemma can be found in
Appendix~\ref{appendix: single_batch_proofs} together with the rest of the proofs in this section. 
A sketch of the proof is as follows. To show condition~(i), for a fixed $K$,   let $\mathcal{G}_K$ (respectively, $\mathcal{B}_K$) be the probability that the object gets allocated to some agent in the batch conditional on the true quality being good (respectively, bad). We compute that
\begin{align*}
	{\mathcal{G}_K = \sum_{y=\frac{K+1}{2}}^K \frac{1}{y} {K-1 \choose y-1} q^{y-1}(1-q)^{K-y}}, &\qquad 
	{\mathcal{B}_K = \sum_{y=\frac{K+1}{2}}^K \frac{1}{y} {K-1 \choose y-1} q^{K-y}(1-q)^{y-1}.}
\end{align*}
Next, we use $\mathcal{G}_K$ and $\mathcal{B}_K$ to rewrite the incentive compatibility constraints as 
 \begin{equation*}
	 \label{eq: IC_function_mu_BK_GK}
	 \begin{split}
		{\mu q \mathcal{G}_K - (1-\mu)(1-q)\mathcal{B}_K \ge 0} \textrm{~~and~~} {\mu (1-q) \mathcal{G}_K - (1-\mu)q \mathcal{B}_K \le 0.}
		 \end{split}
	 \end{equation*}
Given that the left-hand sides of both inequalities monotonically increase with $\mu,$ there exist some thresholds of prior, namely $\underline{\mu}_K \in (0,1)$ and $\overline{\mu}_K \in (0,1)$, that solve the indifference conditions:
 \begin{align*}
	\frac{\underline{\mu}_K}{1-\underline{\mu}_K} = \frac{1-q}{q} \frac{\mathcal{B}_K}{\mathcal{G}_K} \textrm{~and~} \frac{\overline{\mu}_K}{1-\overline{\mu}_K} = \frac{q}{1-q} \frac{\mathcal{B}_K}{\mathcal{G}_K}.
	 \label{eqn: upper-and-lower-mu-and-B/G}
	 \end{align*}
Based on this transformed system of equations, we are able to characterize the feasible region $(\underline{\mu}_K, \overline{\mu}_K)$ for the prior $\mu$ which turns out to be precisely $\mathcal{I}_K$. To show the equivalent condition~(ii), we utilize some useful properties of the interval $\mathcal{I}_K$: first, 
its endpoints are weakly decreasing in $K$ (Lemma~\ref{lemma: ic-interval-decrease}), and second, it is overlapping (i.e., $\mathcal{I}_K \cap \mathcal{I}_{K+2} \neq \phi$) (Lemma~\ref{lemma: ic-intervals-overlap}). 

\begin{figure}[]
	\centering
	\includegraphics[width=0.85\textwidth]{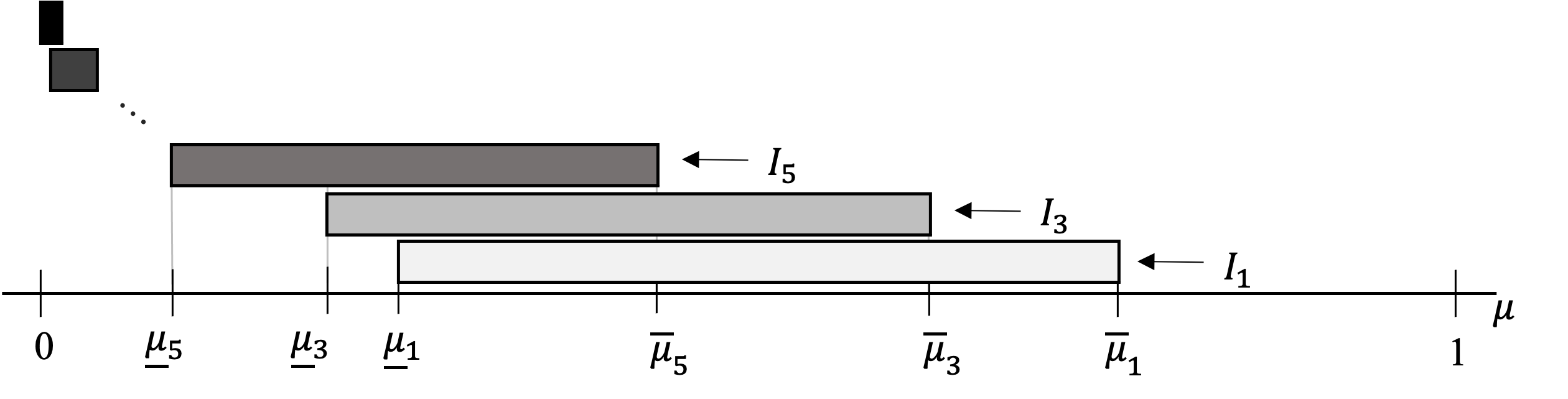}
	\caption{Interval of incentive-compatible prior $\mathcal{I}_K$ for different batch sizes $K \in \{1, 3, \ldots\}$ as described in Lemma~\ref{lemma: ic-combined}. The intervals $\mathcal{I}_K$ are decreasing (Lemma~\ref{lemma: ic-interval-decrease}) with $K$ and the consecutive intervals $\mathcal{I}_K$ and $\mathcal{I}_{K+2}$ overlap (Lemma~\ref{lemma: ic-intervals-overlap}) with each other.}
	\label{fig: feasible-priors}
\end{figure}

Condition~(ii) in terms of batch size $K$ provides a straightforward intuition. Larger batch sizes make agents more confident that, if the object is allocated, it is likely that the quality is good. For the same reason, however, a batch size that is \emph{too large} incentivizes everyone to opt in, including those with negative signals. Borrowing the standard terminology from the voting literature, $\overline{K}(\mu)$ is designed to make each agent's decision \textit{pivotal}: $\overline{K}(\mu)$ is the maximum batch size that guarantees incentive compatibility. On the other hand, if the batch size is \textit{too small}, then the allocation decision
depends on learning from a sample size that is too small to offer reliable information, making agents with positive signals reluctant 
to opt in. Thus, $\underline{K}(\mu)$ is the minimum batch size that ensures incentive compatibility. 

Furthermore, Lemma~\ref{lemma: ic-combined} gives rise to the following observation:

\begin{restatable}{lemma}{lemmaexistencecorrectness}
	Suppose $\mu < q.$ Then, an incentive-compatible mechanism $V_{\{K\}}$ always exists and achieves correctness $c\left(V_{\{K\}}\right)=\bbP\left(X_K\ge\frac{K+1}{2}\right),$ where $X_K \sim \textrm{Binomial}(K,q).$
	\label{lemma: existence-and-correctness}
\end{restatable}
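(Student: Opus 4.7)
The plan is to split the statement into two independent parts: (a) an existence claim about incentive-compatible single-batch mechanisms whenever $\mu<q$, and (b) a clean formula for the correctness of any such mechanism. Part (a) rides entirely on the characterization already obtained in Lemma~\ref{lemma: ic-combined} together with the two structural lemmas about the intervals $\mathcal{I}_K$ referenced there; part (b) is a direct probability computation that uses incentive compatibility to pin down each agent's voting behavior.

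For existence, my plan is the following. Lemma~\ref{lemma: ic-combined} tells us that $V_{\{K\}}$ is incentive-compatible precisely when $\mu\in\mathcal{I}_K$ for some odd $K$, so it suffices to show that $\bigcup_{K\in 2\mathbb{N}-1}\mathcal{I}_K\supseteq(0,q)$. First, I would evaluate $\mathcal{I}_1$ directly: since $\mathbb{P}(X_1\ge 1)=q$, plugging in yields $\mathcal{I}_1=(1-q,q)$, so the claim already holds for any $\mu\in(1-q,q)$. For $\mu\in(0,1-q]$, I would invoke Lemma~\ref{lemma: ic-interval-decrease} (endpoints weakly decrease in $K$) and Lemma~\ref{lemma: ic-intervals-overlap} (consecutive odd intervals overlap), together with the observation that $\mathbb{P}(X_K\ge (K+1)/2)\to 1$ as $K\to\infty$ since $q>1/2$ (weak law of large numbers). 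This last fact forces both endpoints of $\mathcal{I}_K$ to tend to $0$, so the overlapping, left-shifting chain $\mathcal{I}_1,\mathcal{I}_3,\mathcal{I}_5,\dots$ sweeps out every $\mu\in(0,q)$, giving us a valid $K$ for each prior.

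For the correctness formula, I would exploit the fact that under an incentive-compatible $V_{\{K\}}$ each agent votes sincerely: $\alpha_i=y$ iff $s_i=g$. Thus the number of opt-in votes $Y_K$, conditional on $\omega=G$, is $\text{Binomial}(K,q)$, and conditional on $\omega=B$ is $\text{Binomial}(K,1-q)$. Writing
\begin{equation*}
c\bigl(V_{\{K\}}\bigr)=\mu\,\mathbb{P}\!\left(Y_K\ge\tfrac{K+1}{2}\,\Big|\,\omega=G\right)+(1-\mu)\,\mathbb{P}\!\left(Y_K<\tfrac{K+1}{2}\,\Big|\,\omega=B\right),
\end{equation*}
I would apply the standard binomial symmetry: if $Y\sim\text{Binomial}(K,1-q)$ then $K-Y\sim\text{Binomial}(K,q)$, and $Y\le (K-1)/2$ iff $K-Y\ge (K+1)/2$. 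Hence the second probability equals $\mathbb{P}(X_K\ge(K+1)/2)$ just like the first, and the two terms collapse to $\mathbb{P}(X_K\ge(K+1)/2)$ independently of $\mu$.

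I expect the main obstacle to be the existence step. The characterization in Lemma~\ref{lemma: ic-combined} only certifies membership in a given $\mathcal{I}_K$; to cover the full range $(0,q)$ one needs both the overlap of consecutive intervals and the fact that the left endpoints actually shrink to $0$ (rather than stalling above some positive value). The latter is the only genuinely quantitative ingredient and relies on the concentration of $X_K$ around its mean $Kq>K/2$, which I would handle via the law of large numbers (or, if a sharper bound is wanted, Hoeffding). The correctness computation itself is routine once sincere voting is established.
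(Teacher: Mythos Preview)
Your proposal is correct and follows essentially the same route as the paper: for existence, the paper likewise combines the decreasing-endpoint and overlapping-interval lemmas with the fact that $\underline{\mu}_K\to 0$ (so $\bigcup_K \mathcal{I}_K=(0,q)$), and for correctness it performs the same two-term computation you wrote, collapsing the $\omega=B$ term via the binomial symmetry you spelled out. If anything, your write-up is slightly more explicit than the paper's about the symmetry step and the WLLN justification for $\mathbb{P}(X_K\ge (K+1)/2)\to 1$.
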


Lemma~\ref{lemma: existence-and-correctness} shows that if each signal is at least as informative as the prior $\mu$, there is a single-batch  mechanism that elicits information in a truthful manner. Another implication of the lemma is that the correctness of an incentive-compatible mechanism $V_{\{K\}}$ can be expressed as the tail distribution of a $\textrm{Binomial}(K,q)$ random variable. This tail distribution can be interpreted as the probability that the majority of votes are aligned with the true quality. Given the definition of correctness, this property is thus particularly intuitive.

An additional observation is that the tail distribution $\bbP\left(X_K \ge \frac{K+1}{2}\right)$ increases with $K$ (see Lemma~\ref{property: tail-increase} in Appendix~\ref{section: supp-proofs}). 
Consequently, given $\mu,$ the upper size limit $\overline{K}(\mu)$ maximizes correctness among all possible values of $K$ that conserve the incentive compatibility of $V_{\{K\}}.$ We formalize this property below. 

\begin{restatable}{proposition}{lemmaoptimalsinglebatch}
	Suppose $\mu < q.$ Then the batch size $K = \overline{K}(\mu)$ maximizes correctness among all incentive-compatible $V_{\{K\}}.$ 
		\label{prop: single-optimal}
\end{restatable}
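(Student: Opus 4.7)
The plan is to chain together the three results already established in Section~\ref{section: single-batch} and Appendix~\ref{section: supp-proofs}. Since we restrict attention to incentive-compatible single-batch mechanisms, Lemma~\ref{lemma: ic-combined} tells us that the feasible batch sizes are exactly the odd integers $K$ with $\underline{K}(\mu) \le K \le \overline{K}(\mu)$. So the optimization problem reduces to maximizing correctness over this finite set.

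Next, I would invoke Lemma~\ref{lemma: existence-and-correctness} to rewrite the correctness of any incentive-compatible $V_{\{K\}}$ in closed form as $c(V_{\{K\}}) = \bbP(X_K \ge (K+1)/2)$ with $X_K \sim \textrm{Binomial}(K,q)$. This turns the problem of maximizing correctness into the purely probabilistic problem of maximizing this binomial tail over the IC range of $K$.

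The key monotonicity step is then Lemma~\ref{property: tail-increase}, which asserts that $K \mapsto \bbP(X_K \ge (K+1)/2)$ is (strictly) increasing along odd $K$. Combining this with the IC range from Lemma~\ref{lemma: ic-combined}, the maximum over $\{\underline{K}(\mu),\underline{K}(\mu)+2,\ldots,\overline{K}(\mu)\}$ is attained at the largest admissible batch size, namely $K = \overline{K}(\mu)$. This yields the proposition.

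The main obstacle, at least on paper, is essentially none: all the heavy lifting has been isolated into Lemmas~\ref{lemma: ic-combined}, \ref{lemma: existence-and-correctness}, and \ref{property: tail-increase}, so the proposition follows by direct composition. The only care needed is in the write-up: one should explicitly note that $q > 1/2$ ensures strict monotonicity of the tail (so that $\overline{K}(\mu)$ is not merely \emph{a} maximizer but the unique one, if that strengthening is desired), and that $\mu < q$ guarantees $\overline{K}(\mu)$ is well-defined and finite by Lemma~\ref{lemma: existence-and-correctness}, so the maximum is attained rather than just a supremum.
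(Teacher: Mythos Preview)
Your proposal is correct and follows essentially the same approach as the paper: invoke Lemma~\ref{lemma: existence-and-correctness} to write $c(V_{\{K\}})=\bbP\left(X_K\ge\frac{K+1}{2}\right)$ and then apply Lemma~\ref{property: tail-increase} for monotonicity in $K$. The paper's proof is more terse (it leaves the role of Lemma~\ref{lemma: ic-combined} implicit), but the logic is identical.
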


Hence the planner will choose the largest batch size that the agents' incentives allow.
As we establish below, the optimal batch size behaves monotonically with respect to the prior $\mu.$ 

\begin{restatable}{lemma}{lemmacomparative}
	$\overline{K}(\mu)$ weakly decreases with $\mu.$ 
	\label{lemma: comparative}
\end{restatable}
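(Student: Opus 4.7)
My plan is to argue by contradiction, using only the weak monotonicity of the interval endpoints established in Lemma~\ref{lemma: ic-interval-decrease}. Fix any $\mu_1 < \mu_2$ in the relevant domain $(0, q)$, let $K_i = \overline{K}(\mu_i)$ for $i = 1, 2$, and write $\mathcal{I}_K = (L_K, R_K)$. The existence of each $K_i$ is given by Lemma~\ref{lemma: existence-and-correctness}, and the fact that the maximum is attained (and not merely a supremum) follows from the short observation that $R_K \to 0$ as $K \to \infty$: indeed $\mathbb{P}(X_K \geq (K+1)/2) \to 1$ by the weak law of large numbers (since $q > 1/2$), and plugging this into the formula defining $R_K$ yields $R_K \to 0$, so only finitely many $K$ can satisfy $\mu \in \mathcal{I}_K$.

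Suppose for contradiction that $K_1 < K_2$. From $\mu_2 \in \mathcal{I}_{K_2}$ I have $\mu_1 < \mu_2 < R_{K_2}$. Maximality of $K_1$ forces $\mu_1 \notin \mathcal{I}_{K_2}$, and combined with $\mu_1 < R_{K_2}$ this pins down $\mu_1 \leq L_{K_2}$. Invoking Lemma~\ref{lemma: ic-interval-decrease} gives $L_{K_2} \leq L_{K_1}$ (since $K_2 > K_1$ and $L_K$ is weakly decreasing in $K$), and chaining yields $\mu_1 \leq L_{K_1}$, which contradicts $\mu_1 \in \mathcal{I}_{K_1} = (L_{K_1}, R_{K_1})$.

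I expect no real obstacle here: the argument relies only on the weak monotonicity of the left endpoint, so in particular the overlap property of Lemma~\ref{lemma: ic-intervals-overlap} is not needed. The only minor technicality---confirming that $\overline{K}(\mu)$ is attained rather than a supremum---is handled by the short LLN remark above.
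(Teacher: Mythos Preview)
Your argument is correct and rests on the same ingredient the paper invokes: the paper's proof is literally the one-line ``See the proof of Lemma~\ref{lemma: ic-interval-decrease},'' i.e., the weak monotonicity of the endpoints $\underline{\mu}_K,\overline{\mu}_K$ in $K$. You make the deduction explicit via a clean contradiction (using only that $\underline{\mu}_K$ is decreasing), and you add the small well-definedness check that the maximum is attained; neither of these appears in the paper, but they are consistent with its intended route rather than a different one.
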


The logic here is simple. If the sentiment around the object quality is optimistic (i.e., $\mu$ is high), then it is harder to keep the agents with negative signals truthful, requiring a tighter upper bound on the batch size (lower $\overline{K}(\mu)$). We discuss additional comparative statics through simulations in Section~\ref{sec.simulations}. 

	Finally, note that the results above focused on regimes where $\mu < q$. This a natural setting which guarantees that the signal is at least as informative as the prior belief. 
	For the complementary case where $\mu \geq q$, we show that there is no incentive-compatible batching mechanism (see Lemma~\ref{corollary: impossibility} in Appendix~\ref{appendix: single_batch_proofs}). 
	Intuitively, the lack of informativeness of the signal induces the agents to lose trust in their private signals, and as a result creates no incentives to behave truthfully. In that case, under any $V_{\{K\}},$ all agents will opt in and thus the object will always be allocated. Notice that this outcome is equivalent to the outcome under $V_\textsc{seq}$ in terms of correctness.\footnote{The difference is that under $V_{\{K\}},$ the object is always allocated to an agent in the batch uniformly at random, whereas under $V_\textsc{seq}$ it is always allocated to agent $1$.}

	\subsection{Improving Correctness via Greedy Batching Mechanisms}
	\label{section: multi-batch}

	Next we utilize the results of Section~\ref{section: single-batch} 
	to characterize how to dynamically choose batch sizes of a batching mechanism (potentially with multiple batches) to improve correctness, while balancing the agents' incentives.  
	Further, we implement this mechanism via a simple greedy algorithm.

	The first key technical observation is that, from the perspective of the agents in batch $j,$  the batching mechanism $V_{\{\pi_j\}_{j=1}^{\infty}}$ is equivalent to a single-batch  mechanism $V_{\{K_j \}}$ with realized common  belief $\mu_{j-1}$ and  batch size $K_j = \pi_j \left( \mu_{j-1} \right).$ Naturally, since both the planner and agents observe the voting results from previous batches, no informational asymmetry arises between them.
	
	\begin{restatable}{lemma}{lemmaicgreedy}
		\label{lemma: IC-greedy}
		A batching mechanism $V_{\{\pi_j\}_{j=1}^\infty}$ is incentive-compatible if and only if for any batch $j,$ current belief $\mu_{j-1}$ and batch size $K_j\triangleq\pi_j \left(\mu_{j-1}\right),$ 
		$V_{\{K_j \}}$ is incentive-compatible.
	\end{restatable}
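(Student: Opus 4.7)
The plan is to argue that, from an agent's point of view, the game played within a single batch of the multi-batch mechanism is strategically isomorphic to the entire single-batch mechanism $V_{\{K_j\}}$ with prior $\mu_{j-1}$, so that the global IC constraints decompose exactly into the per-batch IC constraints already analyzed in Lemma~\ref{lemma: ic-combined}.

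First I would fix an agent $i$ in batch $j$ and compute $u_i(\alpha_i; s_i)$ under the hypothesis that every other agent plays truthfully. The crucial observation is that $i$'s payoff is nonzero only in the event that the object is allocated to $i$ within batch $j$ itself: if batch $j$'s majority opts out, the object flows to a later batch (or is discarded), and in either case $i$'s utility is $0$. Hence $u_i(\alpha_i; s_i)$ depends only on the distribution of the remaining $K_j-1$ truthful votes in batch $j$ (conditional on $\omega$), the random tie-breaker among opt-in voters, and the posterior induced by $s_i$ together with the public history. Because signals are conditionally i.i.d.\ given $\omega$ and all prior votes are publicly observed, the public belief $\mu_{j-1}$ defined by Equation~\eqref{eq: prior_mu_j} is a sufficient statistic for the history, and $i$'s expected utility coincides exactly with what he would compute in $V_{\{K_j\}}$ with prior $\mu_{j-1}$.

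Given this equivalence, both directions follow. For the ``if'' direction, assume $V_{\{K_j\}}$ is incentive-compatible at every reachable $(j,\mu_{j-1})$. Then, holding truthful play fixed in all other batches and for the remaining agents of batch $j$, the computation above shows that truthful play is a best response for agent $i$; hence truth-telling is an equilibrium of $V_{\{\pi_j\}_{j=1}^\infty}$. For the ``only if'' direction, suppose $V_{\{\pi_j\}_{j=1}^\infty}$ is incentive-compatible. Pick any batch $j$ and any belief $\mu_{j-1}$ reached under the truthful profile; the IC inequalities $u_i(y;g)\ge u_i(n;g)$ and $u_i(n;b)\ge u_i(y;b)$ must hold there, and by the previous step these are \emph{identical} to the IC inequalities of $V_{\{K_j\}}$ with prior $\mu_{j-1}$, so the latter mechanism is IC.

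The main obstacle is the careful handling of the sufficient-statistic argument: one has to verify that agent $i$'s private signal, combined with the public history of previous votes, yields a posterior that factors cleanly through $\mu_{j-1}$ and $s_i$ alone. This follows from conditional independence of signals given $\omega$ together with public observability of all votes, but it is the step where a naive proof could slip by conflating on-path and off-path beliefs; I would be explicit that the equivalence is derived along the truthful path, which is exactly what the IC definition in the model requires.
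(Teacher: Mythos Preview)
Your proposal is correct and takes essentially the same approach as the paper: both argue that because each agent participates in exactly one batch, receives zero utility if the object leaves that batch, and shares the public belief $\mu_{j-1}$ with the planner, the game restricted to batch $j$ is strategically identical to the single-batch mechanism $V_{\{K_j\}}$ with prior $\mu_{j-1}$. Your version is more explicit about the sufficient-statistic step and the on-path nature of the IC check, whereas the paper's proof compresses all of this into a couple of sentences and an appeal to induction.
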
 
	Hence, to ensure the incentive compatibility of $V_{\{\pi_j\}_{j=1}^\infty},$ it is 
	sufficient to choose each batch size $K_j$ myopically by solving a single-batch  mechanism design problem (with updated belief $\mu_{j-1}$ instead of $\mu$). 
	This suggests the following greedy scheme.

	\begin{definition} 
		For any $J \in \mathbb{N},$ let $V_{\textsc{greedy}}^J \in \mathcal{V}$ be the batching mechanism that offers the object to $J$ batches unless either it is allocated or the list is exhausted. Formally,
		for each batch $j$ and  realized belief $\mu_{j-1},$ the batch sizes are chosen as
		 \begin{align*}
			K_j = \pi_{j}(\mu_{j-1}) = 
			\begin{cases}
				\overline{K}(\mu_{j-1}) &\text{ for } j \le J \\
				0 &\text{ for } j > J.
			\end{cases} 
			 \end{align*}
		\label{definition: greedy} 
			\end{definition}
	Building upon Lemma~\ref{lemma: IC-greedy} and  results from Section~\ref{section: single-batch}, we establish three key properties of $V_\textsc{greedy}^J$.  
	\begin{restatable}{proposition}{propjgreedy}
		For any $\mu <q$ and $J,$ $V_\textsc{greedy}^J$ has the following properties: 
		\begin{itemize}
			\item[(i)] $V_\textsc{greedy}^J$ is incentive-compatible;
			\item[(ii)] Ex-post batch sizes satisfy $K_{j'} \le K_{j}$ for any $j'<j \in [J]$;
			\item[(iii)]$c\left(V_\textsc{greedy}^J\right)$ strictly increases with $J$.
		\end{itemize}
		\label{prop: greedy-j-properties}
	\end{restatable}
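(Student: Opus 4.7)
The plan is to prove the three claims in turn, each leveraging the single-batch analysis of Section~\ref{section: single-batch}. For claim (i), I would invoke Lemma~\ref{lemma: IC-greedy}, which reduces incentive compatibility of $V_\textsc{greedy}^J$ to checking that, along every realized history, the single-batch mechanism $V_{\{K_j\}}$ at belief $\mu_{j-1}$ with $K_j = \overline{K}(\mu_{j-1})$ is incentive-compatible. By Lemma~\ref{lemma: ic-combined}, this holds whenever $\mu_{j-1} < q$ (which ensures $\overline{K}(\mu_{j-1})$ is well-defined). I would verify $\mu_{j-1} < q$ by induction on $j$: the base case $\mu_0 = \mu < q$ is given by hypothesis; the induction step uses that reaching batch $j$ means batch $j-1$ failed, so $Y_{j-1} < K_{j-1}/2$, and the Bayesian update~\eqref{eq: prior_mu_j} yields
\begin{equation*}
\frac{\mu_{j-1}}{1-\mu_{j-1}} = \frac{\mu_{j-2}}{1-\mu_{j-2}} \left(\frac{q}{1-q}\right)^{2Y_{j-1}-K_{j-1}},
\end{equation*}
where the exponent is negative, so $\mu_{j-1} < \mu_{j-2} < q$.

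Claim (ii) is then an immediate consequence of (i) combined with Lemma~\ref{lemma: comparative}. The induction above shows that, along every realized path, $\mu_0 > \mu_1 > \cdots > \mu_{J-1}$ is strictly decreasing; since $\overline{K}$ is weakly decreasing (Lemma~\ref{lemma: comparative}), the batch sizes $K_j = \overline{K}(\mu_{j-1})$ are weakly increasing in $j$.

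For claim (iii), let $E_J$ denote the event that all batches $1, \ldots, J$ fail under $V_\textsc{greedy}^J$. Since $V_\textsc{greedy}^{J+1}$ agrees with $V_\textsc{greedy}^J$ outside $E_J$, and on $E_J$ the former discards while the latter offers one additional batch of size $K_{J+1} = \overline{K}(\mu_J)$, I would decompose correctness to obtain
\begin{equation*}
c(V_\textsc{greedy}^{J+1}) - c(V_\textsc{greedy}^{J}) = \bbE\!\left[\bbI(E_J)\left(\mu_J P_G^{K_{J+1}} - (1-\mu_J) P_B^{K_{J+1}}\right)\right],
\end{equation*}
where $P_G^K \triangleq \sum_{y \ge (K+1)/2} \binom{K}{y} q^y (1-q)^{K-y}$ is the probability of a majority of positive votes given a good object and $P_B^K$ is defined analogously with $q$ and $1-q$ swapped. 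Using the identity $\tfrac{K}{y}\binom{K-1}{y-1} = \binom{K}{y}$, the integrand simplifies to $K_{J+1}\bigl(\mu_J q \mathcal{G}_{K_{J+1}} - (1-\mu_J)(1-q)\mathcal{B}_{K_{J+1}}\bigr)$, which is exactly the (unnormalized) expected utility of opting in for an agent with signal $g$ at belief $\mu_J$ and batch size $K_{J+1}$. This is strictly positive by the IC constraint for signal $g$, which holds strictly because $\mathcal{I}_{K_{J+1}}$ is an open interval and $\mu_J \in \mathcal{I}_{K_{J+1}}$ by the very choice of $\overline{K}$. Combined with $\bbP(E_J) > 0$ (each batch has positive probability of a minority opt-in under either quality), this gives $\Delta c > 0$.

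The main obstacle lies in claim (iii): establishing \emph{strict}, not merely weak, improvement. The subtlety is that, a priori, the IC inequality at batch $J+1$ could bind, leaving $\Delta c = 0$. This is excluded by the openness of $\mathcal{I}_K$: because $K = \overline{K}(\mu_J)$ is an integer and $\mathcal{I}_K$ is open, $\mu_J$ cannot coincide with the endpoint of $\mathcal{I}_{K_{J+1}}$ at which the IC would be tight, so the strict inequality is guaranteed. The remainder of the argument is straightforward bookkeeping on the Bayesian updates and the majority-vote allocation probabilities.
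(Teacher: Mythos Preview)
Your proposal is correct, and parts (i) and (ii) match the paper's proof essentially verbatim (the paper isolates the belief-decrease step you do inductively as a separate Lemma~\ref{lemma: posterior-after-failure}). For part (iii) your argument is also correct and, under the hood, identical to the paper's, but the paper reaches the key inequality more directly: instead of rewriting the integrand via the combinatorial identity and the quantities $\mathcal{G}_K,\mathcal{B}_K$ to land on the IC-$g$ constraint, it simply observes that, conditional on $E_J$, discarding yields correctness $1-\mu_J$ while offering one more batch yields $c(V_{\{\overline{K}(\mu_J)\}}) = \bbP\!\left(X_{\overline{K}(\mu_J)}\ge \tfrac{\overline{K}(\mu_J)+1}{2}\right) = 1-\underline{\mu}_{\overline{K}(\mu_J)}$ (Lemma~\ref{lemma: existence-and-correctness} and Equation~\eqref{eqn: majority-lower-mu-tail}), and concludes from $\mu_J > \underline{\mu}_{\overline{K}(\mu_J)}$. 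That last strict inequality is exactly your ``strict IC-$g$'' observation, since $\underline{\mu}_K$ is by construction the point where the IC-$g$ constraint binds; your detour through $K_{J+1}\bigl(\mu_J q\,\mathcal{G}_{K_{J+1}}-(1-\mu_J)(1-q)\,\mathcal{B}_{K_{J+1}}\bigr)$ just re-derives the same comparison. (One small quibble: the phrase ``because $K=\overline{K}(\mu_J)$ is an integer'' is irrelevant to strictness---what matters is only that $\mathcal{I}_K$ is open and $\mu_J\in\mathcal{I}_{\overline{K}(\mu_J)}$ by definition of $\overline{K}$.)
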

	To understand part~(ii), 
	suppose that the object had been offered to batch $j$ but was not allocated. Then, the current belief naturally decreases, that is $\mu_j < \mu_{j-1}$, as the majority of batch $j$ has voted to opt out. 
	By Lemma~\ref{lemma: comparative}, a 
	more pessimistic belief about the object quality results in a larger optimal batch size $\overline{K}(\cdot).$ Hence, it follows that $\overline{K}(\mu_{j-1}) \le \overline{K}(\mu_{j}).$ Part (iii) of the proposition suggests
	that adding an additional batch improves correctness. The intuition is that an additional batch allows the planner to collect and learn from more information about the object. 
	Consequently,
	the planner would like to keep on offering until either the object is allocated or there are not enough agents left in the queue. We formally define this mechanism below and describe a simple algorithm to implement it. 
	
	\begin{definition}
		Let $V_{\textsc{greedy}} \in \mathcal{V}$ be the \emph{greedy batching mechanism} such that for each batch $j$ and the realized belief $\mu_{j-1},$ the batch sizes are chosen as $K_j = \pi_j(\mu_{j-1}) = \overline{K}(\mu_{j-1}).$
	\end{definition}
	\footnotesize
	\begin{algorithm}[H]
		\SetAlgoLined
		Initialize batch $j=1,$ belief $\mu_0 = \mu,$ batch size $K_1 = \overline{K}(\mu_0)$\;
		\While{$I \ge K_1 + \ldots + K_j$}{
			Collect votes from the top $K_j$ remaining agents\;
			\eIf{$Y_j \ge \lceil \frac{K_j}{2} \rceil$}{
				Allocate the object uniformly at random among agents in batch $j$ who opted in\;
			}{
				Update belief $\mu_j$ using Equation~\eqref{eq: prior_mu_j} and next batch size as $K_{j+1} = \overline{K}(\mu_{j})$\;
				Set $j \leftarrow j+1$\;
			}
		}
		\caption{Implementation of $V_\textsc{greedy}$}
		\label{alg: greedy}
	\end{algorithm}
	\normalsize

 Note that a major advantage of $V_\textsc{greedy}$ is its design simplicity: the maximum number of batches, $J,$ does not have to be predefined.
	
	Finally, we are ready to prove Theorem~\ref{theorem: main-result}. The main idea is that 
	using the correctness $c(V_\textsc{seq})$ of the sequential offering mechanism from Proposition~\ref{prop: sequential-two-agents},
	we can show that for every $\mu < q,$  the planner can achieve $c\left(V_\textsc{greedy}^J\right)$ that exceeds $c(V_\textsc{seq}),$ by appropriately setting $J$ (see the full proof in Appendix \ref{appendix: main-theorem-proof}). Furthermore, using Proposition~\ref{prop: greedy-j-properties} (part (iii)), we establish that the improvement described in Theorem~\ref{theorem: main-result} can be also achieved by $V_\textsc{greedy}.$ 
	
	\begin{restatable}{corollary}{corollarygreedy}
		For any $\mu < q,$ $V_\textsc{greedy}$ is an incentive-compatible (multi-batch)  mechanism that improves correctness in comparison to the sequential offering mechanism $V_{\textsc{seq}}$.
	\end{restatable}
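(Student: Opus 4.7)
The proof decomposes the claim into two parts: incentive compatibility and strict improvement in correctness. Both parts follow essentially by splicing together the already-established results of Section~\ref{section: multi-batch}. The plan is to first verify that the inductive structure of $V_\textsc{greedy}$ keeps every intermediate belief in the regime where a valid batch size exists, and then to leverage the monotonicity statement in Proposition~\ref{prop: greedy-j-properties}(iii) together with Theorem~\ref{theorem: main-result} to conclude the improvement bound.

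For incentive compatibility, by Lemma~\ref{lemma: IC-greedy} it is enough to show that for every batch $j$ encountered along any realized history, the single-batch problem $V_{\{K_j\}}$ with belief $\mu_{j-1}$ and $K_j = \overline{K}(\mu_{j-1})$ is IC. Lemma~\ref{lemma: ic-combined} together with Proposition~\ref{prop: single-optimal} guarantees this so long as $\mu_{j-1} < q$. I would establish the latter by induction on $j$. The base case $\mu_0 = \mu < q$ holds by hypothesis. For the inductive step, note that the mechanism only proceeds to batch $j+1$ when batch $j$ was rejected, i.e.\ $Y_j < \lceil K_j/2 \rceil$, so $K_j - Y_j > Y_j$. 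Since $q > 1/2$, plugging this into Equation~\eqref{eq: prior_mu_j} shows the Bayes update scales $\mu_{j-1}$ by a factor strictly less than one in the odds ratio sense, hence $\mu_j < \mu_{j-1} < q$. This carries the IC regime forward indefinitely.

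For the correctness claim, Theorem~\ref{theorem: main-result} (whose constructive proof is indicated to use some $V_\textsc{greedy}^{J^\star}$ for a finite $J^\star$) supplies a $J^\star \in \mathbb{N}$ with $c(V_\textsc{greedy}^{J^\star}) > c(V_\textsc{seq})$. The greedy mechanism $V_\textsc{greedy}$ differs from $V_\textsc{greedy}^{J^\star}$ only in that it does not cap the number of batches at $J^\star$; since the queue size $I$ is taken arbitrarily large, $V_\textsc{greedy}$ implements $V_\textsc{greedy}^{J}$ for every $J$ up to the eventual exhaustion of the queue. Part~(iii) of Proposition~\ref{prop: greedy-j-properties} then gives
\[
c(V_\textsc{greedy}) \;\geq\; c\!\left(V_\textsc{greedy}^{J^\star}\right) \;>\; c(V_\textsc{seq}),
\]
completing the argument.

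The main obstacle, and the step I would be most careful about, is the second half of the induction: ensuring that the update after a rejected IC batch is strictly downward in belief. The inequality $K_j - Y_j > Y_j$ only gives $\mu_j \le \mu_{j-1}$ at face value, and one must invoke $q > 1/2$ (and strict majority) to secure the strict decrease, which is what keeps $\mu_{j-1} < q$ propagating and, crucially, prevents the sequence of beliefs from stalling at the boundary $\mu = q$ where the single-batch construction breaks. The second mild subtlety is the identification of $V_\textsc{greedy}$ with $V_\textsc{greedy}^J$ for large $J$ under the ``arbitrarily large $I$'' convention; this is why the comparison with $V_\textsc{greedy}^{J^\star}$ in the final display is valid.
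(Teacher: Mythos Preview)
Your proposal is correct and follows essentially the same approach as the paper: the paper's own proof is the single line ``Immediate from Theorem~\ref{theorem: main-result} and Proposition~\ref{prop: greedy-j-properties} (part (iii)),'' and you have simply unpacked both ingredients. Your explicit induction showing $\mu_{j-1}<q$ along every rejected history is exactly the content of Lemma~\ref{lemma: posterior-after-failure} combined with Lemma~\ref{lemma: IC-greedy} (which the paper packages into Proposition~\ref{prop: greedy-j-properties}(i)), and your chaining $c(V_\textsc{greedy})\ge c(V_\textsc{greedy}^{J^\star})>c(V_\textsc{seq})$ via part~(iii) is precisely the intended reading of the paper's one-line argument.
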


\section{Simulations}

\label{sec.simulations}
In this section we use simulations to complement our theoretical results.

\textbf{Optimal batch size.} 
In Figure~\ref{fig:optimal_batch}, we study the optimal batch size $\overline{K}(\mu)$ (Proposition~\ref{prop: single-optimal}) of the single-batch  mechanism $V_{\{K\}}$; 
by construction,
the optimal  $V_{\{K\}}$ is equivalent to $V_\textsc{greedy}^1.$ In our numerical analysis, we examine $\overline{K}(\mu)$ for all priors $\mu \in (0,1)$ and signal precision ${q \in \{0.6, 0.7, 0.8\}.}$

We make several observations. First, as guaranteed by Lemma~\ref{lemma: comparative}, the optimal batch size $\overline{K}(\mu)$  decreases as the prior $\mu$ increases. As we discussed in greater detail in Section~\ref{section: single-batch}, 
the incentive compatibility constraint binds at lower values of $\overline{K}(\mu).$ For very high values of $\mu$, that is $\mu \geq q$, Theorem~\ref{theorem: main-result} establishes that it is impossible to achieve incentive compatibility for any batch size, which explains the discontinuity at $\mu=q$ in each curve.

Second, for lower values of $\mu$ (approximately for values $\mu < 0.5$), the marginal effect of $q$ on $\overline{K}(\mu)$ is negative but diminishes as $\mu$ grows. 
Indeed, as Figure~\ref{fig:optimal_batch} illustrates,  the signal precision $q$ plays an important role for lower values of $\mu$. In particular, if $\mu$ is low and $q$ is not significantly informative (e.g., $q=0.6$), the planner has a priori low confidence in the object's quality. Similarly, agents have strong incentives to opt out regardless of their  signal.
Thus, the planner needs a larger batch size.

\begin{figure}
	\begin{minipage}{0.45\textwidth}
		\includegraphics[width=0.9\textwidth]{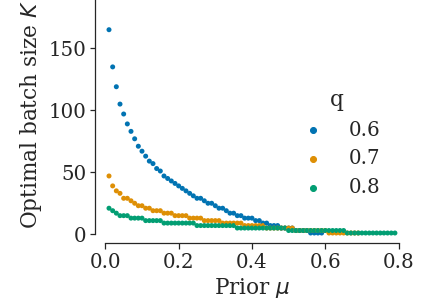}  
	\end{minipage}
	\begin{minipage}{0.54\textwidth}
		\normalsize \caption{Via simulations, we compute the optimal incentive-compatible batch size $\overline{K}(\mu)$ for all possible priors $\mu \in (0,1)$ for three regimes: $q \in \{ 0.6, 0.7, 0.8\}$. In all regimes, $\overline{K}(\mu)$ decreases as $\mu$ increases. For low values of $\mu$, higher $q$  implies lower $\overline{K}(\mu)$.}
		\label{fig:optimal_batch}
	\end{minipage}
\end{figure}
\begin{figure}
	\centering
	\includegraphics[width=\textwidth]{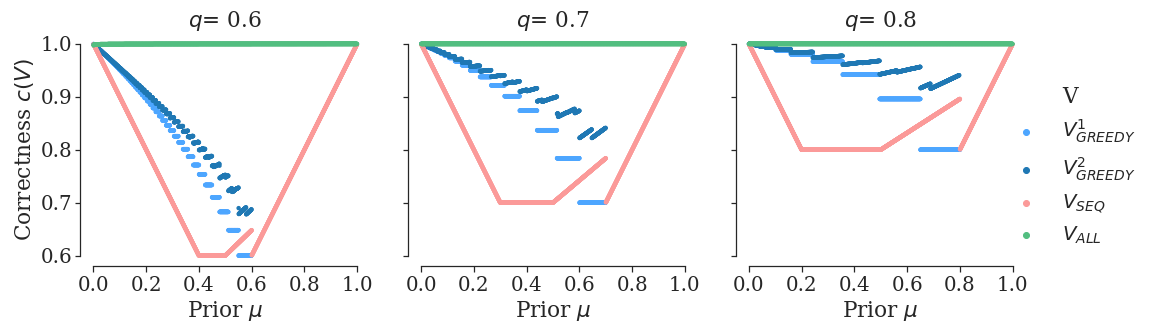}
	\caption{We compare the correctness $c(V)$ of  mechanisms ${V \in \{V_\textsc{greedy}^1, V_\textsc{greedy}^2, V_\textsc{seq}, V_\textsc{all}\}}$ in a setting with population size $I=345.$ $V_\textsc{all}$ represents the optimal mechanism  in the absence of strategic incentives and achieves near-optimal correctness. 
 The remaining $V_\textsc{greedy}^1$, $V_\textsc{greedy}^2$, and $V_\textsc{seq}$  assume strategic agents. We compute $c(V)$ for all  $\mu \in (0,1)$ and three different regimes $q \in \{0.6,0.7,0.8\}.$ In all regimes, $V_\textsc{greedy}^1$ achieves higher correctness than $V_{\textsc{seq}}$ for all $\mu < {q}/{2} + {1}/{4}$.
		With one additional batch, $V_\textsc{greedy}^2$ further improves the correctness of $V_\textsc{greedy}^1$ and outperforms $V_{\textsc{seq}}$ for all $\mu < q$.
	} 
	\vspace*{-.2cm}
	\label{fig: comparison}
\end{figure}

\textbf{Correctness evaluation and comparison.} In Figure~\ref{fig: comparison}, we evaluate the correctness $c(V)$ of four different mechanisms $V \in \{V_\textsc{greedy}^1, V_\textsc{greedy}^2, V_\textsc{seq}, V_\textsc{all}\}$ for $q\in \{0.6, 0.7, 0.8\}$ and  $\mu \in (0,1)$. In particular, $V_\textsc{all}$ represents the optimal mechanism  in the absence of strategic incentives (see Equation~\eqref{eq: correctness_upper_bound}).

For $q\in\{0.6, 0.7, 0.8\}$, we  study the behavior of $c(V^J_{\textsc{greedy}})$, $J \in \{1,2\}$, as a function of the prior $\mu$. We observe several interesting properties. First,  $\mu$ and $q$ apply two opposite forces on correctness: for both $J \in \{1,2\}$, $c(V^J_{\textsc{opt}})$ tends to decrease as $\mu$ increases but for higher $q$'s this effect is smaller.  
Intuitively, 
as the prior $\mu$ grows, the batch size becomes smaller (see Figure~\ref{fig:optimal_batch}). As the planner learns from fewer signals,
the probability to misallocate the object increases. 
However, as the signal precision $q$ improves,  the likelihood of misallocating the object decreases because the signals become more reliable.

Second, the comparison between $V_\textsc{greedy}^1$ and $V_\textsc{greedy}^2$ confirms that an additional
batch has an important positive effect on correctness. The two-batch mechanism $V_\textsc{greedy}^2$ outperforms the single-batch mechanism $V_\textsc{greedy}^1$. In fact, the gap between their achieved correctness grows as $\mu$ increases. Intuitively,  if the planner fails to allocate a good object in the first batch, she has 
another chance to allocate it in the second batch.
Since a higher $\mu$ translates to a higher probability that the object is of good quality, having
two opportunities to allocate the object has a positive impact on correctness. This also explains the non-monotonic---though generally decreasing---trend of $c(V_\textsc{greedy}^2)$ with respect to $\mu$, which is especially evident for high values of $\mu.$ 

Finally, we are interested in comparing the correctness of batching mechanisms against two natural benchmarks.
The first benchmark $V_\textsc{seq}$ is the sequential offering mechanism (see Section~\ref{section: seq_benchmark}) in the presence of strategic incentives.
As established in Theorem~\ref{theorem: main-result}, there exists a batching mechanism (in this case, $V_\textsc{greedy}^2$) which outperforms $V_\textsc{seq}$ for all $\mu < q$. 
However, observe that the number of batches is crucial. In contrast to $V_\textsc{greedy}^2$, $V_\textsc{greedy}^1$ achieves higher correctness than $V_\textsc{seq}$ only for $\mu < \frac{q}{2} + \frac{1}{4}$.

Comparing $V_\textsc{greedy}^2$ against $V_\textsc{seq}$, we observe that the performance gap  $c(V_\textsc{greedy}^2) - c(V_\textsc{seq})$ widens in the region $\mu \in [0, 1-q]$ as $\mu$ grows. 
This is because the object is always discarded for $\mu < q$ (see Figure~\ref{figure: sequential-offering}) which further explains why $c(V_{\textsc{seq}}) = 1 -\mu$ (Proposition~\ref{prop: sequential-two-agents}). For $\mu \in (1-q, q)$, the gap remains positive but decreases as $\mu$ increases, since the signals of the first two agents affect the outcome of $V_{\textsc{seq}}$ and thus decrease 
the chance of misallocation.
Note again that, due to Theorem~\ref{theorem: main-result}, for $\mu \geq q$, 
any batching mechanism $V \in \mathcal{V}$ achieves correctness equal to $c(V_{\textsc{seq}})$.

The second benchmark $V_\textsc{all}$ assumes that agents are not strategic (see Equation \eqref{eq: correctness_upper_bound}) and thus achieves the optimal correctness close to 1. The ratio  $c(V^J_{\textsc{greedy}})/ c(V_{\textsc{seq}})$, $J \in \{1,2\}$, serves as a measure of the \textit{price of anarchy}. 
We see that incentives put a significant cost on correctness, especially for higher values of $\mu \leq q$. Higher values of $q$ help decrease the price of anarchy. For any $q$, the maximum price of anarchy $c(V^J_{\textsc{greedy}})/ c(V_{\textsc{seq}})$ is observed around $\mu =q,$ and equals $1/q$ in a single-batch mechanism.

\section{Discussion and Conclusion}\label{sec:dis}

Empirical evidence suggests that herding is a contributing factor to the non-utilization of recovered organs. Our results indicate that herding can be reduced in organ allocation by using  batch offers and randomization when an organ is not allocated immediately.

Moreover, common organ quality metrics,  such as the Kidney Donor Risk Index (KDRI), do not accurately  reflect the organ's true quality when doctors have noisy signals over its quality (e.g., biopsy interpretations \citep{lentineVariationUseProcurement2019}). Our findings suggest that batch offers can be  a useful instrument in accurately updating organ quality and improving allocation.

In practice, batch offers can be an effective strategy for provisional offers; these are preliminary offers that are  utilized in the US to elicit whether doctors   would give serious consideration to the organ. Such changes help to better balance the need for learning about the organ's quality and respecting patients' priorities. 

Observe that our  mechanism may not always result in a Pareto improvement as patients with a high priority may be skipped due to the randomness in allocation. However, since patients' priorities depend on the organ quality, an organ allocation policy that is based on (real-time) assessed quality may  naturally alter the order of offers.

It is  worth noting that introducing batching offers into the allocation policy is aligned with expediting the organ offering process; this is especially  important in placing marginal organs in order to prevent discard due to the accumulation of  cold ischemia time.

We acknowledge that our model makes certain assumptions, such as behaviorally identical agents with a shared prior, utility, and information structure. In the context of kidney transplantation, the shared prior assumption can be justified by the use of the Kidney Donor Risk Index (KDRI), which ranks kidneys based on their relative risk. While the assumptions of identical utilities and information structures may not fully reflect the real organ allocation markets, the seminal work of \citet{smith2000pathological}  suggests that the qualitative essence of our findings remains valid in broader contexts, as long as there are no agents with unbounded signals. 
However, we recognize that the model does not capture all the complexities of real-world organ allocation, such as heterogeneous patient characteristics and varying utilities from the same organ. 

Overall, 
our findings shed light on how to reduce on herding and potentially the discard rate in organ allocation. 
Future research could explore the practical implementation of such  policy amendments and investigate their effects on the efficiency and equity of organ allocation.

\appendix

 \section{Optimal Solution in the Absence of Strategic Incentives}
 \label{appendix: no-incentives}
 
 \begin{lemma}
 	\label{lemma: optimal_threshold_no_incentives}
 	Suppose that agents are not strategic and voluntarily reveal their private signal value to the planner. Let
 	\begin{equation*}
 		\underbar{y} = 0.5 \, \frac{\log \left(\frac{1-\mu}{\mu}\right)}{\log \left(\frac{q}{1-q}\right)} +0.5 \, I.
 	\end{equation*}
 	Then, the planner achieves optimal correctness when she allocates the object if and only if $\sum_{i=1}^I \mathbf{1}\{s_i=g\} \geq \underbar{y}$.
 \end{lemma}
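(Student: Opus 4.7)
The plan is to reduce the planner's problem to pointwise maximization given a sufficient statistic, and then invert a standard monotone likelihood ratio inequality.

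First I would note that since agents are truthful, the planner observes the realized signal vector $(s_1,\ldots,s_I)$. Because the signals are i.i.d.\ conditional on $\omega$, the count $Y \triangleq \sum_{i=1}^I \mathbf{1}\{s_i = g\}$ is a sufficient statistic for $\omega$, so any decision rule can be taken to depend only on $Y$. The correctness objective decomposes as
\begin{equation*}
c(V) = \mathbb{E}\bigl[\mathbb{P}(\omega = G \mid Y)\,\mathbf{1}\{Z=1\} + \mathbb{P}(\omega = B \mid Y)\,\mathbf{1}\{Z=0\}\bigr],
\end{equation*}
so for each realization of $Y$ the planner should set $Z=1$ iff $\mathbb{P}(\omega = G \mid Y) \ge \mathbb{P}(\omega = B \mid Y)$, i.e., iff the posterior on $G$ exceeds $1/2$.

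Next I would apply Bayes' rule: conditional on $\omega = G$ the likelihood of observing $Y=y$ is proportional to $q^y(1-q)^{I-y}$, and conditional on $\omega = B$ it is proportional to $q^{I-y}(1-q)^{y}$. The allocation condition $\mathbb{P}(\omega=G\mid Y=y) \ge 1/2$ therefore becomes
\begin{equation*}
\mu\, q^y(1-q)^{I-y} \;\ge\; (1-\mu)\,q^{I-y}(1-q)^y,
\end{equation*}
which, after dividing through and taking logarithms (using $q>1/2$ so that $\log(q/(1-q))>0$), is equivalent to
\begin{equation*}
(2y - I)\log\!\left(\tfrac{q}{1-q}\right) \;\ge\; \log\!\left(\tfrac{1-\mu}{\mu}\right).
\end{equation*}
Solving for $y$ yields $y \ge \underbar{y}$ exactly as defined.

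Finally I would invoke the monotone likelihood ratio property of the Binomial family to conclude that $\mathbb{P}(\omega = G \mid Y=y)$ is strictly increasing in $y$, so the optimal rule is indeed a threshold rule at $\underbar{y}$ (and the tie-breaking at $y=\underbar{y}$ when $\underbar{y}$ happens to be an integer is inconsequential for correctness since at equality the two terms in the objective coincide). The main obstacle is purely notational/algebraic — keeping the direction of the log-ratio inequality correct given that $q>1/2$ makes $\log(q/(1-q))$ positive and thus preserves the direction when we divide — but there is no conceptual subtlety beyond standard Bayesian decision theory with a sufficient statistic.
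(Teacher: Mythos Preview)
Your proposal is correct and follows essentially the same route as the paper: compute the posterior via Bayes' rule, compare it to $1/2$, and invert the resulting inequality using $q>1/2$ to obtain the threshold $\underbar{y}$. Your version is slightly more thorough in justifying why the decision depends only on the count $Y$ (sufficiency) and why the optimal rule is a threshold (monotone likelihood ratio), but the core computation is identical.
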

 
 \begin{proof}{\textit{Proof.}}
 	Conditional on $\omega = G$, the number $X_G$ of positive signals $s=g$ follows a binomial distribution, i.e., $X_G \sim \textrm{Binomial}(I, q)$. Conditional on $\omega = G$, the number $X_B $ of positive signals $s=g$ follows a binomial distribution, i.e., $X_B \sim \textrm{Binomial}(I, 1-q)$.
 	
 	Let $\underbar{y}$ denote the number of minimal positive signals required to allocate the object. For any number of positive signals higher than $\underbar{y}$, it is also optimal to allocate the object since the posterior belief that $\omega = G$ strictly increases.
 	
 	The threshold $\underbar{y}$ is the smallest integer that satisfies 
 	\begin{equation*}
 		\begin{split}
 			& \bbP(\omega = G \mid X_G = \underbar{y}) \geq 0.5 \\
 			\iff & \frac{\mu {I \choose  \underbar{y}} q^{\underbar{y}} (1-q)^{I- \underbar{y}} }{ \mu {I \choose  \underbar{y}} q^{\underbar{y}} (1-q)^{I- \underbar{y}} + (1-\mu) {I \choose  \underbar{y}} q^{I-\underbar{y}} (1-q)^{\underbar{y}} }\geq 0.5\\
 			\iff & \mu q^{\underbar{y}} (1-q)^{I- \underbar{y}} \geq (1-\mu) q^{I-\underbar{y}} (1-q)^{\underbar{y}}\\
 			\iff & \mu q^{2\underbar{y}-I} \geq (1-\mu) (1-q)^{2\underbar{y}-I}\\
 			\iff & \left( \frac{q}{1-q}\right)^{2\underbar{y}-I} \geq \frac{1-\mu}{\mu}.
 		\end{split}
 	\end{equation*}
 	
 	Recall that $q \in (0.5,1)$.
 	Taking the logarithm of the previous relation and rearranging the terms, we finally get that
 	\begin{equation*}
 		\underbar{y} = 0.5 \, \frac{\log \left(\frac{1-\mu}{\mu}\right)}{\log \left(\frac{q}{1-q}\right)} +0.5 \, I.
 	\end{equation*}
 \end{proof}
 
 \section{Omitted Analysis of the Sequential Offering Mechanism (Section~\ref{section: seq_benchmark})}

 \label{appendix: agents-sequential-offering}
 We first analyze the optimal strategies of the first three agents in the queue. Before taking any action, each agent observes his position in the queue in addition to his private signal and the common object prior $\mu$. In particular, by observing his position $i$, agent $i$ recognizes that the object is being offered to him because all of the $i-1$ preceding agents have rejected it. This observation contributes to his posterior belief about the object quality.

 	\textbf{Agent $1$.} 
 	Agent $1$ updates his posterior belief using Bayes' law based only on his  signal $s_1.$
 	\begin{align*}
 		\bbP(\omega=G | s_1) = \begin{cases}
 			\frac{\mu q}{\mu q + (1-\mu)(1-q)} &\text{ for } s_1 = g\\
 			\frac{\mu (1-q)}{\mu (1-q) + (1-\mu)q} &\text{ for } s_1 = b.
 		\end{cases}
 	\end{align*}
 	Since he receives utility $1$ if $\omega=G$ and $-1$ if $\omega=B,$ his expected utility to opt in is ${\bbP(\omega=G | s_1) - \left(1-\bbP(\omega=G | s_1)\right).}$ On the other hand, his utility to opt out is $0.$ We observe that agent $1$ 
 	is better off opting in if
 	$$\bbP(\omega=G | s_1) - \left(1-\bbP(\omega=G | s_1)\right) > 0, $$ which corresponds to
 	\begin{align*}
 		\begin{cases}
 			\mu q - (1-\mu)(1-q) >0 &\text{ when } s_1=g,\\
 			\mu(1-q) - (1-\mu)q > 0 &\text{ when } s_1=b.
 		\end{cases}.
 	\end{align*}
 	
 	Let $\alpha_1$ be agent $1$'s optimal action in this mechanism.
 	Then the following lemma characterizes $\alpha_1$.
 	
 	\begin{lemma}
 		Under the sequential offering mechanism, agent $1$ chooses action
 		\begin{align*} 
 			\alpha_1 = \begin{cases}
 				y &\text{ for } \mu > q\\
 				\bbI\{s_1=g\} &\text{ for } \mu \in (1-q,q]\\
 				n &\text{ for } \mu \le 1-q.
 			\end{cases}
 		\end{align*}
 		\label{claim: sequential-agent-1}
 	\end{lemma}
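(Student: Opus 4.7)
The plan is to follow the Bayesian utility calculation already set up immediately before the statement and then translate the two opt-in conditions into thresholds on $\mu$, being careful with the tie-breaking rule that indifferent agents decline. Since agent $1$ has no predecessor to learn from, his posterior is determined entirely by $s_1$ and $\mu$ via the Bayes formulas displayed above the lemma, so no additional conditioning on his position is needed.

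First I would recall that the expected payoff of opting in given $s_1$ is $2\,\bbP(\omega=G\mid s_1)-1$, while opting out yields $0$. Substituting the two posterior expressions, opting in is strictly preferred iff $\mu q-(1-\mu)(1-q)>0$ when $s_1=g$, and iff $\mu(1-q)-(1-\mu)q>0$ when $s_1=b$. Solving each inequality for $\mu$ gives the two thresholds $\mu>1-q$ (for $s_1=g$) and $\mu>q$ (for $s_1=b$); since $q\in(0.5,1)$, we have $1-q<q$, so these thresholds partition $(0,1)$ into exactly the three regions stated.

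Next I would read off the three cases and verify the boundary behavior using the tie-breaking rule. For $\mu>q$ both inequalities hold strictly, so $\alpha_1=y$ regardless of $s_1$. For $\mu\in(1-q,q]$ the inequality for $s_1=g$ is strict (using $q>1/2$ at the endpoint $\mu=q$ to get $q^2-(1-q)^2>0$), while for $s_1=b$ it fails (strictly for $\mu<q$, with equality at $\mu=q$, where the agent is indifferent and hence declines by assumption); so $\alpha_1=\bbI\{s_1=g\}$. For $\mu\le 1-q$ both inequalities fail; at the boundary $\mu=1-q$ with $s_1=g$ the expected payoff is exactly $0$, so again the tie-breaking rule gives $\alpha_1=n$.

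No step looks genuinely hard; the only subtlety is bookkeeping at the two threshold values $\mu=q$ and $\mu=1-q$, where the correct assignment of the boundary to its neighboring region depends entirely on the convention that indifferent agents opt out. I would therefore make the endpoint check explicit in the writeup so that the weak/strict inequalities in the case analysis match the statement's intervals $(1-q,q]$ and $(-\infty,1-q]$.
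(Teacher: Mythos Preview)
Your proposal is correct and follows exactly the approach the paper uses: the paper sets up the two posterior-based opt-in inequalities immediately before the lemma and leaves the case analysis implicit, and you complete precisely that case analysis, including the careful endpoint checks at $\mu=q$ and $\mu=1-q$ using the tie-breaking convention.
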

 	Lemma~\ref{claim: sequential-agent-1} shows that the sequential offering mechanism drives the first agent to follow his signal if it is more informative than the common prior (i.e., when $\mu \in (1-q,q]$). In the opposite case, where the prior is more informative (i.e., when either $\mu > q$ or $\mu \le 1-q$), the agent ignores his private signal. In particular, for high priors $\mu > q$ the object is always allocated to agent $1$ regardless of its true quality $\omega$.

 	\textbf{Agent $2$.}
 	When the second agent is offered the object, he knows that the first agent has already opted out. At the same time, agent 2 is aware of agent 1's optimal strategy as described in Lemma~\ref{claim: sequential-agent-1}. Therefore, if the prior satisfies $\mu \in (1-q,q],$ then agent 2 infers that agent 1 has followed his own signal $s_1=b.$ As such, for $\mu \in (1-q,q]$, agent 2 updates his posterior belief informed by this observation as follows:
 	\begin{align*}
 		\bbP(\omega=G | s_2, \alpha_1=n) = \begin{cases}
 			\mu &\text{ for } \mu \in (1-q,q], s_2 = g\\
 			\frac{\mu (1-q)^2}{\mu (1-q)^2 + (1-\mu)q^2} &\text{ for } \mu \in (1-q,q], s_2 = b.
 		\end{cases}
 	\end{align*}
 	On the other hand, for $\mu \notin (1-q,q]$, it holds that
 	\begin{align*}
 		\bbP(\omega=G | s_2, \alpha_1=n) = \begin{cases}
 			\frac{\mu q}{\mu q + (1-\mu)(1-q)} &\text{ for } \mu \notin (1-q,q], s_2 = g\\
 			\frac{\mu (1-q)}{\mu (1-q) + (1-\mu)q} &\text{ for } \mu \notin (1-q,q], s_2 = b,
 		\end{cases}
 	\end{align*}
 	in which case agent 1's action is uninformative to agent 2. 
 	
 	\begin{lemma}
 		Under the sequential offering mechanism, agent $2$ chooses\footnote{In the sequential offering mechanism, if $\mu > q$ then the object is never offered to agent $i \ge 2,$ because agent 1 always opts in.}
 		\begin{align*} 
 			\alpha_2 = \begin{cases}
 				\bbI\{s_2=g\} &\text{ for } \mu \in (1/2,q]\\
 				n &\text{ for } \mu \leq 1/2.
 			\end{cases}
 		\end{align*}
 		\label{claim: sequential-agent-2}
 	\end{lemma}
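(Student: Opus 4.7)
The plan is to case-split on $\mu$ according to the informativeness regime of agent~1's action established in Lemma~\ref{claim: sequential-agent-1}. For $\mu \in (1-q,q]$, agent 1's opt-out perfectly reveals that $s_1=b$; for $\mu \le 1-q$, his opt-out is uninformative since he would decline under either private signal; the remaining range $\mu>q$ is excluded by the footnote since agent 2 is never offered the object. Within each regime I would further split on $s_2 \in \{g,b\}$ and compare agent 2's expected opt-in utility $2\,\bbP(\omega=G\mid s_2,\alpha_1=n)-1$ (using $\pm 1$ payoffs) to the opt-out utility $0$, applying the indifference tie-breaking rule at boundaries.

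For $\mu\in(1-q,q]$, I would plug into the two posteriors already displayed above the lemma. When $s_2=g$, the posterior collapses to $\mu$: the positive signal exactly cancels the negative signal extracted from $\alpha_1=n$. Thus opting in is strictly preferred iff $\mu>1/2$, and by the tie-breaking convention the case $\mu=1/2$ resolves to $n$. When $s_2=b$, the posterior is $\mu(1-q)^2/[\mu(1-q)^2+(1-\mu)q^2]$, which exceeds $1/2$ iff $\mu>q^2/(q^2+(1-q)^2)$. A brief algebraic check (multiplying both sides by $q^2+(1-q)^2$ and using $q>1/2$) shows this threshold strictly exceeds $q$, so on the entire subrange $\mu\in(1-q,q]$ agent 2 with $s_2=b$ opts out.

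For $\mu \le 1-q$, agent 2's posterior coincides with the single-signal Bayesian update. The positive-signal posterior $\mu q/[\mu q+(1-\mu)(1-q)]$ exceeds $1/2$ iff $\mu>1-q$, and the negative-signal posterior $\mu(1-q)/[\mu(1-q)+(1-\mu)q]$ exceeds $1/2$ iff $\mu>q$. Neither inequality holds in this regime, so $\alpha_2=n$ for both signal values. Combining with the previous case yields $\alpha_2=\bbI\{s_2=g\}$ on $(1/2,q]$ and $\alpha_2=n$ on $\mu\le 1/2$, matching the lemma exactly.

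The substantive content is routine Bayesian bookkeeping, so the only real care needed is at boundary points: the inequality $\mu > q^2/(q^2+(1-q)^2) > q$ for the $s_2=b$ subcase, and the correct resolution of $\mu=1/2$ via the tie-breaking rule that indifferent agents decline. These are the two places where an off-by-one in strict versus weak inequalities would flip the stated cases, so I would state the tie-breaking convention explicitly at the start of the proof to avoid ambiguity.
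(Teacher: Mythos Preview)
Your proposal is correct and follows essentially the same approach as the paper: case-split on $\mu$ according to agent~1's informativeness regime from Lemma~\ref{claim: sequential-agent-1}, then on $s_2$, and compare the posterior against $1/2$ with the tie-breaking rule at $\mu=1/2$. The only cosmetic difference is in the $s_2=b$, $\mu\in(1-q,q]$ subcase: you compute the threshold $q^2/(q^2+(1-q)^2)$ and show it exceeds $q$, whereas the paper bounds $\mu(1-q)^2-(1-\mu)q^2$ directly by substituting $\mu=q$ to get $q(1-q)(1-2q)<0$.
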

 	\begin{proof}{\textit{Proof.}}
 		Similarly to agent 1, agent 2 prefers to opt in if 
 		\begin{align}
 			\bbP(\omega=G|s_2,\alpha_1=n) - (1-\bbP(\omega=G|s_2,\alpha_1=n) ) > 0.
 			\label{eqn: sequential-agent-2}
 		\end{align}
 		For $\mu \in (1-q,q]$ and $s_2=g,$ Equation \eqref{eqn: sequential-agent-2} holds when $\mu > 1/2.$ For $\mu \in (1-q,q]$ and $s_2=b,$ Equation \eqref{eqn: sequential-agent-2} corresponds to 
 		\begin{align*}
 			\mu(1-q)^2-(1-\mu)q^2 > 0.
 		\end{align*}
 		However, for any $\mu \in (1-q,q]$ we have 
 		\begin{align*}
 			\mu(1-q)^2-(1-\mu)q^2  \le q(1-q)^2-(1-q)q^2 
 			= q(1-q)\left(1-2q \right)
 			< 0,
 		\end{align*}
 		which makes Equation \eqref{eqn: sequential-agent-2} infeasible.
 		
 		For $\mu \le 1-q,$ notice that 
 		\begin{align*}
 			\mu(1-q) - (1-\mu)q < \mu q - (1-\mu)(1-q) \le 0. 
 		\end{align*}
 		Thus Equation \eqref{eqn: sequential-agent-2} is again infeasible regardless of $s_2.$
 		
 		For $\mu=1/2$ exactly, by our assumption, he simply rejects the object.
 		Therefore, agent 2 prefers to opt in if $\mu \in (1/2,q]$ and $s_2=g.$ Otherwise, he prefers to opt out.
 	\end{proof}
 	
 	Lemma~\ref{claim: sequential-agent-2} shows that agent 2 follows his signal if $\mu > 1/2,$ whereas he ignores the signal and opts out if $\mu \leq 1/2.$ The characterizations of $\alpha_1$ and $\alpha_2$ offer implications that will be useful to analyze the optimal strategies of subsequent agents. First, for $\mu > 1/2$, 
 	Lemma~\ref{claim: sequential-agent-1} and Lemma~\ref{claim: sequential-agent-2} together suggest that both agent 1 and agent 2 follow their signals. Therefore, the availability of the object after being offered to agent 2 implies to subsequent agents that both ${s_1=b}$ and ${s_2=b.}$ Second, for $\mu \leq 1/2,$ agent 2's action is uninformative about $s_2$. Agent 1's action, however, can be informative depending on the value of $\mu.$ If $\mu \in (1-q, 1/2)$ then the subsequent agents can infer that $s_1=b$ because agent 1 follows his signal. If $\mu \le 1-q,$ however, agent 1's action is also uninformative. 
 	
 	Notice here that, whenever agent $1$'s or $2$'s opt-out action is informative, his private signal (that the subsequent agents can perfectly infer) is $b.$ We use this implication for our analysis next.

 	\textbf{Agent $3$.}
 	If the object is offered to agent 3, one of the three events must have occurred: (i) $\mu \in (1/2,q]$ \emph{and} $s_1=s_2=b$, or (ii) $\mu \in (1-q, 1/2]$ \emph{and} $s_1=b$, or (iii) $\mu \le 1-q.$ Nonetheless, we claim that in any case,
 	agent 3 would always choose to opt out.
 	
 	\begin{lemma}
 		Under the sequential offering mechanism, agent 3 always chooses to opt out: $\alpha_3 = n.$
 		\label{claim: sequential-agent-3}
 	\end{lemma}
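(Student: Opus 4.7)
The plan is to show case-by-case that in each of the three scenarios enumerated just before the statement, agent 3's posterior that $\omega = G$ -- even under his most favorable signal $s_3 = g$ -- is at most $1/2$. Combined with the tie-breaking convention that indifferent agents decline and the fact that opt-in gives expected utility $2\bbP(\omega = G \mid \cdot) - 1$ while opt-out gives $0$, this yields $\alpha_3 = n$ regardless of $s_3$.

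In Case (i), where $\mu \in (1/2, q]$, Lemmas on $\alpha_1$ and $\alpha_2$ imply both predecessors follow their signals, so agent 3 infers $s_1 = s_2 = b$. Applying Bayes' rule with the three conditionally independent signals, the posterior odds ratio at $s_3 = g$ equals $\frac{\mu}{1-\mu} \cdot \frac{(1-q)^2 q}{q^2 (1-q)} = \frac{\mu}{1-\mu} \cdot \frac{1-q}{q}$, which is at most $1$ since $\mu \le q$; the case $s_3 = b$ only makes the ratio smaller. Hence $\bbP(\omega = G \mid \cdot) \le 1/2$.

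In Case (ii), where $\mu \in (1-q, 1/2]$, agent 1 still follows his signal (so $s_1 = b$ is inferable) but agent 2's opt-out is uninformative about $s_2$ by Lemma on $\alpha_2$. The posterior odds ratio at $s_3 = g$ reduces to $\frac{\mu}{1-\mu} \cdot \frac{(1-q) q}{q (1-q)} = \frac{\mu}{1-\mu} \le 1$ since $\mu \le 1/2$. In Case (iii), where $\mu \le 1 - q$, both agent 1's and agent 2's opt-outs are uninformative, so agent 3 only conditions on his own signal; at $s_3 = g$ the odds are $\frac{\mu}{1-\mu} \cdot \frac{q}{1-q} \le 1$ precisely because $\mu \le 1-q$.

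The main subtlety -- as opposed to any computation -- is bookkeeping what agent 3 can actually infer from the observed opt-outs in each $\mu$-regime: whether a given predecessor was acting on his private signal or ignoring it. Once this is correctly read off from the earlier Lemmas characterizing $\alpha_1$ and $\alpha_2$, each case collapses to one of the elementary inequalities $\mu \le q$, $\mu \le 1/2$, or $\mu \le 1-q$, and the posterior bound follows immediately. No inductive or tail-probability argument is needed for this particular lemma.
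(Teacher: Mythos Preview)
Your proof is correct and follows essentially the same case-by-case structure as the paper's own proof: in each of the three $\mu$-regimes you determine which of $\alpha_1,\alpha_2$ are informative, compute agent~3's posterior (you do it via odds ratios, the paper via explicit posterior probabilities), and reduce the opt-out conclusion to the same elementary inequalities $\mu \le q$, $\mu \le 1/2$, and $\mu \le 1-q$ respectively. The odds-ratio formulation slightly streamlines the algebra, but the argument is otherwise identical.
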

 	\begin{proof}{\textit{Proof.}}
 		Agent 3 will prefer to opt in if 
 		\begin{align}
 			\bbP(\omega=G|s_3,\alpha_1=\alpha_2=n) - (1-\bbP(\omega=G|s_3,\alpha_1=\alpha_2=n)) > 0.
 			\label{eqn: sequential-agent-3}
 		\end{align}
 		For $\mu \in (1/2,q],$ agent 3's posterior belief given her signal is 
 		\begin{align*}
 			\bbP(\omega=G|s_3,\alpha_1=\alpha_2=n) &= \bbP(\omega=G|s_3,s_1=s_2=b)\\&= \begin{cases}
 				\frac{\mu(1-q)^2q}{\mu(1-q)^2 q +(1-\mu)q^2(1-q)} &\text{ for } \mu \in (1/2,q], s_3=g\\
 				\frac{\mu(1-q)^3}{\mu(1-q)^3+(1-\mu)q^3} &\text{ for } \mu \in (1/2,q], s_3=b
 			\end{cases}.
 		\end{align*}
 		
 		For $s_3 = g$, we have 
 		\begin{align*}
 			\mu(1-q)^2 q -(1-\mu)q^2 (1-q) = (1-q) q [\mu(1-q) - (1-\mu)q] \le 0.
 		\end{align*}
 		The proof is analogous for $s_2 = b$.
 		As a result, for $\mu \in (1/2,q]$ agent 3  prefers to opt out regardless of his own signal. 
 		
   	For $\mu \in (1-q, 1/2],$ agent 3 infers that $s_1 =b.$ Agent 3's posterior belief equals 
 		\begin{align*}
 			\bbP(\omega=G|s_3,\alpha_1=\alpha_2=n) & = \bbP(\omega=G|s_3,s_1=b) \\
 			&= \begin{cases}
 				\mu &\text{ for } \mu \in (1-q, 1/2], s_3=g\\
 				\frac{\mu (1-q)^2}{\mu (1-q)^2 + (1-\mu)q^2} &\text{ for } \mu \in (1-q, 1/2], s_3=b.  
 			\end{cases}
 		\end{align*}
 		Then Equation \eqref{eqn: sequential-agent-3} is infeasible because $$\mu - (1-\mu)  =2\mu -1 \leq 0.$$
 		
 		Finally, for $\mu \le 1-q,$
 		\begin{align*}
 			\bbP(\omega=G|s_3,\alpha_1=\alpha_2=n) &= \bbP(\omega=G|s_3) \\&= \begin{cases}
 				\frac{\mu q}{\mu q + (1-\mu)(1-q)} &\text{ for } \mu \le 1-q, s_3=g\\
 				\frac{\mu (1-q)}{\mu (1-q) + (1-\mu)q} &\text{ for } \mu \le 1-q, s_3=b  
 			\end{cases}
 		\end{align*}
 		where we have
 		\begin{align*}
 			\mu (1-q) - (1-\mu)q < \mu q - (1-\mu)(1-q) 
 			= \mu + q - 1
 			\le 0,
 		\end{align*}
 		making Equation \eqref{eqn: sequential-agent-3} infeasible.
 	\end{proof}
 	
 	
 	Now we extend Lemma~\ref{claim: sequential-agent-3} to any remaining agent in the queue. Let $\alpha_i$ denote the optimal action of agent $i.$
 	\begin{lemma}
 		In the sequential offering mechanism, any agent other than the first two always opts-out. That is, for any $i \ge 3:$
 		$$\alpha_i = n. $$
 		\label{lemma: sequential-after-3}
 	\end{lemma}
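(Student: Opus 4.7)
The plan is to prove the lemma by induction on $i \ge 3$, using Lemma~\ref{claim: sequential-agent-3} as the base case. The conceptual content is that once a cascade of opt-outs has begun (which Lemma~\ref{claim: sequential-agent-3} establishes at position $3$), the actions of subsequent agents become uninformative, and hence every later agent faces exactly the same inference problem as agent $3$ and makes the same decision.

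More concretely, the base case $i=3$ is Lemma~\ref{claim: sequential-agent-3}. For the inductive step, assume that for every $k$ with $3 \le k < i$ we have $\alpha_k = n$ regardless of $s_k$. Since agents' strategies are common knowledge, agent $i$ knows that each of these opt-outs carries no information: for every $k \in \{3,\ldots,i-1\}$ and every realization of $\omega$ and the earlier signals,
\[
\bbP(\alpha_k = n \mid \omega, s_1,\ldots,s_{k-1}, s_k = g) = \bbP(\alpha_k = n \mid \omega, s_1,\ldots,s_{k-1}, s_k = b) = 1.
\]
By Bayes' rule, conditioning on an event of probability one does not change beliefs, so
\[
\bbP(\omega = G \mid s_i,\, \alpha_1 = \cdots = \alpha_{i-1} = n) = \bbP(\omega = G \mid s_i,\, \alpha_1 = \alpha_2 = n).
\]
Thus agent $i$'s posterior belief coincides exactly with the posterior analyzed for agent $3$ in the proof of Lemma~\ref{claim: sequential-agent-3}. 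Replaying the three case analyses there ($\mu \in (1/2,q]$, $\mu \in (1-q,1/2]$, and $\mu \le 1-q$, noting that the regime $\mu > q$ never reaches agent $i$ since agent $1$ accepts), the expected utility from opting in is non-positive for either signal realization, so $\alpha_i = n$.

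The only delicate point I anticipate is the posterior-equivalence step: one must be careful to verify that the agents' strategies from position $3$ onward are indeed common knowledge and that the induction hypothesis gives a deterministic opt-out (not merely an opt-out on path), so that the resulting event has conditional probability one under both $\omega = G$ and $\omega = B$. Once this is stated cleanly, the computation reduces entirely to the one already performed for agent $3$, and no new algebra is required.
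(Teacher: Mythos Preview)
Your proposal is correct and follows essentially the same approach as the paper: induction on $i$ with Lemma~\ref{claim: sequential-agent-3} as the base case, and the observation that a signal-independent opt-out is uninformative, so each subsequent agent inherits the same posterior and hence also opts out. Your write-up is in fact more explicit about the Bayes-rule step than the paper's short argument, but the underlying idea is identical.
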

 	\begin{proof}{\textit{Proof.}}
 		We use mathematical induction on $i$ to prove the statement. For the basis $i=3$, we have that due to Lemma~\ref{claim: sequential-agent-3} agent 3 always opts out thus his action is always uninformative to the subsequent agents. 
 		Next, as the inductive step, consider some agent $i.$ Suppose that agent $i$ updates his posterior belief in some manner such that it is optimal for him to opt out regardless of $s_i.$ Because this action is uninformative, the next agent $i+1$ must update his posterior belief the same way as agent $i.$ Therefore, it is also optimal for agent $i+1$ to opt out.
 	\end{proof}
 	Another way to interpret Lemma~\ref{lemma: sequential-after-3} is that if neither agent 1 nor agent 2 opts in, then the object will be discarded. Using this result along with Lemma~\ref{claim: sequential-agent-1} and Lemma~\ref{claim: sequential-agent-2} we characterize the outcome of the sequential offering mechanism (Lemma~\ref{lemma: sequential-allocation} in Section~\ref{section: seq_benchmark}): 
 	
 	\lemmasequentialallocation*
  
 	\begin{proof}{\textit{Proof.}}
 		The proof follows directly from combining Lemmas~\ref{claim: sequential-agent-1} to \ref{lemma: sequential-after-3}.
 	\end{proof}

 	\propositionseqcorrect*

 	\begin{proof}{\textit{Proof.}}
 		Part (i) directly follows from Lemma~\ref{lemma: sequential-allocation}. For part (ii), notice that for $\mu \in (1/2,q],$ the object is allocated if either $s_1=g$ or $s_2=g.$ Conditional on $\omega=G,$ this happens with probability $1-(1-q)^2.$ The object is discarded if $s_1=s_2=b,$ which happens with $q^2$ conditional on $\omega=B.$ Recall that $\bbP(\omega=G) = \mu.$ Therefore the correctness of the outcome is $\mu\left( 1-(1-q)^2\right) + (1-\mu)q^2 = 2\mu q(1-q)+q^2.$
 		
 		For $\mu \in (1-q,1/2]$ the object is allocated if  $s_1=g.$ Conditional on $\omega=G,$ the object is allocated with probability $q.$ The object is discarded when $s_1=b$ and conditional on $\omega=B$ this happens with probability $q.$ Hence the outcome is correct with probability $q.$
 	\end{proof}

 	\section{Proofs for Section \ref{section: voting-mechanisms}}
 	\label{appendix: single_batch_proofs}
 	\subsection{Proofs for Section \ref{section: single-batch}}
 	 
     \lemmaiccombined*
 	\begin{proof}{\textit{Proof.}}
 		Proof of (i):
 		Consider a single-batch batching mechanism $V_{\{K\}}.$ Suppose that agent $i$ chooses to opt in. Let $\mathcal{G}_K$ be the probability that the object gets allocated to agent $i$ conditional on the object quality being good. Similarly, let $\mathcal{B}_K$ be the probability that the object gets allocated to agent $i$ conditional on the object quality being bad. Then $\mathcal{G}_K$ and $\mathcal{B}_K$ can be computed as
 		\begin{align*}
 			\mathcal{G}_K &= \sum_{y=\frac{K+1}{2}}^K \frac{1}{y} {K-1 \choose y-1} q^{y-1}(1-q)^{K-y}\\
 			\mathcal{B}_K &= \sum_{y=\frac{K+1}{2}}^K \frac{1}{y} {K-1 \choose y-1} q^{K-y}(1-q)^{y-1}.
 		\end{align*}
 		Given private signal $s_i,$ his expected utility conditional on each action is
 		\begin{align*}
 			u_i(y;s_i) &= \bbP(\omega=G | s_i) \mathcal{G}_K - \bbP(\omega=B | s_i=g) \mathcal{B}_K \\
 			u_i(n;s_i) &= 0.
 		\end{align*}
 		Then the mechanism $V_{\{K\}}$ is incentive-compatible if $u_i(y;g) \ge u_i(n;g)$ and $u_i(y;b) \le u_i(n;b).$ That is, $V_{\{K\}}$ is incentive-compatible for prior $\mu$ if
 		\begin{align}
 			u_i(y;g) \ge u_i(n;g) &\iff \bbP(\omega=G | s_i=g) \mathcal{G}_K - \bbP(\omega=B | s_i=g) \mathcal{B}_K \ge 0 \nonumber
 			\\
 			&\iff
 			\frac{\bbP(\omega=G) \bbP(s_i=g|\omega=G)  \mathcal{G}_K}{\bbP(s_i=g)} - \frac{\bbP(\omega=B) \bbP(s_i=g|\omega=B) \mathcal{B}_K}{\bbP(s_i=g)} \ge 0  \nonumber \\
 			&\iff
 			\mu q \mathcal{G}_K - (1-\mu)(1-q)\mathcal{B}_K \ge 0
 			\label{eqn: IC-good-signal}
 		\end{align}
 		and 
 		\begin{align}
 			u_i(y;b) \ge u_i(n;b) &\iff \bbP(\omega=G | s_i=b) \mathcal{G}_K - \bbP(\omega=B | s_i=b) \mathcal{B}_K \le 0 \nonumber
 			\\
 			&\iff
 			\frac{\bbP(\omega=G) \bbP(s_i=b|\omega=G)  \mathcal{G}_K}{\bbP(s_i=b)} - \frac{\bbP(\omega=B) \bbP(s_i=b|\omega=B) \mathcal{B}_K}{\bbP(s_i=b)} \le 0  \nonumber\\
 			&\iff
 			\mu (1-q) \mathcal{G}_K - (1-\mu)q \mathcal{B}_K \le 0.
 			\label{eqn: IC-bad-signal}
 		\end{align}
 		Given that the left-hand sides of both \eqref{eqn: IC-good-signal} and \eqref{eqn: IC-bad-signal} monotonically increases with $\mu,$ there should be some thresholds of prior, namely $\underline{\mu}_K \in (0,1)$ and $\overline{\mu}_K \in (0,1)$, such that
 		$$\mu q \mathcal{G}_K - (1-\mu)(1-q)\mathcal{B}_K \ge 0 \quad \forall \mu \ge \underline{\mu}_K$$ 
 		$$\mu (1-q) \mathcal{G}_K - (1-\mu)q \mathcal{B}_K \le 0 \quad \forall \mu \ge \overline{\mu}_K,$$
 		where the thresholds solve the following indifference conditions:
 		\begin{align}
 			\frac{\underline{\mu}_K}{1-\underline{\mu}_K} &= \frac{1-q}{q} \frac{\mathcal{B}_K}{\mathcal{G}_K} 
 			\label{eqn: under-mu}\\
 			\frac{\overline{\mu}_K}{1-\overline{\mu}_K} &= \frac{q}{1-q} \frac{\mathcal{B}_K}{\mathcal{G}_K}.
 			\label{eqn: over-mu}
 		\end{align}
 		Because $q > \frac{1}{2}$ it naturally follows that $\underline{\mu}_K < \overline{\mu}_K.$ Therefore, there exists a threshold policy such that any given $V_{\{K\}}$ is incentive-compatible for a prior $\mu$ that satisfies $\underline{\mu}_K \le \mu \le \overline{\mu}_K.$ 
 			
 		Next we fully characterize $\underline{\mu}_K$ and $\overline{\mu}_K.$ To do so, we use Lemma~\ref{property: combinatorial} to write
 		$$\mathcal{G}_K = \frac{1}{q} \cdot \frac{1}{K} \sum_{y=\frac{K+1}{2}}^{K} {K \choose y} q^y (1-q)^{K-y} $$
 		$$\mathcal{B}_K = \frac{1}{1-q} \cdot \frac{1}{K} \sum_{y=0}^{\frac{K+1}{2}-1} {K \choose y} q^y (1-q)^{K-y}$$
 		and therefore 
 		\begin{align}
 			\frac{\mathcal{B}_K}{\mathcal{G}_K} = \frac{q}{1-q} \cdot \frac{1-\bbP\left(X_K \ge \frac{K+1}{2}\right)}{\bbP\left(X_K \ge \frac{K+1}{2}\right)}
 			\label{eqn: B/G-simple}
 		\end{align}
 		where $X_K$ is a $\text{binomial}(K,q)$ random variable. Plugging this into \eqref{eqn: under-mu} and \eqref{eqn: over-mu} we obtain 
 		\begin{align}
 			\underline{\mu}_K &= 1 - \bbP\left(X \ge \frac{K+1}{2}\right) \label{eqn: majority-lower-mu-tail}\\
 			\overline{\mu}_K &= \frac{q^2 \left( 1 - \bbP\left(X \ge \frac{K+1}{2}\right)\right)}{q^2 \left( 1 - \bbP\left(X \ge \frac{K+1}{2}\right)\right) + (1-q)^2 \bbP\left(X \ge \frac{K+1}{2}\right)} \label{eqn: majority-upper-mu-tail}\\
 			&= \frac{q^2 \underline{\mu}_K}{q^2 \underline{\mu}_K + (1-q)^2 (1-\underline{\mu}_K)}.\nonumber
 		\end{align}
 		Letting $\mathcal{I}_K \triangleq \left( \underline{\mu}_K, \overline{\mu}_K \right)$ concludes the proof.

 		Proof of (ii): 
 		By Lemmas \ref{lemma: ic-interval-decrease} and \ref{lemma: ic-intervals-overlap} (stated below), the set of $K$ such that $\mu \in \mathcal{I}_K$  must be comprised of consecutive odd numbers. Therefore, for any $K$ that satisfies $\underline{K}(\mu) \le K \le \overline{K}(\mu),$ the corresponding $V_{\{K\}}$ must be incentive-compatible. 
 	\end{proof}

 	We show two technical lemmas with regards to $\mathcal{I}_K$ that were used to prove Lemma~\ref{lemma: ic-combined}. In doing so, we make use of Lemma~\ref{property: tail-increase}.
 	
 	\begin{restatable}[Decreasing $\mathcal{I}_K$]{lemma}{lemmaintervaldecrease}
 		$\mathcal{I}_K$ decreases with $K.$ That is, the endpoints $\overline{\mu}_K$ and $\underline{\mu}_K$ both decrease with $K.$ In particular, $\overline{\mu}_1=q$ and $\lim_{K \rightarrow \infty} \underline{\mu}_K = 0.$
 		\label{lemma: ic-interval-decrease}
 	\end{restatable}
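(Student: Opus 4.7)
The plan is to read off both endpoints from the closed-form expressions derived inside the proof of Lemma~\ref{lemma: ic-combined}, namely
\[
\underline{\mu}_K = 1 - \bbP\!\left(X_K \ge \tfrac{K+1}{2}\right), \qquad \overline{\mu}_K = \frac{q^2\,\underline{\mu}_K}{q^2\,\underline{\mu}_K + (1-q)^2(1-\underline{\mu}_K)},
\]
where $X_K \sim \text{Binomial}(K,q)$, and then reduce monotonicity of the interval to monotonicity of a single quantity: the tail probability $\bbP(X_K \ge (K+1)/2)$ over odd $K$.

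First I would dispatch $\underline{\mu}_K$. By Lemma~\ref{property: tail-increase} (invoked in the excerpt immediately after Lemma~\ref{lemma: existence-and-correctness}), $\bbP(X_K \ge (K+1)/2)$ is increasing in $K$ along odd $K$, so $\underline{\mu}_K = 1-\bbP(X_K \ge (K+1)/2)$ is decreasing in $K$. Next I would handle $\overline{\mu}_K$ by observing that the map $f(x) = \tfrac{q^2 x}{q^2 x + (1-q)^2(1-x)}$ is a strictly increasing M\"obius-type function on $(0,1)$ (one can verify $f'(x) > 0$ directly, since the numerator of $f'$ simplifies to $q^2(1-q)^2 > 0$). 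Because $\overline{\mu}_K = f(\underline{\mu}_K)$ and $\underline{\mu}_K$ is decreasing in $K$, the composition $\overline{\mu}_K$ is also decreasing in $K$.

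For the boundary statements, I would plug in $K=1$: then $\bbP(X_1 \ge 1) = q$, so $\underline{\mu}_1 = 1-q$ and
\[
\overline{\mu}_1 = \frac{q^2(1-q)}{q^2(1-q) + (1-q)^2 q} = \frac{q^2(1-q)}{q(1-q)\bigl[q + (1-q)\bigr]} = q,
\]
as claimed. For the limit $\lim_{K\to\infty}\underline{\mu}_K = 0$, I would invoke the weak law of large numbers: since $q > 1/2$, the sample mean $X_K/K$ converges in probability to $q > 1/2$, so $\bbP(X_K \ge (K+1)/2) = \bbP(X_K/K \ge 1/2 + 1/(2K)) \to 1$, and hence $\underline{\mu}_K \to 0$.

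The only nontrivial ingredient is the monotonicity of the binomial median-tail $\bbP(X_K \ge (K+1)/2)$ in odd $K$, which is precisely Lemma~\ref{property: tail-increase} and is the reason the argument is restricted to odd batch sizes (as noted in the modeling discussion in Section~\ref{section: model}). Everything else reduces to elementary algebra and a single application of the law of large numbers.
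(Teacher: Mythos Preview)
Your proposal is correct and follows essentially the same route as the paper: both reduce the monotonicity of the endpoints to Lemma~\ref{property: tail-increase} applied to the closed forms $\underline{\mu}_K = 1-\bbP(X_K\ge (K+1)/2)$ and $\overline{\mu}_K = f(\underline{\mu}_K)$, compute $\overline{\mu}_1=q$ directly, and obtain $\underline{\mu}_K\to 0$ from $\bbP(X_K\ge (K+1)/2)\to 1$ (the paper cites the limit half of Lemma~\ref{property: tail-increase}, which is itself proved via the weak law of large numbers you invoke). Your explicit verification that $f(x)=q^2x/(q^2x+(1-q)^2(1-x))$ is increasing is a nice touch the paper leaves implicit, but it does not change the substance of the argument.
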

 	\begin{proof}{\textit{Proof.}}
 		Recall that the endpoints $\underline{\mu}_K$ and $\overline{\mu}_K$ are computed as Equations \eqref{eqn: majority-lower-mu-tail}, both of which by Lemma~\ref{property: tail-increase}, decrease with $K.$ To see the second part of the statement, since $$\bbP\left( X_1 \ge 1 \right) = \bbP\left(X_1 = 1\right) = q$$ it follows that
 		$\overline{\mu}_1 = \frac{q^2(1-q)}{q^2(1-q) + (1-q)^2 q} = q.$ Moreover, 
 		\begin{align*}
 			\lim_{K \rightarrow \infty} \underline{\mu}_K = \lim_{K\rightarrow\infty} \frac{q^2 \left(1 - \bbP\left(X_K \ge \frac{K+1}{2}\right)\right)}{q^2 \left(1 - \bbP\left(X_K \ge \frac{K+1}{2}\right)\right) + (1-q)^2 \bbP\left(X_K \ge \frac{K+1}{2}\right)} 
 			= 0
 		\end{align*}
 		where the last equality is due to $\lim_{K \rightarrow \infty} \bbP\left(X_K \ge \frac{K+1}{2}\right) = 1$ by Lemma~\ref{property: tail-increase}.
 	\end{proof}
 	
 	\begin{restatable}[Overlapping $\mathcal{I}_K$]{lemma}{lemmaintervalnoskipping}
 		For any batch size $K \ge 3,$ the interval $\mathcal{I}_K$ intersects with $\mathcal{I}_{K-2}.$ That is, 
 		$$\underline{\mu}_K < \underline{\mu}_{K-2} < \overline{\mu}_K < \overline{\mu}_{K-2}.$$
 		\label{lemma: ic-intervals-overlap}
 	\end{restatable}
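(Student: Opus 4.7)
The plan is to observe first that, by Lemma~\ref{lemma: ic-interval-decrease}, both $\underline{\mu}_K$ and $\overline{\mu}_K$ are strictly decreasing in $K$, which immediately yields the outer inequalities $\underline{\mu}_K < \underline{\mu}_{K-2}$ and $\overline{\mu}_K < \overline{\mu}_{K-2}$. The substantive content is the middle inequality $\underline{\mu}_{K-2} < \overline{\mu}_K$, which says that consecutive incentive-compatible intervals actually overlap rather than sliding past one another.

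I would first rewrite the middle inequality in odds-ratio form. Setting $p_K \triangleq \bbP\bigl(X_K \ge (K+1)/2\bigr)$, the definitions in Lemma~\ref{lemma: ic-combined} give $\underline{\mu}_K/(1-\underline{\mu}_K) = (1-p_K)/p_K$ and $\overline{\mu}_K/(1-\overline{\mu}_K) = q^2(1-p_K)/[(1-q)^2 p_K]$. In particular, the log-odds width of each $\mathcal{I}_K$ is the constant $2\log(q/(1-q))$, independent of $K$, so the overlap condition reduces to controlling the shift between $\underline{\mu}_{K-2}$ and $\underline{\mu}_K$. Cross-multiplying, the target $\underline{\mu}_{K-2} < \overline{\mu}_K$ is equivalent to the clean binomial inequality
\begin{equation*}
\frac{p_K\,(1-p_{K-2})}{(1-p_K)\,p_{K-2}} < \frac{q^2}{(1-q)^2}.
\end{equation*}

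Next I would obtain a closed form for $p_K - p_{K-2}$ by writing $X_K = X_{K-2} + Y$ with $Y \sim \text{Binomial}(2,q)$ independent and conditioning on $Y$. Using the identity $\bbP(X_{K-2} = (K-3)/2) = \tfrac{1-q}{q}\,\alpha$ with $\alpha \triangleq \bbP\bigl(X_{K-2} = (K-1)/2\bigr)$ (the combinatorial factors $\binom{K-2}{(K-1)/2}$ and $\binom{K-2}{(K-3)/2}$ coincide because these indices sum to $K-2$), a short calculation yields $p_K - p_{K-2} = \alpha(1-q)(2q-1)$. Substituting this into the target inequality and clearing denominators, the claim reduces to
\begin{equation*}
\alpha(1-q)\bigl[(1-p_{K-2})(1-q)^2 + p_{K-2}\,q^2\bigr] < p_{K-2}\,(1-p_{K-2}).
\end{equation*}

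Finally I would close the argument by chaining three elementary bounds: (i) $(1-p)(1-q)^2 + p\,q^2 \le q^2$, since $q > 1/2$; (ii) $\alpha(1-q) \le q(1-p_{K-2})$, since $1-p_{K-2} \ge \bbP(X_{K-2} = (K-3)/2) = \alpha(1-q)/q$; and (iii) $p_{K-2} \ge p_1 = q > q^3$, using Lemma~\ref{property: tail-increase} for the first step and $q \in (1/2, 1)$ for the second. Composing (i) and (ii) bounds the left-hand side by $q^3(1-p_{K-2})$, and (iii) finishes the chain via $q^3(1-p_{K-2}) < p_{K-2}(1-p_{K-2})$. The main obstacle I anticipate is the algebraic reduction to the crisp form above; once that is in place, the three bounds dovetail so that the $q^2$ from (i), the extra $q$ from (ii), and the tail bound $q < p_{K-2}$ from (iii) combine without requiring any delicate estimate on the central mass $\alpha$ itself.
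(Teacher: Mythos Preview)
Your argument is correct and genuinely different from the paper's. The paper (Appendix~\ref{appendix: overlap-proof}) works through the ratio $\frac{\mathcal{B}_K}{\mathcal{G}_K}\frac{\mathcal{G}_{K-2}}{\mathcal{B}_{K-2}}$, expands the binomial sums into cross-products, groups them into diagonal ($\eta_q,\eta_{1-q}$) and off-diagonal ($\psi_q,\psi_{1-q}$) pieces, and then bounds each termwise via the auxiliary Lemmas~\ref{property: sum-min} and~\ref{property: supp-claim-shrink} together with a combinatorial inequality $\binom{K}{z}\binom{K-2}{y}>\binom{K}{y}\binom{K-2}{z}$. You instead couple $X_K=X_{K-2}+Y$ to get the closed form $p_K-p_{K-2}=\alpha(1-q)(2q-1)$, reduce the overlap inequality to $\alpha(1-q)\bigl[(1-p_{K-2})(1-q)^2+p_{K-2}q^2\bigr]<p_{K-2}(1-p_{K-2})$, and finish with three one-line bounds. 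Your route is more direct and avoids the double-sum bookkeeping and the two auxiliary lemmas; the paper's approach, by contrast, stays entirely at the level of binomial sum manipulations and never isolates the increment $p_K-p_{K-2}$. One small remark: your bound~(ii) holds with equality when $K=3$ (since then $1-p_1=\bbP(X_1=0)=b$), but the strictness of~(i) already secures the strict overall inequality, so the chain still closes.
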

 	\begin{proof} {\textit{Proof sketch.}} 
 		The first and third inequalities are immediate from Lemma \ref{lemma: ic-interval-decrease}. Showing the second inequality involves finding appropriate bounds for $\frac{\mathcal{B}_K}{\mathcal{G}_K} \frac{\mathcal{G}_{K-2}}{\mathcal{B}_{K-2}}.$ Due to the length of the proof, we defer this part to Appendix~\ref{appendix: overlap-proof}.
 	\end{proof}
 	
 	Figure \ref{fig: feasible-priors} illustrates Lemmas \ref{lemma: ic-interval-decrease} and \ref{lemma: ic-intervals-overlap}.

 	\lemmaexistencecorrectness* 
 	\begin{proof}{\textit{Proof.}}
 		We first show the existence of incentive-compatible $V_{\{K\}}$ for any $\mu < q.$ Consider some $\mathcal{K}.$ By Lemmas \ref{lemma: ic-interval-decrease} and \ref{lemma: ic-intervals-overlap}, 
 		$$\bigcup_{K=1}^{\mathcal{K}} \mathcal{I}_K = (\underline{\mu}_{\mathcal{K}},q)$$
 		which implies that that no value of prior $\mu$ between $\underline{\mu}_{\mathcal{K}}$ and $q$ is skipped. Using the asymptotic result from the second part of Lemma \ref{lemma: ic-interval-decrease},
 		$$\lim_{{\mathcal{K}} \rightarrow \infty} \bigcup_{K=1}^{{\mathcal{K}}} \mathcal{I}_K = \left(\lim_{{\mathcal{K}} \rightarrow \infty} \underline{\mu}_{\mathcal{K}},q\right) = (0,q).$$
 		Therefore for any $\mu$ such that $\mu \le q,$ there exists at least one $K$ such that $\mu \in \mathcal{I}_K.$

 		For the second part, we have
 		\begin{align*}
 			c\left(V_{\{K\}}\right) &= \mu \, \bbP_{V_{\{K\}}}( Z = 1 \mid \w = G) + (1-\mu) \, \bbP_{V_{\{K\}}} ( Z = 0  \mid \w = B) \\
 			&= \mu \bbP\left(Y \ge \frac{K+1}{2} \ \middle|\  \w = G \right) + (1-\mu)\bbP\left(Y < \frac{K+1}{2} \ \middle|\  \w = B\right)\\
 			&= \bbP\left(X_K \ge \frac{K+1}{2}\right).
 		\end{align*}
 	\end{proof}

 	\lemmaoptimalsinglebatch*
 	\begin{proof}{\textit{Proof.}}
 		By Lemma~\ref{lemma: existence-and-correctness},  $c\left(V_{\{K\}}\right)= \bbP\left(X_K \ge \frac{K+1}{2}\right)$, which by Lemma~\ref{property: tail-increase}, increases with $K.$
 	\end{proof}
 	
 	\lemmacomparative*
 	\begin{proof}{\textit{Proof.}}
 		See the proof of Lemma \ref{lemma: ic-interval-decrease}.
 	\end{proof}

 	\subsection{Proofs for Section \ref{section: multi-batch}}
 	\label{appendix: proofs-multi-batch}
 	\lemmaicgreedy*
 	\begin{proof}{\textit{Proof.}}
 		Each agent participates in at most one batch. Furthermore, agents in the same batch $j$ share the same belief $\mu_{j-1}$ with the planner. Thus, given the current belief $\mu_{j-1}$ and batch size $K_j$ (given the outcome of batch $j-1$), the mechanism $V_{\{\pi_j\}_{j=1}^\infty}$ restricted to batch $j$ is equivalent to a single-batch mechanism $V_{\{K_j\}}$ with a current belief $\mu_{j-1}$.
 		Therefore, by induction, the mechanism $V_{\{\pi_j\}_{j=1}^\infty}$  is incentive-compatible if and only if for each batch $j,$ the single-batch mechanism $V_{\{K_j\}}$ with belief $\mu_{j-1}$ is incentive-compatible. 
 	\end{proof}

 	\begin{restatable}{lemma}{corollaryimpossibility}
 		For $\mu \geq q,$ there is no incentive-compatible 
 		batching mechanism; under any $V \in \mathcal{V},$ every agent opts in. 
 		\label{corollary: impossibility}
 	\end{restatable}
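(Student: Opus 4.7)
The plan has two parts, each following from the machinery already developed for single-batch mechanisms. For the impossibility, I would first show that no single-batch $V_{\{K\}}$ is IC when $\mu \geq q$. By Lemma~\ref{lemma: ic-combined}, IC is equivalent to $\mu \in \mathcal{I}_K = (\underline{\mu}_K, \overline{\mu}_K)$, and Lemma~\ref{lemma: ic-interval-decrease} gives $\overline{\mu}_K \leq \overline{\mu}_1 = q$ for every odd $K$; hence $\mathcal{I}_K \subseteq (0, q)$ and no $\mu \geq q$ belongs to any $\mathcal{I}_K$. To extend to arbitrary multi-batch mechanisms $V = V_{\{\pi_j\}_{j=1}^\infty} \in \mathcal{V}$, I would invoke Lemma~\ref{lemma: IC-greedy}, which reduces IC of the full mechanism to IC of each batch viewed as a single-batch mechanism under the realized belief $\mu_{j-1}$. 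Since the very first batch has $\mu_0 = \mu \geq q$, no choice of $K_1 = \pi_1(\mu)$ can satisfy the single-batch IC condition, and therefore no $V \in \mathcal{V}$ is IC.

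For the behavioral claim, the key computation is the posterior of an agent with the pessimistic signal:
\[
\bbP(\omega = G \mid s = b) = \frac{\mu(1-q)}{\mu(1-q) + (1-\mu)q},
\]
which is at least $1/2$ if and only if $\mu \geq q$. Consider the strategy profile in which every agent opts in regardless of signal. Since opt-in under this profile is independent of the private signal, the event ``allocated to agent $i$'' carries no additional information about other agents' signals, so agent $i$'s posterior conditional on being allocated is simply $\bbP(\omega = G \mid s_i)$. His expected utility from opting in is therefore $\frac{1}{K_j}\bigl(2\bbP(\omega = G \mid s_i) - 1\bigr) \geq 0$ for either signal, with strict inequality when $\mu > q$. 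Hence universal opt-in is a best response in the first batch, which is all that matters since the majority condition is trivially met and the object is always allocated there; no later batch is ever reached.

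The argument is essentially corollary-level given the earlier lemmas, so the main work is just assembling the pieces. The only subtle point is the boundary $\mu = q$ exactly, where $b$-signal agents are indifferent between opting in and opting out; there, universal opt-in holds only as a weak best response, but this suffices for the correctness equivalence asserted in Theorem~\ref{theorem: main-result} since the induced allocation coincides with that of $V_\textsc{seq}$. I would flag this boundary case briefly rather than dwell on tie-breaking details.
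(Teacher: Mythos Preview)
Your proposal is correct and follows essentially the same route as the paper: reduce to single-batch via Lemma~\ref{lemma: IC-greedy}, then use Lemma~\ref{lemma: ic-interval-decrease} to conclude $\overline{\mu}_K \leq \overline{\mu}_1 = q$ for all odd $K$, so $\mu \geq q$ lies outside every $\mathcal{I}_K$. Your treatment of the behavioral claim (verifying universal opt-in as a best response and flagging the $\mu = q$ boundary) is in fact more detailed than the paper's, which simply asserts that all agents opt in and that the object is therefore always allocated.
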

 	\begin{proof}{\textit{Proof.}}
 		By Lemma~\ref{lemma: IC-greedy}, it suffices to consider single-batch  mechanisms. For any $K,$ Lemma \ref{lemma: ic-interval-decrease} implies that $$\overline{\mu}_K \le \max_{K} \overline{\mu}_K = \overline{\mu}_1= q.$$ Therefore for all $\mu \ge q,$ there exists no $K$ such that $\mu < \overline{\mu}_K.$ For these priors, every agent will choose (possibly untruthfully) $\alpha_i=y$ under any batching mechanism, and therefore the mechanism always allocates the object (and terminates after one batch). In expectation, such an outcome is correct with probability $\mu.$ Notice that the correctness of this outcome is equivalent to that under $V_\textsc{seq}$ for $\mu \ge q.$
 	\end{proof}

 	\begin{lemma}
 		Suppose that the object is offered to batch $j$ but is not allocated. Then $\mu_j < \mu_{j-1}.$
 		\label{lemma: posterior-after-failure}
 	\end{lemma}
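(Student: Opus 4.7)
The plan is to reduce the claim to a direct inspection of the Bayesian update formula \eqref{eq: prior_mu_j}, and in particular to the sign of the log-likelihood ratio between the two hypotheses $\omega=G$ and $\omega=B$ given the observed vote profile of batch $j$.

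First, I would rewrite the update in odds-ratio form. Starting from \eqref{eq: prior_mu_j}, a direct manipulation gives
\begin{equation*}
\frac{\mu_j}{1-\mu_j} \;=\; \frac{\mu_{j-1}}{1-\mu_{j-1}}\cdot\left(\frac{q}{1-q}\right)^{2Y_j - K_j}.
\end{equation*}
Since the map $x\mapsto x/(1-x)$ is strictly increasing on $(0,1)$, the inequality $\mu_j < \mu_{j-1}$ is equivalent to the likelihood factor $\bigl(q/(1-q)\bigr)^{2Y_j-K_j}$ being strictly less than $1$.

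Next, I would use the two structural facts we have available. First, every batch size $K_j$ is odd by the standing assumption of the model, so the majority threshold $\lceil K_j/2\rceil$ equals $(K_j+1)/2$. Second, a failed allocation in batch $j$ means exactly $Y_j < (K_j+1)/2$, i.e.\ $2Y_j - K_j \le -1$. Combined with the fact that $q\in(0.5,1)$ gives $q/(1-q) > 1$, this yields
\begin{equation*}
\left(\frac{q}{1-q}\right)^{2Y_j - K_j} \;\le\; \left(\frac{q}{1-q}\right)^{-1} \;<\; 1,
\end{equation*}
and plugging back into the odds-ratio identity immediately gives $\mu_j < \mu_{j-1}$.

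The proof is essentially one line of algebra, so there is no real obstacle; the only point worth being careful about is using oddness of $K_j$ to guarantee the strict inequality $2Y_j - K_j \le -1$ (an even batch would allow $2Y_j = K_j$, i.e.\ a tie, in which case the posterior would be unchanged). Note also that this statement is a purely Bayesian fact about the update rule in \eqref{eq: prior_mu_j}: it does not require incentive-compatibility of the mechanism, since the planner and the agents update their common belief using exactly the observed vote counts $(K_j, Y_j)$ regardless of how those votes were generated.
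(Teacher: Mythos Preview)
Your proof is correct and follows essentially the same approach as the paper: both use that a failed allocation means $Y_j \le (K_j-1)/2$, hence the Bayesian update corresponds to at least one net negative signal, which strictly lowers the posterior. Your odds-ratio formulation is a slightly cleaner packaging of the same computation the paper carries out directly on the fraction in \eqref{eq: prior_mu_j}.
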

 	\begin{proof}{\textit{Proof.}}
 		Object is not allocated to batch $j$ if $Y_j \le \frac{K_j - 1}{2}.$ Using the posterior update rule in \eqref{eq: prior_mu_j},
 		\begin{align*}
 			\mu_j &= \frac{\mu_{j-1}q^{Y_j}(1-q)^{K_j-Y_j}}{\mu_{j-1}q^{Y_j}(1-q)^{K_j-Y_j} + (1-\mu_{j-1})q^{K_j-Y_j}(1-q)^{Y_j}} \\
 			&\le 
 			\frac{\mu_{j-1}(1-q)}{\mu_{j-1}(1-q)+ (1-\mu_{j-1})q} \\
 			&< \mu_{j-1}.
 		\end{align*}
 	\end{proof}
 	
 	\propjgreedy*
 	\begin{proof}{\textit{Proof.}}
 		Part (i) is by Lemma~\ref{lemma: IC-greedy}. 
 		
 		For part (ii), by Lemma~\ref{lemma: posterior-after-failure}, $\mu_j$ decreases with $j.$ Each batch size $K_j$ is chosen as ${K_j = \overline{K}(\mu_{j-1}),}$ which, by Lemma~\ref{lemma: comparative}, weakly increases with $j.$
 		
 		For part (iii), consider any integer $J$ and the greedy mechanism $V_\textsc{greedy}^J.$ Suppose that the object has not been allocated up to batch $J.$ Then should the planner offer the object to another batch?
 		
 		On the one hand, the expected correctness conditional on stopping at batch $J$ is
 		\begin{align*}
 			\bbP\left(\w=B \mid \mu_J \right)
 			&= 1 - \mu_J.
 		\end{align*}
 		
 		On the other hand, the expected correctness conditional on offering to an additional batch $J+1$ is equal to $c\left(V_{\{K_{J+1}\}}\right),$ which is the correctness of a single-batch batching mechanism  where $K_{J+1} = \overline{K}(\mu_J)$ is the size of batch $J+1$ chosen in a greedy manner.  Then by Lemma~\ref{lemma: existence-and-correctness}, the expected correctness is 
 		\begin{align*}
 			\bbP\left(X_{\overline{K}(\mu_J)} \ge \frac{\overline{K}(\mu_J)+1}{2} \right).
 		\end{align*}
 		Consider the interval $\mathcal{I}_{\overline{K}(\mu_J)}$ that defines the values of incentive-compatible  priors for the single-batch  mechanism $V_{\{\overline{K}(\mu_{J})\}}.$ In particular, consider the lower endpoint of this interval,
 		$\underline{\mu}_{\overline{K}(\mu_J)}.$ Then, because $\mu_J \in \mathcal{I}_{\overline{K}(\mu_J)}$ by construction, 
 		it must be that $\mu_J > \underline{\mu}_{\overline{K}(\mu_J)}.$ Furthermore, we can compute  $$\underline{\mu}_{\overline{K}(\mu_J)} = 1 - \bbP\left(X_{\overline{K}(\mu_J)} \ge \frac{\overline{K}(\mu_J)+1}{2} \right).$$
 		
 		Hence, the
 		expected correctness achieved by an additional batch is
 		\begin{align*}
 				c\left(V_{\{\overline{K}(\mu_J)\}}\right) &=
 			\bbP\left(X_{\overline{K}(\mu_J)} \ge \frac{\overline{K}(\mu_J)+1}{2} \right)\\
 			&= 1 - \underline{\mu}_{\overline{K}(\mu_J)} \\
 			&> 1 - \mu_J.
 		\end{align*}
 	\end{proof}
 	
 	\corollarygreedy*
 	\begin{proof}{\textit{Proof.}}
 		Immediate from Theorem~\ref{theorem: main-result} and Proposition~\ref{prop: greedy-j-properties} (part (iii)). 
 	\end{proof}

 	\subsection{Proof of Theorem~\ref{theorem: main-result}}
 	\label{appendix: main-theorem-proof}
 	\thmmainresult*
 	\begin{proof}{\textit{Proof.}}
 		In Proposition~\ref{prop: sequential-two-agents} we calculate the correctness of $V_{SEQ},$ which will be used as our benchmark. We split the proof into three different cases:
 		\begin{itemize}
 			\item Case I: $\mu < 1-q,$ so by Proposition~\ref{prop: sequential-two-agents}, $$c(V_\textsc{seq})=1-\mu.$$
 			By Lemmas~\ref{lemma: existence-and-correctness}, there must exist an incentive-compatible single-batch  mechanism $V_{\{K\}}$ such that $\mu \in \mathcal{I}_K.$ In particular, under such a mechanism, $$\mu > 1 - \bbP\left( K \ge \frac{K+1}{2} \right) = 1 - c\left(V_{\{K\}}\right).$$ Therefore,
 			$$ c(V_{\{K\}}) > 1-\mu = c(V_\textsc{seq}).$$
 			
 			\item Case II: $\mu \in [1-q,1/2).$
 			$$c(V_\textsc{seq}) = q,$$
 			and more importantly, for these values of prior, $V_\textsc{seq}$ is equivalent to $V_{\{1\}}$ that is incentive-compatible. Therefore, we use the observation (by Lemma~\ref{lemma: existence-and-correctness} and \ref{property: tail-increase}) that the correctness of an incentive-compatible $V_{\{K\}}$ increases with $K.$ 
 			
 			Then it is sufficient to show that there exists an incentive-compatible $V_{\{K\}}$ such that $K \ge 3.$ To this end, we show that $V_{\{3\}}$ is incentive-compatible. The interval $\mathcal{I}_3$ computes to  
 			\begin{align*}
 				\mathcal{I}_3 = \left((1-q)^2(2q+1), \frac{q}{2}+\frac{1}{4}\right)
 			\end{align*}
 			where $(1-q)^2(2q+1) < 1-q$ and $\frac{q}{2}+\frac{1}{4} > \frac{1}{2}.$ Hence $$[1-q,1/2) \subset \mathcal{I}_3,$$ implying that $V_{\{3\}}$ is incentive-compatible for any $\mu \in [1-q,1/2).$ 
 			
 			\item Case III: $\mu \in [1/2, q/2+1/4],$ in which case $\frac{1}{2} \le \mu < q.$ By Proposition~\ref{prop: sequential-two-agents},
 			$$c(V_\textsc{seq}) = 2\mu q(1-q)+q^2.$$ As in Case II, 
 			$$[1/2, q/2+1/4] \subset \mathcal{I}_3,$$ and therefore $V_{\{3\}}$ is incentive-compatible, which achieves (by Lemma~\ref{lemma: existence-and-correctness})
 			$$c\left(V_{\{3\}}\right) = q^3 + 3q^2(1-q) > 2\mu q(1-q)+q^2 = c(V_\textsc{seq}).$$

 			\item Case IV: $\mu \in (q/2 + 1/4, q).$ 
 			$$c(V_\textsc{seq}) = 2\mu q(1-q) + q^2.$$ Notice that $V_{\{3\}}$ is no longer incentive-compatible for these values of $\mu.$ Since $\mathcal{I}_1 = (1-q,q),$ $V_{\{1\}}$ is the only incentive-compatible single-batch  mechanism, which however achieves $c\left(V_{\{1\}}\right) = q < c(V_\textsc{seq}).$ 
 			
 			Motivated by Proposition~\ref{prop: greedy-j-properties} (especially, part (iii)), we turn to batching mechanisms with multiple batches instead. Consider $V_\textsc{greedy}^2.$ Since $V_{\{1\}}$ is the only incentive-compatible single-batch  mechanism, in order for $V_\textsc{greedy}^2$ to be incentive-compatible, it must have $K_1=1.$
 			The correctness of this mechanism is then
 			\begin{align*}
 				c(V_\textsc{greedy}^2) &= \bbE_{V_\textsc{greedy}^2} \left[\bbI(\w=G \cap Z=1) + \bbI(\w=B \cap Z=0) \right] \\
 				&= \bbP\left(Y_1=1 \cap \w=G \mid K_1=1 \right)\\
 				&\qquad + \bbP\left(Y_1=0 \mid K_1=1 \right) \, \max_{K_2} \bbE \left[c(V_{\{K_2\}}) \mid Y_1=0, K_1=0 \right] \\
 				&= \mu q + \left(\mu (1-q) + (1-\mu)q \right) \, \bbE \left[c(V_{\{\overline{K}(\mu_1)\}}) \mid \mu_1 = \frac{\mu(1-q)}{\mu(1-q)+(1-\mu)q} \right].
 			\end{align*}
 			
 			Here, in the event where the object is not allocated to the first batch (i.e., $Y_1=0$), $$\mu_1 \in \left(\frac{\left(\frac{q}{2}+\frac{1}{4}\right)(1-q)}{\left(\frac{q}{2}+\frac{1}{4}\right)(1-q) + \left(1-\frac{q}{2}-\frac{1}{4}\right)q}, \frac{1}{2} \right)$$ each of which is achieved by plugging in $\mu = \frac{q}{2} + \frac{1}{4}$ and $\mu = q$ respectively. In particular, observe that $$\mu_1 < \frac{1}{2} < \frac{q}{2} + \frac{1}{4} = \overline{\mu}_3,$$ where $\overline{\mu}_3$ denotes the upper endpoint of the interval $\mathcal{I}_3.$ This in turn implies 
 			\begin{align*}
 				\overline{K}(\mu_1) \ge \overline{K}(\overline{\mu}_3) = 3
 			\end{align*}
 			where the inequality is by Lemma~\ref{lemma: comparative}. Therefore,
 			\begin{align*}
 				\bbE \left[c(V_{\{\overline{K}(\mu_1)\}}) \mid \mu_1 = \frac{\mu(1-q)}{\mu(1-q)+(1-\mu)q} \right] \ge c(V_{\{3\}}) = q^3 + 3q^2(1-q).
 			\end{align*}
 			Using this observation,
 			\begin{align*}
 				c(V_\textsc{greedy}^2) 
 				\ge \mu q + \left(\mu (1-q) + (1-\mu)q \right) \left(q^3 + 3q^2(1-q) \right)
 			\end{align*}
 			and more importantly,
 			\begin{align*}
 				c(V_\textsc{greedy}^2) - c(V_\textsc{seq}^2) &\ge \mu q + \left(\mu (1-q) + (1-\mu)q \right) \left(q^3 + 3q^2(1-q) \right) - 2\mu q (1-q) - q^2.
 			\end{align*}

 			The partial derivative (with respect to $\mu$) of the inequality's right-hand side is negative. Therefore, plugging in $\mu = q$ yields
 			\begin{align*}
 				c(V_\textsc{greedy}^2) - c(V_\textsc{seq}^2) &> q^2 + 2q (1-q)  \left(q^3 + 3q^2(1-q) \right) - 2q^2 (1-q) - q^2 
 			\end{align*}
 			where the right-hand side term is always positive for any $q \in (0.5,1).$
 			As a result, $$c(V_\textsc{greedy}^2) > c(V_\textsc{seq}).$$

 			Notice in Cases II and III, we use $V_{\{3\}}$ as a sufficient condition. However, one can always further improve correctness by using either the optimal $V_\textsc{greedy}^1$ for the given prior or  any other $V_\textsc{greedy}^J$ (including $V_\textsc{greedy}$ implemented as in Algorithm~\ref{alg: greedy}) with more batches.  
 			\end{itemize}  
 			
 			Finally, the proof of the second statement is given in Lemma~\ref{corollary: impossibility}.
 			\end{proof}

 	\subsection{Remaining Proof of Lemma~\ref{lemma: ic-intervals-overlap}}
 	\label{appendix: overlap-proof}
 	\lemmaintervalnoskipping*
 	\begin{proof}{\textit{Proof.}}
 		The first and third inequalities are immediate from Lemma~\ref{lemma: ic-interval-decrease}. Hence our focus is to show that for any $K\ge 3$,
 		$$\underline{\mu}_{K-2} \le \overline{\mu}_K.$$ By the indifference conditions \eqref{eqn: under-mu} and \eqref{eqn: over-mu}, we can instead show $\frac{1-q}{q} \frac{\mathcal{B}_{K-2}}{\mathcal{G}_{K-2}} < \frac{q}{1-q} \frac{\mathcal{B}_K}{\mathcal{G}_K}$,
 		or equivalently
 		\begin{align*}
 			\left(\frac{1-q}{q}\right)^2 < \frac{\mathcal{B}_K}{\mathcal{G}_K}\frac{\mathcal{G}_{K-2}}{\mathcal{B}_{K-2}},
 		\end{align*}
 		where $\mathcal{G}_K$ (respectively, $\mathcal{B}_K$) are defined in Section~\ref{section: single-batch} as the probability that the object gets allocated to some agent in the batch of size $K$ conditional on the true quality of the object being good (respectively, bad). We show the last inequality in three steps. Each step involves finding a tighter lower bound for $\frac{\mathcal{B}_K}{\mathcal{G}_K}\frac{\mathcal{G}_{K-2}}{\mathcal{B}_{K-2}}$ than the previous step. 
 		
 		First of all, we bound $\frac{\mathcal{B}_K}{\mathcal{G}_K}\frac{\mathcal{G}_{K-2}}{\mathcal{B}_{K-2}}$ using a function of binomial distributions. 
 		
 		Recall the indifference conditions \eqref{eqn: under-mu} and \eqref{eqn: over-mu}:
 		\begin{align*}
 			\frac{\underline{\mu}_K}{1-\underline{\mu}_K} = \frac{1-q}{q} \frac{\mathcal{B}_K}{\mathcal{G}_K}, &\quad
 			\frac{\overline{\mu}_K}{1-\overline{\mu}_K} = \frac{q}{1-q} \frac{\mathcal{B}_K}{\mathcal{G}_K}.
 			\normalsize
 		\end{align*}
 		Rearranging the terms yields
 		\begin{align*}
 			\frac{\mathcal{B}_K}{\mathcal{G}_K} &= \frac{\underline{\mu}_K}{1-\underline{\mu}_K} \frac{q}{1-q} = \frac{1-\bbP\left(X_K \ge \frac{K+1}{2}\right)}{\bbP\left(X_K \ge \frac{K+1}{2}\right)} \frac{q}{1-q} \\
 			\frac{\mathcal{G}_{K-2}}{\mathcal{B}_{K-2}} &= \frac{1-\underline{\mu}_{K-2}}{\underline{\mu}_{K-2}} \frac{1-q}{q} = \frac{\bbP\left(X_{K-2} \ge \frac{K+1}{2} - 2\right)}{1-\bbP\left(X_{K-2} \ge \frac{K+1}{2} - 2\right)} \frac{1-q}{q}.
 		\end{align*}
 		Therefore,
 		\begin{align*}
 			\frac{\mathcal{B}_K}{\mathcal{G}_K}\frac{\mathcal{G}_{K-2}}{\mathcal{B}_{K-2}} &= \frac{1-\bbP\left(X \ge \frac{K+1}{2}\right)}{\bbP\left(X \ge \frac{K+1}{2}\right)}
 			\frac{\bbP\left(X_{K-2} \ge \frac{K+1}{2} - 2\right)}{1-\bbP\left(X_{K-2} \ge \frac{K+1}{2} - 2\right)} \\
 			&= \frac{\sum_{y=0}^{\frac{K+1}{2}-1} {K \choose y} q^y (1-q)^{K-y}}{\sum_{y=0}^{\frac{K+1}{2}-1} {K \choose y} (1-q)^y q^{K-y}} \frac{\sum_{y=0}^{\frac{K+1}{2}-2} {K-2 \choose y} (1-q)^y q^{K-2-y}}{\sum_{y=0}^{\frac{K+1}{2}-2} {K-2 \choose y} q^y (1-q)^{K-2-y}}.
 		\end{align*}
 		Observe that the last expression can be written as
 		\footnotesize
 		\begin{align*}
 			\left(\frac{\sum_{y=0}^{\frac{K+1}{2}-2} {K \choose y} q^y (1-q)^{K-y} 
 				+ {K \choose \frac{K+1}{2}-1} q^{\frac{K+1}{2}-1} (1-q)^{\frac{K+1}{2}}}{\sum_{y=0}^{\frac{K+1}{2}-2} {K \choose y} (1-q)^y q^{K-y}
 				+ {K \choose \frac{K+1}{2}-1} (1-q)^{\frac{K+1}{2}-1} q^{\frac{K+1}{2}}}\right)
 			\left(
 			\frac{\sum_{y=0}^{\frac{K+1}{2}-2} {K-2 \choose y} (1-q)^y q^{K-2-y}}    {\sum_{y=0}^{\frac{K+1}{2}-2} {K-2 \choose y} q^y (1-q)^{K-2-y}}\right)
 		\end{align*}
 		\normalsize
 		which is lower bounded by (by Lemma~\ref{property: supp-claim-shrink})
 		\begin{align*}
 			\frac{\sum_{y=0}^{\frac{K+1}{2}-2} {K \choose y} q^y (1-q)^{K-y}}{\sum_{y=0}^{\frac{K+1}{2}-2} {K \choose y} (1-q)^y q^{K-y}}
 			\frac{\sum_{y=0}^{\frac{K+1}{2}-2} {K-2 \choose y} (1-q)^y q^{K-2-y}}    {\sum_{y=0}^{\frac{K+1}{2}-2} {K-2 \choose y} q^y (1-q)^{K-2-y}}.
 		\end{align*}
 		Hence 
 		\begin{align}
 			\frac{\mathcal{B}_K}{\mathcal{G}_K} > \frac{\sum_{y=0}^{\frac{K+1}{2}-2} {K \choose y} q^y (1-q)^{K-y}}{\sum_{y=0}^{\frac{K+1}{2}-2} {K \choose y} (1-q)^y q^{K-y}}
 			\frac{\sum_{y=0}^{\frac{K+1}{2}-2} {K-2 \choose y} (1-q)^y q^{K-2-y}}    {\sum_{y=0}^{\frac{K+1}{2}-2} {K-2 \choose y} q^y (1-q)^{K-2-y}}.
 			\label{eqn: b/g-step-1}
 		\end{align}
 		
 		The second step involves rearranging the lower bound in \eqref{eqn: b/g-step-1} found in the previous step. We introduce the following notations. 
 		\begin{align*}
 			\eta_q &\triangleq \sum_{y=0}^{\frac{K+1}{2}-2} {K \choose y} q^y (1-q)^{K-y} {K-2 \choose y} (1-q)^y q^{K-2-y}\\
 			\eta_{1-q} &\triangleq \sum_{y=0}^{\frac{K+1}{2}-2} {K \choose y} (1-q)^y q^{K-y} {K-2 \choose y} q^y (1-q)^{K-2-y}\\
 			\psi_q &\triangleq \sum_{y=0}^{\frac{K+1}{2}-2}  \sum_{z=y+1}^{\frac{K+1}{2}-2} \Bigg[{K \choose y} q^y (1-q)^{K-y} {K-2 \choose z} (1-q)^z q^{K-2-z}\\
 			&\qquad \qquad \qquad+ {K \choose z} q^z (1-q)^{K-z} {K-2 \choose y} (1-q)^y q^{K-2-y} \Bigg]\\
 			\psi_{1-q} &\triangleq \sum_{y=0}^{\frac{K+1}{2}-2} \sum_{z=y+1}^{\frac{K+1}{2}-2} \Bigg[{K \choose y} (1-q)^y q^{K-y} {K-2 \choose z} q^z (1-q)^{K-2-z}\\
 			&\qquad \qquad \qquad + {K \choose z} (1-q)^z q^{K-z} {K-2 \choose y} q^y (1-q)^{K-2-y} \Bigg].
 		\end{align*}   
 		\normalsize
 		Then the right-hand side term of \eqref{eqn: b/g-step-1} is equal to $\frac{\eta_q + \psi_q}{\eta_{1-q} + \psi_{1-q}},$ and thus
 		\begin{align*}
 			\frac{\mathcal{B}_K}{\mathcal{G}_K}\frac{\mathcal{G}_{K-2}}{\mathcal{B}_{K-2}}  
 			&= \frac{\eta_q + \psi_q}{\eta_{1-q} + \psi_{1-q}}. 
 		\end{align*}    
 		Since $\eta_q + \psi_q$ and $\eta_{1-q} + \psi_{1-q}$ are series of the same length $\frac{K^2-1}{8} \left(=\frac{(\frac{K+1}{2}-1) \frac{K+1}{2}}{2} \right),$ we use Lemma~\ref{property: sum-min} twice to find the following lower bound. 
 		\begin{align*}
 			\frac{\eta_q + \psi_q}{\eta_{1-q} + \psi_{1-q}} 
 			\ge \min& \left\{\frac{\eta_q}{\eta_{1-q}}, \frac{\psi_q}{\psi_{1-q}} \right\}\\
 			\ge \min\Bigg\{
 			\min_{y \in \{0,\ldots, \frac{K+1}{2}-2\}}  &\frac{{K \choose y}{K-2 \choose y} q^{K-2} (1-q)^{K}}{{K \choose y}{K-2 \choose y} (1-q)^{K-2} q^{K}}, \\
 			\min_{\substack{(y,z) | y \in \{0,\ldots, \frac{K+1}{2}-2\},\\ \quad z \in \{y+1,\ldots, \frac{K+1}{2}-2\}}}&  \frac{{K \choose y} {K-2 \choose z} q^{K-2+y-z} (1-q)^{K-y+z}  + {K \choose z} {K-2 \choose y} q^{K-2-y+z} (1-q)^{K+y-z}}{{K \choose y} {K-2 \choose z} (1-q)^{K-2+y-z} q^{K-y+z} + {K \choose z} {K-2 \choose y} (1-q)^{K-2-y+z} q^{K+y-z}}
 			\Bigg\}
 		\end{align*}
 		where both inequalities follow by Lemma~\ref{property: sum-min}.

 		The final step involves finding the lower bounds of these terms. For any $y$ we have 
 		$$\frac{{K \choose y}{K-2 \choose y} q^{K-2} (1-q)^{K}}{{K \choose y}{K-2 \choose y} (1-q)^{K-2} q^{K}} = \left(\frac{1-q}{q}\right)^2$$ and thus $$\min_{y \in \{0,\ldots, \frac{K+1}{2}-2\}} \frac{{K \choose y}{K-2 \choose y} q^{K-2} (1-q)^{K}}{{K \choose y}{K-2 \choose y} (1-q)^{K-2} q^{K}} = \left(\frac{1-q}{q}\right)^2.$$
 		Similarly for any $y \in \{0,\ldots, \frac{K+1}{2}-2\}, z \in \{y+1,\ldots, \frac{K+1}{2}-2\},$ 
 		\begin{align*}
 			&\frac{{K \choose y} {K-2 \choose z} q^{K-2+y-z} (1-q)^{K-y+z}  + {K \choose z} {K-2 \choose y} q^{K-2-y+z} (1-q)^{K+y-z}}{{K \choose y} {K-2 \choose z} (1-q)^{K-2+y-z} q^{K-y+z} + {K \choose z} {K-2 \choose y} (1-q)^{K-2-y+z} q^{K+y-z}}\\ 
 			=& \frac{q^{K-2+y-z}(1-q)^{K+y-z}\left[{K \choose y}{K-2 \choose z}(1-q)^{2(z-y)} + {K \choose z}{K-2 \choose y}q^{2(z-y)} \right]}
 			{(1-q)^{K-2+y-z}q^{K+y-z}\left[{K \choose y}{K-2 \choose z}q^{2(z-y)} + {K \choose z}{K-2 \choose y}(1-q)^{2(z-y)} \right]} \\
 			=& \left(\frac{1-q}{q}\right)^2\frac{{K \choose y}{K-2 \choose z}(1-q)^{2(z-y)} + {K \choose z}{K-2 \choose y}q^{2(z-y)}}
 			{{K \choose y}{K-2 \choose z}q^{2(z-y)} + {K \choose z}{K-2 \choose y}(1-q)^{2(z-y)}} \\
 			>& \left(\frac{1-q}{q} \right)^2
 		\end{align*}
 		
 		To show the last inequality above, because $z > y$ and $q > 1-q,$ it is sufficient to show that 
 		\begin{align}
 			{K \choose z}{K-2 \choose y} > {K \choose y}{K-2 \choose z}.
 			\label{eqn: b/g-step-3}
 		\end{align}
 		Observe that 
 		\begin{align*}
 			{K \choose z}{K-2 \choose y} &= \frac{K!}{z! (K-z)!} \frac{(K-2)!}{y! (K-y-2)!}\\
 			{K \choose y}{K-2 \choose z} &= \frac{K!}{y! (K-y)!} \frac{(K-2)!}{z! (K-z-2)!}.
 		\end{align*}
 		Then \eqref{eqn: b/g-step-3} holds since
 		\begin{align*}
 			\frac{1}{(K-z)! (K-y-2)!} &> \frac{1}{(K-y)! (K-z-2)!}
 		\end{align*}
 		because $$(K-y-1)(K-y) = \frac{(K-y)!}{(K-y-2)!} > \frac{(K-z)!}{(K-z-2)!} = (K-z-1)(K-z).$$
 		
 		Therefore this concludes the proof of Lemma~\ref{lemma: ic-intervals-overlap}:
 		\begin{align*}
 			\frac{\mathcal{B}_K}{\mathcal{G}_K}\frac{\mathcal{G}_{K-2}}{\mathcal{B}_{K-2}} > \frac{\eta_q + \psi_q}{\eta_{1-q} + \psi_{1-q}} > \left(\frac{1-q}{q}\right)^2
 		\end{align*}
 		where the first lower bound was achieved via the first step, and the final tighter lower bound was achieved via the second and third steps.
 	\end{proof}

 	\section{Technical Properties}
 	\label{section: supp-proofs}
 	\begin{restatable}{lemma}{propertycombinatorial}
 		$\frac{1}{y}{K-1 \choose y-1} = \frac{1}{K}{K \choose y}$.
 		\label{property: combinatorial}
 	\end{restatable}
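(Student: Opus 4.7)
The plan is to verify this standard combinatorial identity (a restatement of the ``absorption'' or ``committee-chair'' identity $\binom{K}{y} = \frac{K}{y}\binom{K-1}{y-1}$) by direct expansion in terms of factorials. I would first rewrite the left-hand side as
\[
\frac{1}{y}\binom{K-1}{y-1} = \frac{1}{y} \cdot \frac{(K-1)!}{(y-1)!\,(K-y)!} = \frac{(K-1)!}{y!\,(K-y)!},
\]
using the fact that $y \cdot (y-1)! = y!$. Then I would expand the right-hand side similarly:
\[
\frac{1}{K}\binom{K}{y} = \frac{1}{K} \cdot \frac{K!}{y!\,(K-y)!} = \frac{(K-1)!}{y!\,(K-y)!},
\]
using $K! = K \cdot (K-1)!$. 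Comparing the two simplified expressions yields the desired equality.

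There is essentially no obstacle: the identity holds for all integers $1 \le y \le K$, and the proof is a one-line factorial manipulation. The only thing worth noting is the implicit assumption $y \ge 1$ (so that dividing by $y$ and writing $\binom{K-1}{y-1}$ make sense), which matches the range of summation in every place the identity is invoked in the main text (namely, in the definitions of $\mathcal{G}_K$ and $\mathcal{B}_K$, where the summation starts at $y = \tfrac{K+1}{2} \ge 1$).
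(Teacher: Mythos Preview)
Your proof is correct and takes essentially the same approach as the paper: both arguments expand the binomial coefficients via factorials and use $y\cdot(y-1)!=y!$ and $K!=K\cdot(K-1)!$ to reach the common expression $\frac{(K-1)!}{y!\,(K-y)!}$. The only cosmetic difference is that the paper writes a single chain from the left-hand side to the right-hand side, whereas you simplify each side separately to the same quantity.
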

 	\begin{proof}{\textit{Proof.}}
 		\begin{align*}
 			\frac{1}{y}{K-1 \choose y-1} &= \frac{1}{y} \frac{(K-1)!}{(y-1)! (K-y)!} \\
 			&= \frac{(K-1)!}{y! (K-y)!} \\
 			&= \frac{1}{K} \frac{K!}{y! (K-y)!} \\
 			&= \frac{1}{K} {K \choose y}.
 		\end{align*}
 	\end{proof}
 	
 	\begin{restatable}{lemma}{propertybinomialtail} 
 		For any $q \in (0.5,1)$ and odd $K \in \mathbb{N},$ let $X_K$ be a Binomial$(K,q)$ random variable.
 		$$ \bbP\left(X_{K+2} \ge \frac{(K+2)+1}{2} \right) > \bbP\left(X_K \ge \frac{K+1}{2} \right) $$ 
 		and as $K$ grows
 		$$ \lim_{K \rightarrow \infty} \bbP\left(X_K \ge \frac{K+1}{2}\right) = 1.$$
 		\label{property: tail-increase}
 	\end{restatable}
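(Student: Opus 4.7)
The plan is to establish the monotonicity via a two-coin coupling and to derive the limit directly from the weak law of large numbers. To begin, I would let $m \triangleq (K+1)/2$ so that the majority threshold at batch size $K$ equals $m$ and the threshold at batch size $K+2$ equals $m+1$. I would then couple the two binomials by writing $X_{K+2} = X_K + Y_1 + Y_2$, where $Y_1, Y_2$ are iid Bernoulli$(q)$ and independent of $X_K$. Conditioning on $Y_1+Y_2 \in \{0,1,2\}$ (with probabilities $(1-q)^2, 2q(1-q), q^2$) yields
\begin{equation*}
\bbP(X_{K+2}\ge m+1) = (1-q)^2\,\bbP(X_K\ge m+1) + 2q(1-q)\,\bbP(X_K\ge m) + q^2\,\bbP(X_K\ge m-1).
\end{equation*}

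Subtracting $\bbP(X_K \ge m)$ from both sides and using $(1-q)^2 + 2q(1-q) + q^2 = 1$, the middle term cancels and telescoping the two remaining tails gives
\begin{equation*}
\bbP(X_{K+2}\ge m+1) - \bbP(X_K\ge m) = q^2\,\bbP(X_K=m-1) - (1-q)^2\,\bbP(X_K=m).
\end{equation*}
So monotonicity reduces to showing $q^2\,\bbP(X_K=m-1) > (1-q)^2\,\bbP(X_K=m)$. Using the pmf $\bbP(X_K=j)=\binom{K}{j}q^j(1-q)^{K-j}$ together with $\binom{K}{m-1}/\binom{K}{m} = m/(K-m+1)$, the ratio of the two terms equals $\tfrac{q}{1-q}\cdot\tfrac{m}{K-m+1}$. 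The critical observation is that when $K = 2m-1$ is odd, $K-m+1=m$, so the ratio collapses to $q/(1-q)>1$, yielding the strict inequality. Equivalently, this is exactly the symmetry $\binom{2m-1}{m-1}=\binom{2m-1}{m}$ that holds because the Pascal row has odd length; this parity is essential, which is also why the paper restricts to odd batch sizes.

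For the limit, I would invoke the weak law of large numbers: $X_K/K \to q$ in probability. Since $(K+1)/(2K)\to 1/2$ and $q>1/2$, picking any $\varepsilon \in (0,\,q-1/2)$ ensures $(K+1)/(2K) < q-\varepsilon$ for all sufficiently large $K$, so $\bbP(X_K\ge (K+1)/2) \ge \bbP(X_K/K \ge q-\varepsilon)\to 1$. I do not foresee any real obstacle; the only mildly clever step is recognizing that the coupling reduces the whole monotonicity claim to a comparison of two consecutive pmf values, after which the odd-parity symmetry makes the ratio collapse to the clean factor $q/(1-q)$.
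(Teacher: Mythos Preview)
Your argument is correct. Both your proof and the paper's share the same high-level idea: for the monotonicity part, exploit the additive structure $X_{K+2}\stackrel{d}{=}X_K+Y_1+Y_2$ to reduce the comparison to quantities involving $X_K$ alone, and for the limit, invoke the weak law of large numbers. The execution, however, differs.

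The paper applies a one-step recursion $\bbP(X_{K+1}\ge j)=\bbP(X_K\ge j)+q\,\bbP(X_K=j-1)$ twice and then finishes with the crude bound $2q>1$. You instead condition on $Y_1+Y_2\in\{0,1,2\}$ in one shot, which yields the exact identity
\[
\bbP(X_{K+2}\ge m+1)-\bbP(X_K\ge m)=q^2\,\bbP(X_K=m-1)-(1-q)^2\,\bbP(X_K=m),
\]
and then you observe that for $K=2m-1$ the ratio of the two terms is exactly $q/(1-q)$ via the central symmetry $\binom{2m-1}{m-1}=\binom{2m-1}{m}$. This is tidier: it isolates precisely why odd parity matters (the binomial coefficients at $m-1$ and $m$ coincide), and it gives the exact gap rather than an inequality chain. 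The paper's route is more algebraic bookkeeping; yours is a cleaner probabilistic derivation of the same fact. For the limit, both proofs are essentially identical.
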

 	\begin{proof}{\textit{Proof.}}
 		To show the first part, we use the following property of the binomial distribution $$\bbP \left( X_{K+1} \ge \frac{K+1}{2}+1 \right) =  \bbP \left( X_{K} \ge \frac{K+1}{2}+1 \right) + q \bbP \left( X_{K} \ge \frac{K+1}{2} \right).$$ Then in follows that
 		\begin{small}
 			\begin{align*}
 				\bbP\left(X_{K+2} \ge \frac{(K+2)+1}{2} \right) &=  \bbP\left(X_{K+1} \ge \frac{K+1}{2} + 1\right) + q \bbP\left(X_{K+1} \ge \frac{K+1}{2} \right)  \\
 				=&  \bbP\left(X_{K} \ge \frac{K+1}{2} + 1\right) + q \bbP\left(X_{K} \ge \frac{K+1}{2} \right)\\
 				&\quad + q \left[ \bbP\left(X_{K} \ge \frac{K+1}{2} \right) 
 				+ q \bbP\left(X_{K} \ge \frac{K+1}{2}-1 \right)  \right] \\
 				= 2q &\bbP\left(X_{K} \ge \frac{K+1}{2} \right) + \bbP\left(X_{K} \ge \frac{K+1}{2} + 1\right) + q^2 \bbP\left(X_{K} \ge \frac{K+1}{2}-1 \right) \\
 				>& \bbP\left(X_{K} \ge \frac{K+1}{2} \right)  + \bbP\left(X_{K} \ge \frac{K+1}{2} + 1\right)  + q^2 \bbP\left(X_{K} \ge \frac{K+1}{2}-1 \right)
 			\end{align*}
 		\end{small}
 		and we obtain the desired inequality
 		\begin{align*}
 			\bbP\left(X_{K+2} \ge \frac{(K+2)+1}{2} \right) &> \bbP\left(X_{K} \ge \frac{K+1}{2} \right).
 		\end{align*}
 		To show the second part of the property, note that by the Weak Law of Large Numbers, for any $\epsilon > 0$,
 		\begin{align*}
 			\lim_{K \rightarrow \infty} \bbP \left( |X_K/K - q| < \epsilon\right) = 1,
 		\end{align*}
 		and because $q - X_K/K \le |X_K/K - q|,$ 
 		\begin{align*}
 			1 = \lim_{K \rightarrow \infty} \bbP \left( |X_K/K - q| < \epsilon\right) \le \lim_{K \rightarrow \infty} \bbP \left( q-X_K/K < \epsilon\right).
 		\end{align*}
 		Since $\bbP \left( q-X_K/K < \epsilon\right) \le 1$ it follows that
 		\begin{align}
 			\lim_{K \rightarrow \infty} \bbP \left( q - \epsilon < X_K/K \right) = 1. 
 			\label{eqn: wlln}
 		\end{align}
 		Because $q > 1/2,$ there exists some $\epsilon > 0$ such that $q-\epsilon = \frac{1}{2}$ and that satisfies Equation \eqref{eqn: wlln}: 
 		\begin{align*}
 			1 = \lim_{K \rightarrow \infty} \bbP \left( \frac{1}{2} < \frac{X_K}{K} \right) &= \lim_{K \rightarrow \infty} \bbP \left( X_K > \frac{K}{2} \right) =  \lim_{K \rightarrow \infty} \bbP \left( X_K \ge \frac{K+1}{2} \right).
 		\end{align*}
 	\end{proof}
 	
 	\begin{lemma}
 		For any $n$ and non-negative sequences $\{a_i\}_{i=0}^n$ and $\{b_i\}_{i=0}^n,$ $$\min_{i} \frac{a_i}{b_i} \le \frac{\sum_{i=0}^n a_i}{\sum_{i=0}^n b_i}.$$
 		\label{property: sum-min}
 	\end{lemma}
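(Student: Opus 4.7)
The plan is to prove this as a standard mediant (weighted-average-of-ratios) inequality. First I would set $m \triangleq \min_i \frac{a_i}{b_i}$, implicitly restricting the minimum to indices with $b_i > 0$ so that the ratios are well-defined. The degenerate case $b_i = 0$ can be handled by omission: if $b_i = a_i = 0$ the index contributes nothing to either sum in the conclusion, and if $b_i = 0$ with $a_i > 0$ then $a_i/b_i = +\infty$ cannot attain the minimum. In the application within the proof of Lemma~\ref{lemma: ic-intervals-overlap}, all the relevant $b_i$ are strictly positive sums of binomial-coefficient terms, so these boundary issues do not bind.

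Next, from the definition of $m$, for every $i$ we have $a_i / b_i \geq m$, which, upon multiplying through by the non-negative $b_i$, yields $a_i \geq m \cdot b_i$. Summing this inequality over $i \in \{0,\ldots,n\}$ gives $\sum_{i=0}^n a_i \geq m \sum_{i=0}^n b_i$. Dividing both sides by the positive quantity $\sum_{i=0}^n b_i$ delivers the desired bound $\frac{\sum_{i=0}^n a_i}{\sum_{i=0}^n b_i} \geq m = \min_i \frac{a_i}{b_i}$.

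There is no substantive obstacle here; the argument is essentially one line once one recognizes that a weighted average of quantities, each at least $m$, is itself at least $m$. The only item requiring care is the convention for treating indices with $b_i = 0$, and as noted above this is immaterial both in general (by excluding such indices from the min) and for the specific invocation of the lemma in this paper.
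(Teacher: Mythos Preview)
Your proposal is correct and follows essentially the same approach as the paper: define $m=\min_i a_i/b_i$, deduce $a_i\ge m\,b_i$ for each $i$, sum, and divide by $\sum_i b_i$. Your additional remarks about the $b_i=0$ edge case are more careful than the paper, which simply omits that discussion.
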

 	\begin{proof}{\textit{Proof.}}
 		Let $r_{min} \triangleq \min_i \frac{a_i}{b_i}.$ Then for any $i,$ $\frac{a_i}{b_i} \ge r_{min}$, or equivalently 
 		$$a_i \ge r_{min}\, b_i.$$ Hence, $$\frac{\sum_{i=0}^n a_i}{\sum_{i=0}^n b_i} \ge \frac{\sum_{i=0}^n r_{min} b_i}{\sum_{i=0}^n b_i} = r_{min}.$$
 	\end{proof} 
 	
 	\begin{lemma}
 		For any $K \ge 3,$ 
 		$$\frac{\sum_{y=0}^{\frac{K+1}{2}-2} {K \choose y} q^y (1-q)^{K-y} 
 			+ {K \choose \frac{K+1}{2}-1} q^{\frac{K+1}{2}-1} (1-q)^{\frac{K+1}{2}}}{\sum_{y=0}^{\frac{K+1}{2}-2} {K \choose y} (1-q)^y q^{K-y}
 			+ {K \choose \frac{K+1}{2}-1} (1-q)^{\frac{K+1}{2}-1} q^{\frac{K+1}{2}}} > \frac{\sum_{y=0}^{\frac{K+1}{2}-2} {K \choose y} q^y (1-q)^{K-y}}{\sum_{y=0}^{\frac{K+1}{2}-2} {K \choose y} (1-q)^y q^{K-y}}.$$ 
 		\label{property: supp-claim-shrink}
 	\end{lemma}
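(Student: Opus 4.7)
Write the denominators on both sides as $B \triangleq \sum_{y=0}^{(K+1)/2-2} \binom{K}{y}(1-q)^y q^{K-y}$ and $B+b$, where $b \triangleq \binom{K}{(K+1)/2-1}(1-q)^{(K+1)/2-1}q^{(K+1)/2}$, and the numerators as $A$ and $A+a$ analogously (with $a \triangleq \binom{K}{(K+1)/2-1}q^{(K+1)/2-1}(1-q)^{(K+1)/2}$). All four quantities are strictly positive, so the desired inequality $(A+a)/(B+b) > A/B$ is equivalent, by clearing denominators, to the much simpler statement
\[
\frac{a}{b} \;>\; \frac{A}{B}.
\]

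The plan is to bound each side. First, a direct calculation gives
\[
\frac{a}{b} \;=\; \frac{q^{(K+1)/2-1}(1-q)^{(K+1)/2}}{(1-q)^{(K+1)/2-1}q^{(K+1)/2}} \;=\; \frac{1-q}{q}.
\]
Second, to bound $A/B$ from above, I will use the standard ``weighted average'' fact (the max-version of Lemma~\ref{property: sum-min}): for positive sequences $\{a_y\},\{b_y\}$, $\sum a_y / \sum b_y \le \max_y a_y/b_y$. Here the term-wise ratio is
\[
\frac{\binom{K}{y}q^y(1-q)^{K-y}}{\binom{K}{y}(1-q)^y q^{K-y}} \;=\; \left(\frac{1-q}{q}\right)^{K-2y}.
\]
Since the summation index satisfies $y \le (K+1)/2 - 2 = (K-3)/2$, we have $K-2y \ge 3$ for every term. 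Because $q > 1/2$ implies $(1-q)/q < 1$, the function $(1-q)/q)^{K-2y}$ is maximized at the largest admissible $y$, giving the bound
\[
\frac{A}{B} \;\le\; \left(\frac{1-q}{q}\right)^{3} \;<\; \frac{1-q}{q} \;=\; \frac{a}{b},
\]
which is exactly what was needed.

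I do not anticipate a serious obstacle here: the entire argument reduces to the algebraic identity $(A+a)/(B+b) > A/B \iff aB > Ab$ and then a clean term-by-term comparison. The only point that requires a moment of care is verifying $K-2y \ge 3$ (rather than $\ge 1$) for all $y$ appearing in $A,B$; this follows from the summation ending at $y = (K-3)/2$, and it is precisely this ``gap of $3$'' in the exponent that creates the strict inequality via $((1-q)/q)^3 < (1-q)/q$.
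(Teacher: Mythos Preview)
Your proof is correct and follows essentially the same approach as the paper's: both cross-multiply to reduce the inequality to $aB > Ab$, compute $a/b = (1-q)/q$, and then use the term-wise ratio bound (the paper applies the $\min$-version of Lemma~\ref{property: sum-min} to $B/A$, you apply the $\max$-version to $A/B$) together with $K-2y \ge 3$ to conclude. The arguments are identical up to taking reciprocals.
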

 	\begin{proof}{\textit{Proof.}}
 		It is equivalent to show that
 		\footnotesize
 		\begin{align*}
 			{K \choose \frac{K+1}{2}-1} q^{\frac{K+1}{2}-1} (1-q)^{\frac{K+1}{2}} &\sum_{y=0}^{\frac{K+1}{2}-2} {K \choose y} (1-q)^y q^{K-y}\\
 			> {K \choose \frac{K+1}{2}-1} q^{\frac{K+1}{2}} (1-q)^{\frac{K+1}{2}-1} &\sum_{y=0}^{\frac{K+1}{2}-2} {K \choose y} q^y (1-q)^{K-y}
 		\end{align*}
 		\normalsize
 		which reduces to
 		\begin{align*}
 			(1-q) \sum_{y=0}^{\frac{K+1}{2}-2} {K \choose y} (1-q)^y q^{K-y} &> q \sum_{y=0}^{\frac{K+1}{2}-2} {K \choose y} q^y (1-q)^{K-y}
 		\end{align*}
 		and equivalently
 		\begin{align*}
 			\frac{\sum_{y=0}^{\frac{K+1}{2}-2} {K \choose y} (1-q)^y q^{K-y}}{\sum_{y=0}^{\frac{K+1}{2}-2} {K \choose y} q^y (1-q)^{K-y}} &> \frac{q}{1-q}.
 		\end{align*}
 		The last inequality holds because by Lemma~\ref{property: sum-min},
 		\begin{align*}
 			\frac{\sum_{y=0}^{\frac{K+1}{2}-2} {K \choose y} (1-q)^y q^{K-y}}{\sum_{y=0}^{\frac{K+1}{2}-2} {K \choose y} q^y (1-q)^{K-y}} &\ge \min_y \left( \frac{q}{1-q} \right)^{K-2y} \\
 			&= \left( \frac{q}{1-q} \right)^{K-2\left(\frac{K+1}{2}-2 \right)} \\
 			&= \left( \frac{q}{1-q} \right)^3 \\
 			&> \frac{q}{1-q}.
 		\end{align*}
 	\end{proof}

\bibliographystyle{ACM-Reference-Format}
\bibliography{arxiv/arxiv2024}


\begin{thebibliography}{23}


\ifx \showCODEN    \undefined \def \showCODEN     #1{\unskip}     \fi
\ifx \showDOI      \undefined \def \showDOI       #1{#1}\fi
\ifx \showISBNx    \undefined \def \showISBNx     #1{\unskip}     \fi
\ifx \showISBNxiii \undefined \def \showISBNxiii  #1{\unskip}     \fi
\ifx \showISSN     \undefined \def \showISSN      #1{\unskip}     \fi
\ifx \showLCCN     \undefined \def \showLCCN      #1{\unskip}     \fi
\ifx \shownote     \undefined \def \shownote      #1{#1}          \fi
\ifx \showarticletitle \undefined \def \showarticletitle #1{#1}   \fi
\ifx \showURL      \undefined \def \showURL       {\relax}        \fi
\providecommand\bibfield[2]{#2}
\providecommand\bibinfo[2]{#2}
\providecommand\natexlab[1]{#1}
\providecommand\showeprint[2][]{arXiv:#2}

\bibitem[\protect\citeauthoryear{Acemoglu, Makhdoumi, Malekian, and
  Ozdaglar}{Acemoglu et~al\mbox{.}}{2017}]%
        {acemoglu2017fast}
\bibfield{author}{\bibinfo{person}{Daron Acemoglu}, \bibinfo{person}{Ali
  Makhdoumi}, \bibinfo{person}{Azarakhsh Malekian}, {and}
  \bibinfo{person}{Asuman Ozdaglar}.} \bibinfo{year}{2017}\natexlab{}.
\newblock \bibinfo{booktitle}{\emph{Fast and slow learning from reviews}}.
\newblock \bibinfo{type}{{T}echnical {R}eport}. \bibinfo{institution}{National
  Bureau of Economic Research}.
\newblock


\bibitem[\protect\citeauthoryear{Acemoglu, Ozdaglar, and
  Parandehgheibi}{Acemoglu et~al\mbox{.}}{2010}]%
        {Acemoglu2010}
\bibfield{author}{\bibinfo{person}{Daron Acemoglu}, \bibinfo{person}{Asuman
  Ozdaglar}, {and} \bibinfo{person}{Ali Parandehgheibi}.}
  \bibinfo{year}{2010}\natexlab{}.
\newblock \showarticletitle{{Spread of Misinformation in Social Networks}}.
\newblock \bibinfo{journal}{\emph{Games and Economic Behavior}}
  (\bibinfo{year}{2010}).
\newblock
\showeprint[arxiv]{arXiv:0906.5007v1}


\bibitem[\protect\citeauthoryear{Agarwal, Ashlagi, Somaini, and
  Waldinger}{Agarwal et~al\mbox{.}}{2018}]%
        {agarwal2018dynamic}
\bibfield{author}{\bibinfo{person}{Nikhil Agarwal}, \bibinfo{person}{Itai
  Ashlagi}, \bibinfo{person}{Paulo Somaini}, {and} \bibinfo{person}{Daniel
  Waldinger}.} \bibinfo{year}{2018}\natexlab{}.
\newblock \showarticletitle{Dynamic Incentives in Wait List Mechanisms}. In
  \bibinfo{booktitle}{\emph{AEA Papers and Proceedings}},
  Vol.~\bibinfo{volume}{108}. American Economic Association 2014 Broadway,
  Suite 305, Nashville, TN 37203, \bibinfo{pages}{341--347}.
\newblock


\bibitem[\protect\citeauthoryear{Arieli, Koren, and Smorodinsky}{Arieli
  et~al\mbox{.}}{2018}]%
        {arieli2018one}
\bibfield{author}{\bibinfo{person}{Itai Arieli}, \bibinfo{person}{Moran Koren},
  {and} \bibinfo{person}{Rann Smorodinsky}.} \bibinfo{year}{2018}\natexlab{}.
\newblock \showarticletitle{{The One-Shot Crowdfunding Game}}. In
  \bibinfo{booktitle}{\emph{EC '18: Proceedings of the 2018 ACM Conference on
  Economics and Computation}}. ACM, \bibinfo{pages}{213--214}.
\newblock


\bibitem[\protect\citeauthoryear{Arieli, Koren, and Smorodinsky}{Arieli
  et~al\mbox{.}}{2023}]%
        {arieli2023information}
\bibfield{author}{\bibinfo{person}{Itai Arieli}, \bibinfo{person}{Moran Koren},
  {and} \bibinfo{person}{Rann Smorodinsky}.} \bibinfo{year}{2023}\natexlab{}.
\newblock \showarticletitle{Information Aggregation in Large Collective
  Purchases}.
\newblock \bibinfo{journal}{\emph{Economic Theory}} (\bibinfo{year}{2023}),
  \bibinfo{pages}{1--51}.
\newblock


\bibitem[\protect\citeauthoryear{Arieli and Mueller-Frank}{Arieli and
  Mueller-Frank}{2019}]%
        {Arieli2019}
\bibfield{author}{\bibinfo{person}{Itai Arieli} {and} \bibinfo{person}{Manuel
  Mueller-Frank}.} \bibinfo{year}{2019}\natexlab{}.
\newblock \showarticletitle{{Multidimensional Social Learning}}.
\newblock \bibinfo{journal}{\emph{The Review of Economic Studies}}
  \bibinfo{volume}{86}, \bibinfo{number}{3} (\bibinfo{date}{may}
  \bibinfo{year}{2019}), \bibinfo{pages}{913--940}.
\newblock
\urldef\tempurl%
\url{https://doi.org/10.1093/restud/rdy029}
\showDOI{\tempurl}


\bibitem[\protect\citeauthoryear{Austen-Smith and Banks}{Austen-Smith and
  Banks}{1996}]%
        {Austen-Smith96}
\bibfield{author}{\bibinfo{person}{David Austen-Smith} {and}
  \bibinfo{person}{Jeffrey~S Banks}.} \bibinfo{year}{1996}\natexlab{}.
\newblock \showarticletitle{{Information Aggregation, Rationality, and the
  Condorcet Jury Theorem}}.
\newblock \bibinfo{journal}{\emph{The American Political Science Review}}
  \bibinfo{volume}{90}, \bibinfo{number}{1} (\bibinfo{year}{1996}),
  \bibinfo{pages}{pp. 34--45}.
\newblock
\showISSN{00030554}
\urldef\tempurl%
\url{http://www.jstor.org/stable/2082796}
\showURL{%
\tempurl}


\bibitem[\protect\citeauthoryear{Banerjee}{Banerjee}{1992}]%
        {banerjee1992simple}
\bibfield{author}{\bibinfo{person}{Abhijit~V Banerjee}.}
  \bibinfo{year}{1992}\natexlab{}.
\newblock \showarticletitle{{A Simple Model of Herd Behavior}}.
\newblock \bibinfo{journal}{\emph{The Quarterly Journal of Economics}}
  \bibinfo{volume}{107}, \bibinfo{number}{3} (\bibinfo{year}{1992}),
  \bibinfo{pages}{797--817}.
\newblock


\bibitem[\protect\citeauthoryear{Besbes and Scarsini}{Besbes and
  Scarsini}{2018}]%
        {besbes2018information}
\bibfield{author}{\bibinfo{person}{Omar Besbes} {and} \bibinfo{person}{Marco
  Scarsini}.} \bibinfo{year}{2018}\natexlab{}.
\newblock \showarticletitle{On information distortions in online ratings}.
\newblock \bibinfo{journal}{\emph{Operations Research}} \bibinfo{volume}{66},
  \bibinfo{number}{3} (\bibinfo{year}{2018}), \bibinfo{pages}{597--610}.
\newblock


\bibitem[\protect\citeauthoryear{Bikhchandani, Hirshleifer, and
  Welch}{Bikhchandani et~al\mbox{.}}{1992}]%
        {bikhchandani1992theory}
\bibfield{author}{\bibinfo{person}{Sushil Bikhchandani}, \bibinfo{person}{David
  Hirshleifer}, {and} \bibinfo{person}{Ivo Welch}.}
  \bibinfo{year}{1992}\natexlab{}.
\newblock \showarticletitle{{A Theory of Fads, Fashion, Custom, and Cultural
  Change as Informational Cascades}}.
\newblock \bibinfo{journal}{\emph{Journal of Political Economy}}
  \bibinfo{volume}{100}, \bibinfo{number}{5} (\bibinfo{year}{1992}),
  \bibinfo{pages}{992--1026}.
\newblock


\bibitem[\protect\citeauthoryear{Butler, Chapman, Johnson, Amodeo, B{\"o}hmer,
  Camino, Davies, Dipchand, Godown, Miera, et~al\mbox{.}}{Butler
  et~al\mbox{.}}{2020}]%
        {butler2020behavioral}
\bibfield{author}{\bibinfo{person}{Alison Butler}, \bibinfo{person}{Gretchen
  Chapman}, \bibinfo{person}{Jonathan~N Johnson}, \bibinfo{person}{Antonio
  Amodeo}, \bibinfo{person}{Jens B{\"o}hmer}, \bibinfo{person}{Manuela Camino},
  \bibinfo{person}{Ryan~R Davies}, \bibinfo{person}{Anne~I Dipchand},
  \bibinfo{person}{Justin Godown}, \bibinfo{person}{Oliver Miera},
  {et~al\mbox{.}}} \bibinfo{year}{2020}\natexlab{}.
\newblock \showarticletitle{Behavioral economics—A framework for donor organ
  decision-making in pediatric heart transplantation}.
\newblock \bibinfo{journal}{\emph{Pediatric transplantation}}
  \bibinfo{volume}{24}, \bibinfo{number}{3} (\bibinfo{year}{2020}),
  \bibinfo{pages}{e13655}.
\newblock


\bibitem[\protect\citeauthoryear{Cooper, Formica, Friedewald, Hirose,
  O’Connor, Mohan, Schold, Axelrod, and Pastan}{Cooper et~al\mbox{.}}{2019}]%
        {cooper2019report}
\bibfield{author}{\bibinfo{person}{Matthew Cooper}, \bibinfo{person}{Richard
  Formica}, \bibinfo{person}{John Friedewald}, \bibinfo{person}{Ryutaro
  Hirose}, \bibinfo{person}{Kevin O’Connor}, \bibinfo{person}{Sumit Mohan},
  \bibinfo{person}{Jesse Schold}, \bibinfo{person}{David Axelrod}, {and}
  \bibinfo{person}{Stephen Pastan}.} \bibinfo{year}{2019}\natexlab{}.
\newblock \showarticletitle{Report of National Kidney Foundation consensus
  conference to decrease kidney discards}.
\newblock \bibinfo{journal}{\emph{Clinical transplantation}}
  \bibinfo{volume}{33}, \bibinfo{number}{1} (\bibinfo{year}{2019}),
  \bibinfo{pages}{e13419}.
\newblock


\bibitem[\protect\citeauthoryear{Crapis, Ifrach, Maglaras, and Scarsini}{Crapis
  et~al\mbox{.}}{2017}]%
        {crapis2017monopoly}
\bibfield{author}{\bibinfo{person}{Davide Crapis}, \bibinfo{person}{Bar
  Ifrach}, \bibinfo{person}{Costis Maglaras}, {and} \bibinfo{person}{Marco
  Scarsini}.} \bibinfo{year}{2017}\natexlab{}.
\newblock \showarticletitle{Monopoly pricing in the presence of social
  learning}.
\newblock \bibinfo{journal}{\emph{Management Science}} \bibinfo{volume}{63},
  \bibinfo{number}{11} (\bibinfo{year}{2017}), \bibinfo{pages}{3586--3608}.
\newblock


\bibitem[\protect\citeauthoryear{{De Condorcet}}{{De Condorcet}}{1785}]%
        {Condorcet1785}
\bibfield{author}{\bibinfo{person}{Nicolas {De Condorcet}}.}
  \bibinfo{year}{1785}\natexlab{}.
\newblock \bibinfo{booktitle}{\emph{{Essay sur l'application de l'analyse a la
  probabilite des decisions rendues a la pluralite des voix. Par m. le marquis
  De Condorcet..}}}
\newblock \bibinfo{publisher}{de l'imprimerie royale}.
\newblock


\bibitem[\protect\citeauthoryear{De~Mel, Munshi, Reiche, and Sabourian}{De~Mel
  et~al\mbox{.}}{2020}]%
        {de2020herding}
\bibfield{author}{\bibinfo{person}{Stephanie De~Mel}, \bibinfo{person}{Kaivan
  Munshi}, \bibinfo{person}{Soenje Reiche}, {and} \bibinfo{person}{Hamid
  Sabourian}.} \bibinfo{year}{2020}\natexlab{}.
\newblock \bibinfo{booktitle}{\emph{{Herding in Quality Assessment: An
  Application to Organ Transplantation}}}.
\newblock \bibinfo{type}{{T}echnical {R}eport}. \bibinfo{institution}{IFS
  Working Papers}.
\newblock


\bibitem[\protect\citeauthoryear{Debo and Veeraraghavan}{Debo and
  Veeraraghavan}{2009}]%
        {deboModelsHerdingBehavior2009}
\bibfield{author}{\bibinfo{person}{Laurens~G. Debo} {and}
  \bibinfo{person}{Senthil~K. Veeraraghavan}.} \bibinfo{year}{2009}\natexlab{}.
\newblock \showarticletitle{Models of {{Herding Behavior}} in {{Operations
  Management}}}.
\newblock In \bibinfo{booktitle}{\emph{Consumer-{{Driven Demand}} and
  {{Operations Management Models}}: {{A Systematic Study}} of
  {{Information-Technology-Enabled Sales Mechanisms}}}},
  \bibfield{editor}{\bibinfo{person}{Christopher~S. Tang} {and}
  \bibinfo{person}{Serguei Netessine}} (Eds.). \bibinfo{publisher}{{Springer
  US}}, \bibinfo{address}{{Boston, MA}}, \bibinfo{pages}{81--112}.
\newblock
\showISBNx{978-0-387-98026-3}
\urldef\tempurl%
\url{https://doi.org/10.1007/978-0-387-98026-3_4}
\showDOI{\tempurl}


\bibitem[\protect\citeauthoryear{Lentine, Naik, Schnitzler, Randall, Wellen,
  Kasiske, Marklin, Brockmeier, Cooper, Xiao, Zhang, Gaston, Rothweiler, and
  Axelrod}{Lentine et~al\mbox{.}}{2019}]%
        {lentineVariationUseProcurement2019}
\bibfield{author}{\bibinfo{person}{Krista~L. Lentine},
  \bibinfo{person}{Abhijit~S. Naik}, \bibinfo{person}{Mark~A. Schnitzler},
  \bibinfo{person}{Henry Randall}, \bibinfo{person}{Jason~R. Wellen},
  \bibinfo{person}{Bertram~L. Kasiske}, \bibinfo{person}{Gary Marklin},
  \bibinfo{person}{Diane Brockmeier}, \bibinfo{person}{Matthew Cooper},
  \bibinfo{person}{Huiling Xiao}, \bibinfo{person}{Zidong Zhang},
  \bibinfo{person}{Robert~S. Gaston}, \bibinfo{person}{Richard Rothweiler},
  {and} \bibinfo{person}{David~A. Axelrod}.} \bibinfo{year}{2019}\natexlab{}.
\newblock \showarticletitle{Variation in Use of Procurement Biopsies and Its
  Implications for Discard of Deceased Donor Kidneys Recovered for
  Transplantation}.
\newblock \bibinfo{journal}{\emph{American Journal of Transplantation}}
  \bibinfo{volume}{19}, \bibinfo{number}{8} (\bibinfo{year}{2019}),
  \bibinfo{pages}{2241--2251}.
\newblock
\showISSN{1600-6143}
\urldef\tempurl%
\url{https://doi.org/10.1111/ajt.15325}
\showDOI{\tempurl}


\bibitem[\protect\citeauthoryear{Mohan, Chiles, Patzer, Pastan, Husain,
  Carpenter, Dube, Crew, Ratner, and Cohen}{Mohan et~al\mbox{.}}{2018}]%
        {mohan2018factors}
\bibfield{author}{\bibinfo{person}{Sumit Mohan}, \bibinfo{person}{Mariana~C
  Chiles}, \bibinfo{person}{Rachel~E Patzer}, \bibinfo{person}{Stephen~O
  Pastan}, \bibinfo{person}{S~Ali Husain}, \bibinfo{person}{Dustin~J
  Carpenter}, \bibinfo{person}{Geoffrey~K Dube}, \bibinfo{person}{R~John Crew},
  \bibinfo{person}{Lloyd~E Ratner}, {and} \bibinfo{person}{David~J Cohen}.}
  \bibinfo{year}{2018}\natexlab{}.
\newblock \showarticletitle{Factors {L}eading to the {D}iscard of {D}eceased
  {D}onor {K}idneys in the {U}nited {S}tates}.
\newblock \bibinfo{journal}{\emph{Kidney International}} \bibinfo{volume}{94},
  \bibinfo{number}{1} (\bibinfo{year}{2018}), \bibinfo{pages}{187--198}.
\newblock


\bibitem[\protect\citeauthoryear{Mossel, Sly, and Tamuz}{Mossel
  et~al\mbox{.}}{2015}]%
        {Mossel2015}
\bibfield{author}{\bibinfo{person}{Elchanan Mossel}, \bibinfo{person}{Allan
  Sly}, {and} \bibinfo{person}{Omer Tamuz}.} \bibinfo{year}{2015}\natexlab{}.
\newblock \showarticletitle{{Strategic Learning and the Topology of Social
  Networks}}.
\newblock \bibinfo{journal}{\emph{Econometrica}} \bibinfo{volume}{83},
  \bibinfo{number}{5} (\bibinfo{year}{2015}), \bibinfo{pages}{1755--1794}.
\newblock
\urldef\tempurl%
\url{https://doi.org/10.3982/ECTA12058}
\showDOI{\tempurl}


\bibitem[\protect\citeauthoryear{Papanastasiou and Savva}{Papanastasiou and
  Savva}{2017}]%
        {papanastasiou2017dynamic}
\bibfield{author}{\bibinfo{person}{Yiangos Papanastasiou} {and}
  \bibinfo{person}{Nicos Savva}.} \bibinfo{year}{2017}\natexlab{}.
\newblock \showarticletitle{Dynamic pricing in the presence of social learning
  and strategic consumers}.
\newblock \bibinfo{journal}{\emph{Management Science}} \bibinfo{volume}{63},
  \bibinfo{number}{4} (\bibinfo{year}{2017}), \bibinfo{pages}{919--939}.
\newblock


\bibitem[\protect\citeauthoryear{Smith and S{\o}rensen}{Smith and
  S{\o}rensen}{2000}]%
        {smith2000pathological}
\bibfield{author}{\bibinfo{person}{Lones Smith} {and} \bibinfo{person}{Peter
  S{\o}rensen}.} \bibinfo{year}{2000}\natexlab{}.
\newblock \showarticletitle{{Pathological Outcomes of Observational Learning}}.
\newblock \bibinfo{journal}{\emph{Econometrica}} \bibinfo{volume}{68},
  \bibinfo{number}{2} (\bibinfo{year}{2000}), \bibinfo{pages}{371--398}.
\newblock


\bibitem[\protect\citeauthoryear{{U.S. Department of Health and Human
  Services}}{{U.S. Department of Health and Human Services}}{2023}]%
        {OPTN_2023}
\bibfield{author}{\bibinfo{person}{{U.S. Department of Health and Human
  Services}}.} \bibinfo{year}{2023}\natexlab{}.
\newblock \bibinfo{title}{{Organ Procurement and Transplantation Network:
  National Data}}.
\newblock
\newblock
\urldef\tempurl%
\url{https://optn.transplant.hrsa.gov/data/view-data-reports/national-data/}
\showURL{%
\tempurl}


\bibitem[\protect\citeauthoryear{Zhang}{Zhang}{2010}]%
        {Zhang2010}
\bibfield{author}{\bibinfo{person}{Juanjuan Zhang}.}
  \bibinfo{year}{2010}\natexlab{}.
\newblock \showarticletitle{{The Sound of Silence: Observational Learning in
  the U.S. Kidney Market}}.
\newblock \bibinfo{journal}{\emph{Marketing Science}} \bibinfo{volume}{29},
  \bibinfo{number}{2} (\bibinfo{year}{2010}).
\newblock
\urldef\tempurl%
\url{https://doi.org/10.1287/mksc.1090.0500}
\showDOI{\tempurl}


\end{thebibliography}
\end{document}